\newcommand{\coh}{\nicefrac{1}{2}}
\newtheorem*{rep@theorem}{\rep@title}
\newcommand{\newreptheorem}[2]{%
\newenvironment{rep#1}[1]{%
 \def\rep@title{#2 \ref{##1}}%
 \begin{rep@theorem}}%
 {\end{rep@theorem}}}
\newtheorem{lm}{Lemma}
\newtheorem{df}{Definition}
\DeclareMathOperator{\Tr}{Tr}
\DeclareMathOperator{\rd}{d}
\DeclareMathOperator{\sign}{sgn}
\DeclareMathOperator{\diver}{div}
\DeclareMathOperator{\Ran}{Ran}
\numberwithin{equation}{section}
\title{Coherent states, $6j$ symbols and properties of the next to leading order asymptotic expansions}
\author[1,2,3]{Wojciech Kami{\'n}ski%
\thanks{wkaminsk@fuw.edu.pl}}
\author[2,3]{Sebastian Steinhaus%
\thanks{steinhaus.sebastian@gmail.com}}
\affil[1]{Wydzia{\l} Fizyki, Uniwersytet Warszawski, Ho{\.z}a 69, 00-681, Warsaw, Poland}
\affil[2]{Perimeter Institute for Theoretical Physics,
31 Caroline Street North,
Waterloo, Ontario,
Canada N2L 2Y5}
\affil[3] {Max Planck Institute for Gravitational Physics,
Am M\"uhlenberg 1, D-14476 Potsdam, Germany
}
\date{\vspace{-5ex}}
\begin{document}

\maketitle

\begin{abstract}
We present the first complete derivation of the well-known asymptotic expansion of the $SU(2)$ $6j$
symbol using a coherent state approach, in particular we succeed in computing the determinant
of the
Hessian matrix. To do so, we smear the coherent states and perform a partial stationary point
analysis with respect to the smearing parameters. This allows us to transform the variables from
group elements to dihedral angles of a tetrahedron resulting in an effective action, which
coincides with the action of first order Regge
calculus associated to a tetrahedron. To perform the remaining stationary point analysis, we compute
its Hessian matrix and obtain the correct measure factor. Furthermore, we expand the
discussion of the asymptotic formula to next to leading order terms, prove some of their properties
and derive a recursion relation for the full $6j$ symbol.
\end{abstract}

\section{Introduction}

Spin foam models \cite{spinfoams,spinfoams2,spinfoams3,Baez} are candidate models for quantum gravity invented as a generalization of Feynman diagrams to higher dimensional objects.
Their popularity is rooted in the fact that they were well adapted to descibe $3$D Quantum Gravity
theories such as the Ponzano-Regge \cite{PR,pr-model} or the Turaev-Viro model
\cite{turaev-viro}. To examine whether these models are a quantum theory of $4$D General Relativity,
in
particular whether one obtains Gravity in a semi-classical limit is an active area of topical
research. One of the strongest positive implications comes from the asymptotic analysis of single
simplices in spin foam models: A first attempt to compute the asymptotic expansion of the amplitude associated to a $4$-simplex in the Barrett-Crane model \cite{Barrett-Crane} can be found in \cite{4dWilliams,baez10j}. This was continued for the square of (the Euclidean and Lorentzian) $6j$ and $10j$ symbols in \cite{Freidel}, whereas the most recent asymptotic results for modern spin foam models, i.e. the EPRL-model \cite{EPRL} or the FK-model \cite{FK}, were obtained using a coherent state approach \cite{Conrady-Freidel,Frank,Frank2,Frank3,FrankEPRL,Frank3D}:

The basic amplitudes of the spin foam model are their vertex amplitudes ($SU(2)$ $6j$
symbols in the $3$D Ponzano Regge model). They are defined in a representation theoretic
way and can be constructed from coherent states of the underlying Lie group \cite{Perelmov} as
a multidimensional integral to which the stationary point approximation is applicable
\cite{Hormander}.
This method has proven to be very efficient in determining the dominating phase in the
asymptotic formula as well as the
geometric interpretation of the contributions to the asymptotic expansion in spin foam models
\cite{Conrady-Freidel,Frank,Frank2,Frank3,FrankEPRL,Frank3D}. In $3$D, on the points of stationary phase, $6j$
symbols are geometrically interpreted as
tetrahedra, their dominating phase given by the Regge action \cite{regge,regge2}, a discrete
version of General Relativity on a triangulation. Similar results were proven by this method for the
$4$-simplex \cite{4dWilliams,Conrady-Freidel,Frank,Frank2,Frank3,FrankEPRL} in spin foam models. Until today, this is still one of the
most promising evidences that spin foam models are viable Quantum Gravity theories.

Despite this success, the coherent state approach fails to produce the full amplitude. It has not
yet been possible to compute the so-called measure factor, a proportionality constant
 (depending on the representation labels) in the
asymptotic expansion, which is given by the determinant of the matrix of second
derivatives, i.e. the Hessian matrix, evaluated on the stationary point. This failure even
applies to the simplest spin foam model in $3$D, the Ponzano-Regge model \cite{pr-model}, whose
vertex amplitude is the $SU(2)$ $6j$ symbol. To the authors'
best knowledge the Ponzano and Regge formula \cite{PR} has not yet been obtained this way;
we can only refer to numerical results in
\cite{Frank3D}. This is particularly troubling for the
coherent state approach, since the full asymptotic formula for $SU(2)$ $6j$ symbols  introduced in
\cite{PR} has been proven in many different ways, for example, by geometric quantization
\cite{Roberts}, Bohr-Sommerfeld approach \cite{Littlejohn}, Euler-MacLaurin approximation
\cite{Gurau} or the character integration method \cite{Freidel}.

The source of the problem is the size of the Hessian matrix and the lack of immediate geometric
formulas for its determinant. For the $6j$ symbol,
for example, this matrix is $9$ dimensional and its entries are basis dependent.
 This is a major drawback of the coherent state
approach, in particular, since the full expansion is necessary to discuss and examine the properties
of spin foams models of Quantum Gravity.
To obtain this  measure factor and
compare it to other approaches \cite{Dittrich:2011vz,Hamber:1997ut}, the complete asymptotic
expansion
is indispensable. This is an important open issue for $4$D spin foam models.

Our approach to overcome this problem can be seen (as we will show in Appendix
\ref{rel-islas}) as a combination of the coherent state approach \cite{Conrady-Freidel,Frank,Frank2,Frank3,FrankEPRL,Frank3D} and the propagator
kernel method \cite{islas}. It inherits nice geometric properties from the coherent state analysis
with a similar geometric interpretation of the points of stationary phase. Moreover, the
Hessian matrix is always described in terms of geometrical quantities and, most importantly, its
determinant can be computed for the $6j$ symbol.

In addition to the computation of the asymptotic formula of the $6j$ symbol \cite{PR}, our
approach allows us to propose a new way to compute higher order corrections to the asymptotic
expansion. These corrections have already been discussed in \cite{LD,LD2}: it was conjectured that the asymptotic expansion has an alternating form
\begin{equation} \label{eq:conjecture_intro}
 \{6j\}=A_0\cos\left(\sum
\left(j_i+\frac{1}{2}\right)\theta_i+\frac{\pi}{4}\right)+
A_1\sin\left(\sum
\left(j_i+\frac{1}{2}\right)\theta_i+\frac{\pi}{4}\right)+\ldots \quad ,
\end{equation}
where $A_n$ are consecutive higher order corrections and homogeneous functions in
$j+\frac{1}{2}$. Our method allows us to prove this conjecture to any order in the asymptotic
expansion.

\subsection{Coherent states and integration kernels}

The coherent state approach is based on {the following} principle: Invariants (under the action of the group) can be constructed by integration of a tensor product of vectors (living in the tensor product of
vector spaces of irreducible representations) over the group, i.e. group averaging. Since the  invariant subspace of the tensor product of three representations of $SU(2)$ is
one-dimensional,  the invariant is uniquely defined up to normalization. However, in order to
apply the stationary point analysis the vectors in the construction above cannot be chosen
arbitrarily. The choice, from which the method takes its name, is the coherent states class, which
consists of eigenvectors of the generators of
rotations with highest eigenvalues \cite{Perelmov}. Although these states are very effective in
obtaining the dominating phase of the amplitude, the associated Hessian matrix turns out to be very
complicated. This problem occurs since the action is not purely imaginary, which is also related to
the problem of choice of phase for the coherent states which has not yet been fully
understood.

Both latter problems disappear if, instead of eigenstates with maximal eigenvalues, we take
null
eigenvectors for a generator of rotations $L$. Since this vector is
trivially invariant with respect to rotations generated by $L$, the phase problem
disappears. Similarly the contraction of invariants can be expanded in terms of an action
that actually is purely imaginary. There is a trade-off, though: The quantity of stationary
points increases and their geometric interpretation becomes more complicated. Moreover,
frequently there exist no such eigenvectors for certain representations, (half-integer spins for
$SU(2)$) and their tensor
product gives thus vanishing invariants.

The solution to these issues comes from the simple observation that null eigenvectors
can be obtained by the integration of a coherent state, pointing in direction perpendicular to
the axis of $L$, over the rotations generated by $L$. Like that the geometric
interpretation usually obtained when using coherent states is restored. Furthermore, if we first
perform the partial stationary phase approximation with respect to
the additional circle variables, we obtain a purely imaginary action. In the special case of
the $6j$ symbol, our construction allows us to write the invariant purely in terms of edge lengths
and dihedral angles of a tetrahedron, in particular we perform a variable transformation from group
elements to dihedral angles of the tetrahedron. The resulting phase of the integral is given be the
first order Regge action \cite{Barrett:1994nn}.

\subsection{Relation to discrete Gravity}
\label{sec:rel-discrete}

Regge calculus \cite{regge,regge2} is a discrete version of General Relativity defined upon a
triangulation of the manifold. Influenced by Palatini's formulation, a first order
Regge calculus was derived in \cite{Barrett:1994nn}, in which both edge lengths and dihedral angles
are considered as independent variables and their respective equations of motion are first order
differential equations. Additional constraints on the angles have to be imposed in order to
reobtain their geometric interpretation\footnote{The vanishing of the
angle Gram matrix on flat spacetime implies the existence  of the flat $n$-simplex with the given
angles.} once the equations of motion for the angles have been
solved. Our derivation of the Ponzano-Regge formula shows astonishing similarity to this procedure.
Moreover, from our calculation one can deduce a suitable measure for first order (linearized) Quantum Regge
calculus, such that the expected Ponzano-Regge factor $\frac{1}{\sqrt{V}}$ appears, which naturally leads to a triangulation invariant measure \cite{Dittrich:2011vz}.

Another version of $4$D Regge calculus was explored in \cite{Dittrich:2008va} with
areas of triangles and (a class of) dihedral angles as fundamental and independent variables.
Several local constraints guarantee that the geometry of a $4$-simplex is uniquely determined.
These variables were chosen in the pursuit to better understand the relation between discrete
gravity and $4$D spin foam models. The latter are based on a similar paradigm as the Ponzano-Regge or
the Turaev-Viro models \cite{PR,turaev-viro} in $3$D, yet enhanced by the implementation of the
simplicity constraints from the Plebanski formulation of
General Relativity \cite{plebanski}. Area-angle variables as a discretization of Plebanski rather than
Einstein-Hilbert formulation were conjectured to be more suitable to describe the semi-classical limit of
those models.

Although it is known that the asymptotic limit of the
amplitude of a $4$-simplex for $4$D gravity models is proportional to the cosine of the Regge action
\cite{Frank,Frank2,Frank3}, the proportionality factor still
remains unknown. We hope that the method presented in this work can help in filling the gap.

\subsection{Problem of the next to leading order (NLO) and complete asymptotic expansion}

The asymptotic expansion for the $SU(2)$ $6j$ symbol, in particular for the next to leading
order (NLO), is still a scarcely examined issue, since it is very non-trivial to write the (NLO)
contributions in a compact form. Steps forward in this direction can be found in \cite{LD,LD2,Smerlak},
where the latter gives the complete expansion in the isosceles case of the $6j$ symbol.

The stationary point analysis applied in this work allows for a natural extension in a Feynman
diagrammatic approach. From this
approach the full expansion can be computed in principle, however in a very lengthy way. We derive a recursion relations of the Ward-Takesaki type, which is surprisingly similar to the one
invented in \cite{Bonzom,Bonzom2} however in very different context, that, basically can be used in the asymptotic expansion to derive the NLO in a more concise way. Moreover, we can show explicitly that the consecutive terms in the expansion
\eqref{eq:conjecture_intro} are of the conjectured `sin/cos' form.

\subsection{Organization of the paper}
\label{sec:organize}

This paper is organized as follows: In section \ref{sec:construction_invariants} we will
present our modified coherent states, how to use them to construct invariants and how to contract
these invariants to compute spin network amplitudes. The contracted invariants will be used to
define an action for the stationary point analysis, which will be examined whether it allows for the
same geometric interpretation on
its stationary points as other coherent state approaches \cite{Conrady-Freidel,Frank,Frank2,Frank3,FrankEPRL,Frank3D}. Its symmetries
as well as the group generated by the symmetry transformations will be
discussed. Section \ref{sec:evaluation-final} deals with the partial stationary point
analysis with respect to the introduced circle variables. This will allow us to write the amplitude,
after a variable transformation, purely in terms of angle variables, which will be identified as
exterior dihedral angles of a polyhedron. In section \ref{sec:first-order} we focus on the example
of the $6j$ symbol. After another variable
transformation, we obtain the action of first order Regge calculus and perform the remaining
stationary point analysis. Eventually we obtain the asymptotic formula from \cite{PR}. In
section
\ref{sec:NLO-DL} we prove the conjecture from \cite{LD,LD2} that the full asymptotic expansion is
of alternating form \eqref{eq:conjecture_intro} and derive the recursion relations for the full $6j$
symbol. We conclude with a discussion of the results and an outlook in section \ref{sec:discussion}.

We would like to point out that several results of this paper have been obtained by tedious
calculations which we did not include in its main part to improve readability. Interested readers
are welcome to look them up in the appendices.

\section{Modified coherent states, spin-network evaluations and symmetries}
\label{sec:construction_invariants}

In this section we are going to present the modified coherent states, how to construct the
spin-network evaluation
from them and that they allow for the same geometric interpretation in the stationary point analysis
as similar coherent state approaches. Furthermore the symmetries of the action will be investigated.

Consider a three-valent spin network, i.e. a graph with three-valent nodes carrying $SU(2)$
intertwiners and edges carrying irreducible representations of $SU(2)$. For each edge of the spin
network we introduce a (fiducial) orientation such that each node of the network can be denoted as
the `source' $s(e)$ or the `target' $t(e)$ of the edge $e$. Later in this work we intend to give a
geometrical meaning to the spin network, in terms of polyhedra, triangles, etc. so we denote the
set of nodes by $F$ and the set of edges by $E$, which will become the set of triangles / faces and set of edges of the
triangulation respectively. This dual identification is not always possible but we restrict our
attention to the case of planar (spherical) graphs, where such notions are natural.

\subsection{Intertwiners from modified coherent states}\label{sec:inv-def}

Intertwiners are invariant vectors (with respect to the action
of the group) in the tensor product of vector spaces
associated to irreducible representations of that group. In the case of $3$ irreducible
representations of $SU(2)$
the space of invariants is one dimensional  and, moreover, there is a unique choice for the invariant for a
given cyclic order of representations \cite{Penrose,Penrose2}.

Suppose $\xi\in V_{j_1}\otimes \cdots \otimes V_{j_n}$ is a vector in the tensor product of vector
spaces of
representations, then
\begin{equation}
 \int_{SU(2)}dU\ U\xi
\end{equation}
is  invariant under the action of $SU(2)$. If $\xi$  is chosen in a clever way, such
an invariant is  non-trivial. In the case of
three representations it must be proportional to the unique invariant defined in \cite{Penrose,Penrose2}.
In the following we present a choice which has the advantage that the method of stationary
phase can be directly applied.

For every face $f$, which is bounded by three edges, we choose a cyclic order of these edges
$(j_{fe_1},j_{fe_2},j_{fe_3})$, labelled by the carried representations.  These
choices influence the orientation of the spin network \cite{Penrose,Penrose2} and are used to define and
determine the sign of its amplitude, see also appendix \ref{sec:sign}. We introduce the
following
intertwiners for every face $f$:
\begin{equation} \label{eq:def_int}
 C_f=\int_{SU(2)}dU_f\ U_f\int\prod_{j} \frac{d\phi_{ji}}{2\pi}
f_f\left(\{\phi_{fe}\}_{e\in F}\right) \prod_{e\in F}
\left(O_{\phi_{fe}}|\coh\rangle\right)^{2j_e} \quad ,
\end{equation}
where $f_f$ is a function of the three angles $\phi_{fe}$, $e\subset f$, $|\coh\rangle$ is the basic state of the fundamental representation and $O_\phi$ is a rotation
matrix on $\mathbb{R}^2$:
\begin{equation}
 O_\phi=\begin{pmatrix}
         \cos\phi &\sin\phi\\-\sin\phi &\cos\phi
        \end{pmatrix} \quad .
\end{equation}
As mentioned above, \eqref{eq:def_int} is invariant under the action of $SU(2)$.

Before moving on, we would like to outline the key differences between the approach described above
and the usual coherent state approach \cite{Frank,Frank2,Frank3,Frank3D}.
\begin{itemize}
\item Coherent states of $SU(2)$ are labelled by vectors in $\mathbb{R}^3$. On
the stationary point with satisfied reality conditions, one obtains the geometric interpretation
that for every face these three vectors form the edge vectors of a triangle. Later on
we will prove
the same geometric interpretation for the invariant $C_f$.
\item Furthermore we smear the coherent state by a rotation, which is the key ingredient of our
approach. In addition to the stationary point analysis with respect to the $\{U_f\}$, we will also
perform a stationary point analysis for the smearing angles $\{\phi_{fe}\}$. Clearly, this will
result in more stationary points contributing to the final amplitude. To suppress their
contributions, we introduce
modifiers $f_f$ which will be described in the next section.
\end{itemize}

\subsubsection{Prescription of the modifiers}\label{modifier-sec}

In order to make \eqref{eq:def_int} complete, we have to describe the function $f_f$.

For every face $f$ we choose three vectors $v_e$ ($e\subset f$) on $\mathbb{R}^2$ with
norms $j_{e}$ such that $\sum_{e\subset f} v_e=0$, i.e. they form a triangle with
edge lengths $j_e$. The vectors are ordered anti-clockwise, their choice is unique up to
Euclidean transformations, i.e rotations and translations.

Let us denote the edges (in cyclic order) by $1,2,3$. The angles (counted
clockwise) between the vectors
$v_k$ and $v_j$ are denoted by $2(\psi_{kj}-\pi)$, where  $2(\psi_{kj} - \pi)$ is the $SO(3)$
angle taking values in $(0,\pi)$ for $(k,j)\in\{(2,1),(3,2),(1,3)\}$.
\begin{figure}[ht]
\begin{center}
\includegraphics[scale=0.5]{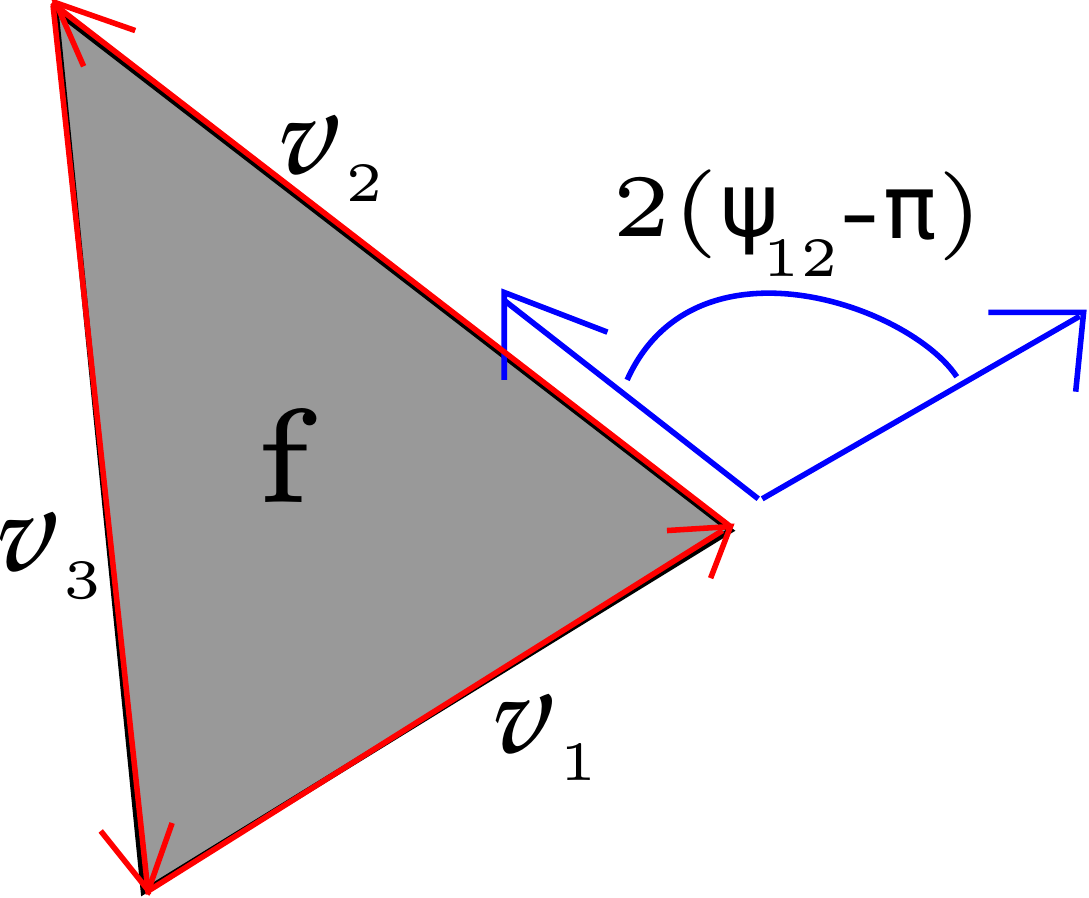}%
\caption{The choice of vectors $v_i$.}
\label{pic-inv}
\end{center}
\end{figure}
Due to the ordering,  the  $SU(2)$ angles $\psi_{21}$,
$\psi_{32}$ and $\psi_{13}$ are positive and smaller than $2\pi$, in fact, one can also
check that
$\psi_{kj}$ is  in $(\pi,2\pi)$. This
choice contributes an overall sign to the invariant,
 to be more precise, there are two different choices of cyclic order giving two invariants that may differ by a sign factor.  This will be discussed in more detail in appendix
\ref{sec:sign_intertwiner}. In particular  we compare them to the intertwiner introduced in
\cite{Penrose,Penrose2}. The angles $\psi_{kj}$ satisfy the relation
\begin{equation}
 \psi_{21}+\psi_{32}+\psi_{13}=4\pi\ .
\end{equation}
We introduce a function $f(x\ \text{mod}\ 2\pi,y\ \text{mod}\ 2\pi)$ such that
\begin{itemize}
 \item it is equal to $1$ in the neighbourhood of $x=\psi_{21}$, $y=\psi_{32}$,
\item it is equal to zero in the neighbourhood of points
\begin{equation}
 (x,y)=\pm(\psi_{21}+\pi,\psi_{32}),
\pm(\psi_{21},\psi_{32}+\pi),\pm(\psi_{21}+\pi,\psi_{32}+\pi),
(-\psi_{21},-\psi_{32}) \quad .
\end{equation}
\end{itemize}
Hence, we define
\begin{equation}
 f_f(\phi_{f1},\phi_{f2},\phi_{f3})=f(\phi_{f2}-\phi_{f1},\phi_{f3}-\phi_{f2}) \quad .
\end{equation}

\subsubsection{The spin network evaluation}
\label{sec:integral}

Given the definition of invariants in \eqref{eq:def_int} it is straightforward to define
the evaluation of a given spin network: The intertwiners are contracted with each other according to
the combinatorics of the network. The resulting amplitude has to be normalized, i.e. divided
by the product of norms of our intertwiners, see section \ref{sec:Theta}. It is, however, not
sufficient in
order to agree with the canonical definition \cite{Penrose,Penrose2}. The remaining sign
ambiguity will be resolved in Appendix \ref{sec:sign}.

As in the standard coherent state approach the amplitude (contraction of intertwiners) then
reads:
\begin{equation} \label{eq:full_amp}
 \begin{split}
 (-1)^s \int\prod_{f\in F}\rd U_f\prod_{e\subset f}\frac{\rd\phi_{fe}}{2\pi}&\prod_f
f_f\left(\{\phi_{fe}\}_{e\subset f}\right)\\
&\underbrace{\prod_{e\in E}
\epsilon\Big(
U_{s(e)}O_{\phi_{s(e)e}}|\coh\rangle\ ,\
U_{t(e)}O_{\phi_{t(e)e}}|\coh\rangle
\Big)^{2j_e}}_{e^S}
 \end{split} \quad ,
\end{equation}
where $s$ is the sign factor as prescribed in \cite{Penrose,Penrose2} (see Appendix
\ref{sec:sign}), and
$\epsilon(\cdot,\cdot)$ is an invariant bilinear form defined by
\begin{equation}
 \epsilon(|\coh\rangle,|\coh \rangle)=\epsilon(|-\coh\rangle,|-\coh \rangle)=0,\quad
\epsilon(|\coh\rangle,|-\coh \rangle)=-\epsilon(|-\coh\rangle,|\coh \rangle)=1\ .
\end{equation}
The choice of the orientation of edges, faces and the sign factor prescription will be described
in appendix
\ref{sec:prescription}.
To perform the stationary point analysis we rewrite (part of) the integral kernel as an
exponential function
and define the `action' $S$. From \eqref{eq:full_amp} one can deduce that
\begin{equation}
 S=\sum_{e}S_e \quad ,
\end{equation}
where the action $S_e$ (labelled by the edge $e$) is given by:
\begin{equation}
 S_e=2j_e\ln
\epsilon\left(U_{s(e)}O_{\phi_{s(e)e}}|\coh\rangle\ ,\
U_{t(e)}O_{\phi_{t(e)e}}|\coh\rangle\right) \quad .
\end{equation}

\subsection{The action}
\label{sec:action}

In order to examine the geometric meaning of the action on its  points of stationary phase, let
us
introduce the following geometric quantities. For each face $f \in F$ we introduce vectors $n_f$ (as
traceless Hermitian matrices, which can be naturally identified with vectors in
$\mathbb{R}^3$) defined by:
\begin{equation}
 n_f=U_fHU_f^{-1} \quad ,
\end{equation}
where
\begin{equation}
 H=\begin{pmatrix}
    0 & i\\ -i & 0
   \end{pmatrix} \quad .
\end{equation}
For each pair $\{f,e\}$ with $e\subset f$, we define vectors $B_{fe}$ (also as traceless
matrices):
\begin{equation}
\begin{split}
B_{fe}&=j_{e}(2U_{f}O_{\phi_{fe}}|\coh\rangle\langle\coh|O_{\phi_{fe
}}^{-1}U_{f}^{-1}-{\mathbb I})\\
&=U_{f}O_{\phi_{fe}}\left[j_e\begin{pmatrix}
                              1& 0\\ 0 & -1
                             \end{pmatrix}
\right]O_{\phi_{fe}}^{-1}U_{f}^{-1} \quad .
\end{split}
\end{equation}
Note that the length of $B_{fe}$ is
equal to $j_{e}$.

We can already deduce that $\Re S\leq 0$. The stationary point analysis contains the
conditions $\partial S=0$
and $\Re S=0$. These are as follows
\begin{itemize}
 \item The reality condition is satisfied if and only if
\begin{equation}
 U_{s(e)}O_{\phi_{s(e)e}}|\coh\rangle\perp
U_{t(e)}O_{\phi_{t(e)e}}|\coh\rangle \quad ,
\end{equation}
where $\perp$ means perpendicular in the $SU(2)$ invariant scalar product. This is equivalent to $B_{s(e)e}=-B_{t(e)e}$.
\item Using both the reality condition and the definition of $B_{fe}$  we obtain from the variation
of $S$ with respect to $U_f$:
\begin{equation}
 X\frac{\partial S_{e}}{\partial U_f}=\left\{\begin{array}{ll}
                                       \Tr X B_{fe} & e\subset f\\ 0 &
					e\not\subset f
                                      \end{array}\right. \quad ,
\end{equation}
where $X$ is a generator of the Lie algebra. Hence the action is stationary with respect to
$U_f$ if:
\begin{equation}
 \sum_{e\subset f} B_{fe}=0 \quad .
\end{equation}
\item Similarly we obtain for the variation of $S$ with respect to $\phi_{fe}$ (again using the
reality
condition):
\begin{equation}
 \frac{\partial S_{e}}{\partial \phi_{fe'}}=\left\{\begin{array}{ll}
                                       \Tr n_f B_{fe} & e=e'\subset f\\ 0 &
                                       \text{otherwise}
                                      \end{array}\right. \quad .
\end{equation}

So the condition from variation with respect to $\phi$ is
\begin{equation}
 \forall_{e\subset f}\ n_f\perp B_{fe} \quad .
\end{equation}
\end{itemize}
\begin{figure}[ht]
\begin{center}
\includegraphics[scale=0.5]{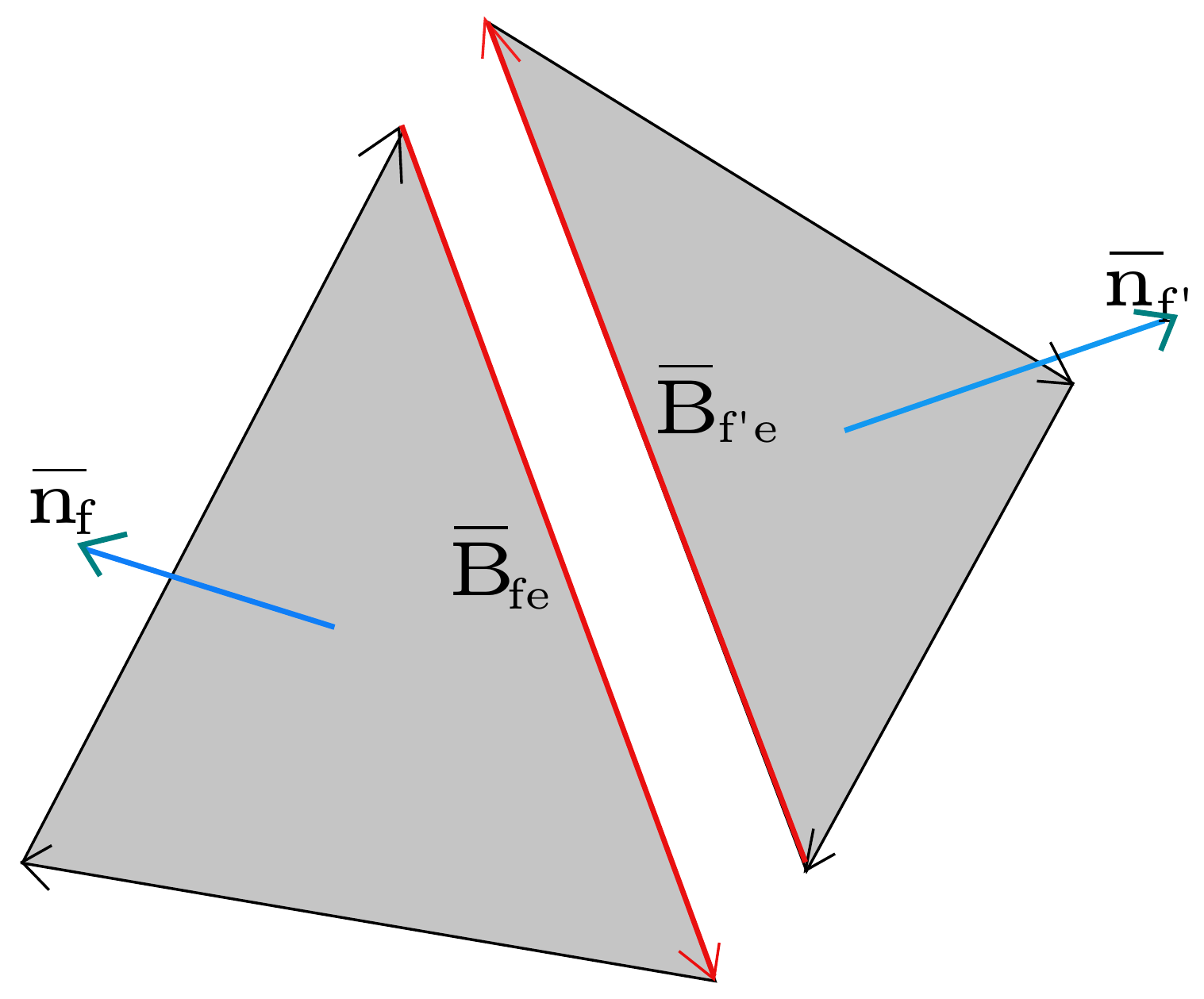}%
\caption{Stationary point condition.}
\label{pic-stationary1}
\end{center}
\end{figure}
Before we discuss the geometric meaning of the just derived conditions, we first have to
examine the
symmetries of the action to determine the amount of stationary points and their relations.

\subsection{Symmetry transformations of the action}
\label{sec:transformations}

There exist several variable transformations that only change $e^{S}$ by a sign
such that a stationary point is transformed into another stationary point.
Some of the transformations below are continuous so  the stationary points form submanifolds of
orbits under the action of  these symmetries.
We will explain the geometric interpretation of these orbits  in section \ref{sec:geom}, and show
that  these orbits are isolated for many spin networks, e.g. the $6j$ symbol.

The above mentioned transformations are as follows:
\begin{itemize}
 \item $u$-symmetry: \newline
 $\forall u\in SU(2)$,
\begin{equation}
 \forall_{f\in F},\ U_f\rightarrow uU_f
\end{equation}
applied to all $U_f$ simultaneously preserves $e^S$. This is the only symmetry which has to be
applied to all group elements simultaneously showing that one of the $SU(2)$
integrations in \eqref{eq:full_amp} is redundant (gauge).
\item $o_f$-rotation: \newline
For a chosen face $f$ and $\phi$,
\begin{equation}
 U_f\rightarrow U_fO_{\phi},\quad \forall_{e\subset f} \phi_{fe}\rightarrow
\phi_{fe}-\phi
\end{equation}
preserves $e^S$, in fact, each $e^{S_e}$ is preserved.
\item $-u_f$-symmetry: \newline
For any chosen face $f\in F$,
\begin{equation}
 U_f\rightarrow (-1)U_f
\end{equation}
preserves $e^S$ because for every face $\sum_{e\subset f} j_{e}$ is an integer.
\item $r_f$-reversal transformation: \newline
For any chosen face $f$,
\begin{equation}\label{D19}
 U_f\rightarrow U_f\underbrace{\begin{pmatrix}
                    i &0\\ 0& -i
                   \end{pmatrix}}_{D},\quad \forall_{e\subset f}\
\phi_{fe}\rightarrow -\phi_{fe}\ .
\end{equation}
Because
\begin{equation}
 D^{-1}O_{\phi}D=O_{-\phi},\quad D|\coh\rangle=i|\coh\rangle\ ,
\end{equation}
$e^S$ is multiplied by
\begin{equation}
i^{2\sum_{e\subset f}j_e}=(-1)^{\sum_{e\subset f}j_e} \quad .
\end{equation}
Let us notice that
$2j_{e}\in{\mathbb Z}$ and $\sum_{e\subset
f}j_e$ is an integer,
\item $-o_{fe}$ transformation: \newline
For any chosen pair $e\subset f$
\begin{equation}
 \phi_{fe}\rightarrow \phi_{fe}+\pi
\end{equation}
This multiplies the integrated term by $(-1)^{2j_e}$.
\end{itemize}

Note that the transformations $o_f$, $-u_f$ $-o_{fe}$, $r_f$ are restricted to variables
associated to one face. They transform the functions $f_f$ as follows:
\begin{itemize}
 \item $o_f$ shifts all angles $\phi_{f e} $ on $f$ by an angle $\phi$:
\begin{equation}
f_f'(\{\phi_{fe}\})=f_f(\{\phi_{fe}+\phi\})=f_f(\{\phi_{fe}\}) \quad ,
\end{equation}
since $f_f$ only depends on differences of angles.
\item $-o_{fe}$
\begin{equation}
 f_f'(\{\phi_{fe'}\})=f_f(\{\phi_{fe'}+\delta_{ee'}\pi\}) \quad .
\end{equation}
\item $r_f$
\begin{equation}
 f_f'(\{\phi_{fe}\})=f_f(\{-\phi_{fe}\}) \quad .
\end{equation}
\end{itemize}
To sum up, the functions $f_f$ are preserved by  $u$-, $-u_f$- and $o_f$-transformations, since the
first two do not affect the angles $\phi$ and the last one translates all angles by a constant.

In addition to that, let us also define an additional transformation $c$,
which we call parity transformation:
\begin{equation}
 \forall f: \quad U_f\rightarrow U_f \begin{pmatrix}
                     0 & 1\\ -1 &0
                    \end{pmatrix} \quad .
\end{equation}
It transforms the integral into its complex conjugate due to the fact that
\begin{equation}
 \bar{U}=\begin{pmatrix}
                     0 & 1\\ -1 &0
                    \end{pmatrix}^{-1}U\begin{pmatrix}
                     0 & 1\\ -1 &0
                    \end{pmatrix} \quad ,
\end{equation}
and the $f_f$, the matrix $O_\phi$ and the vectors $|\pm\coh\rangle$ are real.

In the next section we will examine which group is generated by the transformations, i.e. the
symmetry group of the action.

\subsection{Groups generated by symmetry transformations}

The transformations described in \ref{sec:transformations} generate a group $\tilde{G}$ with the following relations:
\begin{equation}
\begin{split}
&u_{(-1)}=\prod_f (-u_f),\\
&\forall_f,\quad r_f^2=(-u_f),\quad (-u_f)^2=1,\quad o_f(2\pi)=1,\\
&\forall_{e\subset f},\quad (-o_{fe})^2=1,\\
&\forall_f,\quad o_f(\pi)\prod_{e\subset f} \ (-o_{fe})=1\ .
\end{split}
\end{equation}
and all its elements commute besides $u$ (that form $SU(2)$) and
\begin{equation}
 \forall f,\ r_fo_f(\alpha)r_f^{-1}=o_f(-\alpha)\ .
\end{equation}
The group generated by all transformations except $u$ is denoted by $G$.

In $\tilde{G}$ (resp. $G$ ), there is a normal subgroup generated by the transformations $u$, $o_f$,
$-u_f$ (resp. $o_f$, $-u_f$), which preserves the modifiers $f_f$. We denote these subgroups
by $\tilde{H}$
(and $H$ respectively); their quotient groups are given by
\begin{equation}
 K=\tilde{G}/\tilde{H}=G/H \quad.
\end{equation}
This is an Abelian group generated by
\begin{equation}
 \forall_{e\subset f}\quad [r_f],\ [-o_{fe}]
\end{equation}
with relations
\begin{equation}
 \forall_f\ \prod_{e\subset f} [-o_{fe}]=1,\ [r_f]^2=[-o_{fe}]^2=1 \quad ,
\end{equation}
which show that $K$ is isomorphic to $\mathbb{Z}_2^{3|F|}$.

In the next two sections, we will discuss the geometric interpretation of the points of
stationary phase.

\subsection{Geometric lemma}
\label{sec:geom}

Our goal in this section is to describe the geometric interpretation of the stationary point
orbits introduced
in  section \ref{sec:action}. In particular, we will show how these points are related to the
standard
stationary point interpretation in the coherent state method.

\begin{lm}
For every set of vectors $B_{fe}$ of length $j_{e}$ satisfying
\begin{equation}
\label{Bcond}
\begin{split}
 &\forall_{e} B_{s(e)e}=-B_{t(e)e} \quad ,\\
&\forall_f\sum_{e\subset f} B_{fe}=0 \quad ,
\end{split}
\end{equation}
there exist $\phi_{fe}$ and $U_f$ being a point of stationary phase with vectors
$B_{fe}$. Moreover,
all these points are related via $G$
transformations.
\end{lm}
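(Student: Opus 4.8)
The plan is to prove this lemma by turning it into an explicit geometric reconstruction of the variables $(U_f,\phi_{fe})$ from the vectors $\{B_{fe}\}$. Write $\mathrm{Ad}_U(X)=UXU^{-1}$ for the adjoint action of $SU(2)$ on traceless Hermitian matrices, identified as usual with $\mathbb{R}^3$; in this representation $O_\phi$ is the rotation by $2\phi$ about the axis of $H$, and it commutes with $H$. From the defining formulas,
\begin{equation}
 n_f=\mathrm{Ad}_{U_f}(H),\qquad B_{fe}=\mathrm{Ad}_{U_f}\big(w_e(\phi_{fe})\big) \quad ,
\end{equation}
where $w_e(\phi)$ is the image, under the rotation by $2\phi$ about the axis of $H$, of the fixed matrix $j_e\,\mathrm{diag}(1,-1)$ entering the definition of $B_{fe}$ (it has length $j_e$ and lies in the plane $H^{\perp}$); as $\phi$ runs over $[0,2\pi)$ this $w_e(\phi)$ sweeps, with period $\pi$, the whole circle of vectors of length $j_e$ in $H^{\perp}$. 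Two remarks follow at once. First, for every $f$ the three vectors $B_{fe}$, $e\subset f$, lie in the plane $\mathrm{Ad}_{U_f}(H^{\perp})$, whose normal is $\pm n_f$; hence $\Tr(n_f B_{fe})=0$ identically and the $\phi$-stationarity of Section \ref{sec:action} holds automatically. Second, on the reality locus $B_{s(e)e}=-B_{t(e)e}$ --- which, as recalled there, is exactly the condition $\Re S=0$, i.e. $U_{s(e)}O_{\phi_{s(e)e}}|\coh\rangle\perp U_{t(e)}O_{\phi_{t(e)e}}|\coh\rangle$ --- being a point of stationary phase is equivalent to the single remaining condition $\sum_{e\subset f}B_{fe}=0$. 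Thus the two hypotheses of the lemma are precisely the non-trivial stationary-point conditions, and what is left is to realize an arbitrary such closed configuration $\{B_{fe}\}$ and to control the ambiguity.

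For existence I would work one face at a time. By the closure relation the vectors $B_{fe}$, $e\subset f$, sum to zero; assuming they are not collinear --- which holds as soon as the triangle inequalities among $(j_{fe})_{e\subset f}$ are strict, in particular for a non-degenerate tetrahedron in the $6j$ case --- they span a plane with a unit normal $N_f$, defined up to sign. Since $SU(2)\to SO(3)$ is surjective, choose $U_f$ with $\mathrm{Ad}_{U_f}(H)\parallel N_f$. Then $\mathrm{Ad}_{U_f^{-1}}(B_{fe})$ has length $j_e$ and lies in $H^{\perp}$, so it equals $w_e(\phi_{fe})$ for some $\phi_{fe}$, unique modulo $\pi$. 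By the identity above this pair reproduces the prescribed $B_{fe}$; the $U$- and $\phi$-stationarity hold by the two remarks, and the reality/phase condition holds by hypothesis. Hence $(U_f,\phi_{fe})_{f,e}$ is a point of stationary phase with vectors $B_{fe}$.

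For uniqueness up to $G$, take two points of stationary phase $(U_f,\phi_{fe})$ and $(U_f',\phi_{fe}')$ with the same $\{B_{fe}\}$. For each $f$, both $n_f$ and $n_f'$ are normal to $\mathrm{span}\{B_{fe}:e\subset f\}$, so $n_f'=\pm n_f$; equivalently $\mathrm{Ad}_{U_f^{-1}U_f'}$ preserves $H^{\perp}$ and either fixes or reverses $H$. If it fixes $H$, then $U_f^{-1}U_f'=O_\psi$ for some $\psi$ (the centralizer of $H$ in $SU(2)$, which already contains $-\mathbb{I}$), so $U_f'$ is obtained from $U_f$ by an $o_f$-rotation; if it reverses $H$, then $U_f^{-1}U_f'=O_\psi D$ for some $\psi$, with $D$ as in \eqref{D19} (its adjoint reverses $H$ and preserves $H^{\perp}$), so $U_f'$ is obtained by an $o_f$-rotation followed by an $r_f$-reversal. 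Once $U_f'$ has been matched, the identity $B_{fe}=\mathrm{Ad}_{U_f'}\big(w_e(\phi_{fe}')\big)$ fixes each $\phi_{fe}'$ only modulo $\pi$, and the residual shifts by $\pi$ are precisely the $-o_{fe}$-transformations. Since all these transformations act face by face and commute across faces, their composition is an element of $G$ carrying $(U_f,\phi_{fe})$ to $(U_f',\phi_{fe}')$, which would complete the argument.

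I expect the only genuinely delicate point --- the main obstacle --- to be the non-degeneracy assumption used to define $N_f$: when some triangle $(j_{fe})_{e\subset f}$ degenerates (one spin equal to the sum of the other two) the $B_{fe}$ become collinear, $N_f$ is no longer unique, the rotations fixing $\{B_{fe}:e\subset f\}$ form a circle which is not contained in $G$, and such boundary configurations must be excluded or handled separately (they do not occur for the non-degenerate $6j$ symbol). Everything else is careful bookkeeping of the three discrete $\mathbb{Z}_2$ redundancies --- the sign of $n_f$, the shift $\phi_{fe}\mapsto\phi_{fe}+\pi$, and the flip $U_f\mapsto-U_f$ --- and of checking that they correspond exactly to the generators $r_f$, $-o_{fe}$, $-u_f$ of $G$ listed in Section \ref{sec:transformations}.
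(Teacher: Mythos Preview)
Your proof is correct and follows essentially the same route as the paper's: choose a normal $n_f$ to the plane spanned by the $B_{fe}$, lift to $U_f$ via the adjoint action on $H$, and then read off $\phi_{fe}$ from the remaining in-plane rotation, tracking that the ambiguities are exactly the $D_\infty$ stabilizer together with the $\pi$-shifts, i.e. the generators $o_f$, $r_f$, $-u_f$, $-o_{fe}$ of $G$. The one addition in your write-up is the explicit non-degeneracy caveat (collinear $B_{fe}$ when a triangle inequality saturates), which the paper's proof leaves implicit; this is a fair point but not a different method.
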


\begin{proof}
For every $f$ we can choose the unit vector $n_f$ perpendicular to all $B_{fe}$
(for all $e\subset f$). Such a normal is only determined up to a sign.
Let us choose $U_f$ such that
\begin{equation}
 n_f=U_fHU_f^{-1} \quad .
\end{equation}
Such a choice always exists, but it is not unique. $U_f$ is only determined up
to the transformation
\begin{equation}
U_f\rightarrow U_fDO_\phi
\end{equation}
since $D$, defined in \eqref{D19}, stabilizes $H$ up to a sign:
\begin{equation}
 D H D^{-1} = -H \quad .
\end{equation}
This is called the $D_\infty$ group.

The vectors $U_f^{-1}B_{fe}U_f$ are orthogonal to $H$. The
operators $U_f^{-1}B_{fe}U_f$ are thus real and we can choose their
 eigenvectors with positive eigenvalues as
\begin{equation}
 \binom{\cos\phi_{fe}}{\sin\phi_{fe}} \quad .
\end{equation}
Hence $\phi_{fe}$ is fixed (up to $\pi$).

It is straightforward to check that this construction gives a stationary point, in fact, each
stationary point with vectors $B_{fe}$ must be constructed in this way.
The ambiguities in the choices above are all related by $o_f$-, $-u_f$-, $-o_f$- and
$r_f$-transformations, i.e. $G$-transformations.
\end{proof}

For every face $f$ on the stationary point $\sum_{e\subset f} B_{fe}=0$ and
given the definition of $f_f$ in section \ref{modifier-sec}, there is a unique choice (up to
$o_f$ transformations) of the stationary point angles $\phi_{fe}$ such that $f_f$ is nonzero.
In the neighbourhood of those stationary point $f_f=1$, whereas around all remaining  ones at
least one of the functions $f_f$ is zero:
\begin{lm}
 For given vectors $B_{fe}$ satisfying \eqref{Bcond}, there exists only one orbit (orbit of
the action of the group $\tilde{H}$)
of stationary points of the action, such that
\begin{equation}
 \prod f_f(\{\phi_{fe}\})\not=0 \quad ,
\end{equation}
and in the neighbourhood of this orbit
\begin{equation}
 \prod f_f(\{\phi_{fe}\})=1 \quad .
\end{equation}
\end{lm}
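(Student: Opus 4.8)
The plan is to reduce the claim to a face-by-face statement by exploiting the group-theoretic structure already established. By the Geometric Lemma of Section~\ref{sec:geom}, all stationary points carrying the prescribed vectors $B_{fe}$ form a single $G$-orbit, and the ambiguities in the construction given there --- the sign of each normal $n_f$ and the fact that each $\phi_{fe}$ is fixed only modulo $\pi$ --- are precisely the $G$-transformations. The normal subgroup $H\trianglelefteq G$ generated by the $o_f$- and $-u_f$-transformations preserves every modifier $f_f$, whereas the quotient $K=G/H\cong\mathbb{Z}_2^{3|F|}$ does not; moreover $K$ factorizes over faces as $K=\prod_f K_f$ with $K_f\cong\mathbb{Z}_2^3$. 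Since the $u$-transformations act trivially on all angles $\phi_{fe}$ and hence on all $f_f$, the groups $\tilde H$ and $H$ are interchangeable for the present purpose, and it suffices to label the $H$-orbits of stationary points inside this $G$-orbit by the elements of $K$ and to show that exactly one of them yields $\prod_{f}f_f\neq 0$. Because $\prod_f f_f$ factorizes over faces and $f_f$ involves only the $\phi_{fe}$ with $e\subset f$, i.e. only the factor $K_f$, the problem splits into $|F|$ identical problems, one per face: show that among the $|K_f|=8$ choices there is exactly one for which $f_f\neq 0$.

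Next I would write the action of $K_f$ explicitly on the only data seen by $f_f$, namely the angle differences $x=\phi_{f2}-\phi_{f1}$ and $y=\phi_{f3}-\phi_{f2}$ on the torus $(\mathbb{R}/2\pi\mathbb{Z})^2$. From the transformation rules of Section~\ref{sec:transformations} one finds that $[-o_{f1}],[-o_{f2}],[-o_{f3}]$ act as the translations $(x,y)\mapsto(x+\pi,y)$, $(x+\pi,y+\pi)$, $(x,y+\pi)$ --- consistent with $\prod_{e\subset f}[-o_{fe}]=1$ --- while $[r_f]$ acts as $(x,y)\mapsto(-x,-y)$. These three commuting involutions generate $K_f\cong\mathbb{Z}_2^3$, and the $K_f$-orbit of the point $(\psi_{21},\psi_{32})$ is exactly the eight-element set $\{\pm(\psi_{21},\psi_{32})+(a\pi,b\pi):a,b\in\{0,1\}\}$, which one checks coincides term by term with $(\psi_{21},\psi_{32})$ together with the seven points listed in Section~\ref{modifier-sec} on which $f$ was prescribed to vanish. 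The action on this orbit is free --- the eight points are pairwise distinct --- precisely because the triangle formed by the $B_{fe}$ is nondegenerate, so that the $SO(3)$ angles $2(\psi_{kj}-\pi)\in(0,\pi)$ avoid $0$ and $\pi$, i.e. $2\psi_{21},2\psi_{32}\not\equiv 0,\pi\pmod{2\pi}$.

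I would then identify the distinguished representative. Conjugating the given $B_{fe}$ into the plane orthogonal to $n_f$ produces $\mathbb{R}^2$-vectors of norms $j_e$ summing to zero, hence a triangle congruent to the reference triangle with edge vectors $v_e$ used to define $f_f$; matching the two --- the rotation being absorbed by the continuous $o_f$-freedom, the orientation and the $\pi$-ambiguities being exactly the $K_f$-freedom --- shows that one of the eight stationary points of the Geometric Lemma has $(x,y)=(\psi_{21},\psi_{32})$, where $f_f=1$ by construction, while the other seven sit at the zeros of $f$. Combining over all faces, the unique $K$-orbit with $\prod_f f_f\neq 0$ is the one whose face-by-face representatives all sit at their $(\psi_{21},\psi_{32})$ configuration. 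In a full neighbourhood of that orbit each $f_f\equiv 1$: the continuous $o_f$-translations move all $\phi_{fe}$ on $f$ by a common angle and therefore leave $(x,y)$ fixed, so the whole neighbourhood of the orbit stays inside the region where $f$ equals $1$; hence $\prod_f f_f\equiv 1$ there. Near any other $K$-orbit some face sits at a zero of $f$, so $\prod_f f_f\equiv 0$ in a neighbourhood, and such stationary points do not contribute.

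The main obstacle is the bookkeeping of the identification step: one must verify carefully that the eight stationary points generated by the two ambiguities of the Geometric Lemma (the sign of $n_f$, and each $\phi_{fe}$ defined only modulo $\pi$) are matched \emph{exactly} --- with no further identifications and none omitted --- with the eight torus points on which $f$ is prescribed, and in particular that the $f_f=1$ representative is the one in which the $B_{fe}$-triangle carries the anti-clockwise orientation of the reference vectors $v_e$ rather than its mirror image. This is a finite but delicate sign-and-orientation check; once the $K_f$-action on $(x,y)$ has been written down as above, everything else is immediate.
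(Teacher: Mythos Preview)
Your proof is correct and follows the same face-by-face strategy the paper has in mind; the paper does not give a formal proof of this lemma but only the one-sentence sketch immediately preceding it, and your argument is a careful unpacking of exactly that sketch. Your explicit identification of the $K_f$-action on the torus coordinates $(x,y)$ and the verification that the resulting eight-point orbit coincides with the eight prescribed points of Section~\ref{modifier-sec} is precisely the bookkeeping the paper leaves implicit.
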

Note that the normals to the faces change sign under $r_f$ transformations:
\begin{equation}
 n_f\rightarrow -n_f \quad .
\end{equation}
Under the $c$-transformation, the $B_{fe}$ are inverted, but the normals to the faces are not
affected, i.e. they behave as pseudovectors.
\begin{equation}
 \begin{split}
  &B_{fe}\rightarrow -B_{fe} \quad , \\
  &n_f\rightarrow n_f \quad .
 \end{split}
\end{equation}

In the next section we will specify the definition of the normals $n_f$.

\subsection{Normal vectors to the faces}
\label{sec:normals}

We will now give a precise geometric definition of $n_f$ (normal to the face). To
simplify
notation we will omit the subscript $\phi$ in $O_{\phi_{fe}}$. Note that
\begin{equation}
 n_f=U_fO_{fe}HO_{fe}^{-1}U_f^{-1}
\end{equation}
for any edge $e\subset f$, since $O_{fe}$ and $H$ commute.

Take two consecutive edges $e_1,e_2\subset f$ and their respective edge vectors $B_{fe_i}$:
\begin{align}
 B_{fe_1}&=U_{f} O_{fe_1} \left[j_{e_1}\begin{pmatrix}
                              1& 0\\ 0 & -1
                             \end{pmatrix}
\right]O_{fe_1}^{-1} U_{f}^{-1} \quad ,\\
B_{fe_2}&=U_{f} O_{fe_1}(O_{fe_1}^{-1}O_{fe_2})\left[j_{e_2}\begin{pmatrix}
                              1& 0\\ 0 & -1
                             \end{pmatrix}
\right](O_{fe_1}^{-1}O_{fe_2})^{-1} O_{fe_1}^{-1} U_{f}^{-1} \quad .
\end{align}
Rotating all three vectors by $U_f O_{fe_1}$ one obtains (the rotated vectors are denoted by
$B_{fe_i}'$, $n_f'$):
\begin{equation}
 n_f'=H,\quad B_{fe_1}'=j_{e_1}\begin{pmatrix}
                              1& 0\\ 0 & -1
                             \end{pmatrix},\quad
B_{fe_2}=O_{fe_1}^{-1}O_{fe_2}\left[j_{e_2}\begin{pmatrix}
                              1& 0\\ 0 & -1
                             \end{pmatrix}
\right](O_{fe_1}^{-1}O_{fe_2})^{-1} \quad .
\end{equation}
For a stationary point with non-vanishing modifier $f_f$, $O_{fe_1}^{-1}O_{fe_2}$ describes the
rotation by
the $SO(3)$ angle $0<2(\psi_{12}-\pi)<\pi$. We thus conclude:
\begin{equation} \label{eq:sgn_n}
 n_f\cdot(B_{fe_1}\times B_{fe_2})=n_f'\cdot(B_{fe_1}'\times B_{fe_2}')>0 \quad ,
\end{equation}
where we regard $n_f$ and $B_{fe_i}$ as vectors using the natural identification of hermitian matrices
with ${\mathbb R}^3$ (tracial scalar product). Condition \eqref{eq:sgn_n} fixes the sign of
$n_f$ and
also completes the geometric interpretation of the points of stationary phase.

\subsection{Interpretation of planar (spherical) spin-networks as polyhedra}
\label{sec:surface}

In the last section we obtained an interpretation of the stationary points in terms of
a set of vectors
$B_{fe}$ satisfying closure conditions for every face $f$
\begin{equation}
 \sum_e B_{fe}=0 \quad .
\end{equation}
However, these conditions do not specify a unique reconstruction of the according surface dual
to the spin network. In fact, already each triangle allows for two different configurations of
$B_{fe}$ vectors. Therefore, we will here describe a method to reconstruct the surface from $B_{fe}$
vectors for the spherical case:

Let us draw the graph on the sphere (on the plane) as described in appendix
\ref{sec:prescription}. From the possible ways of drawing it, which in the case of $2$-edge
irreducible spin networks is in one-to-one correspondence with the orientation of the spin network,
we have to choose
one. In the case of $2$-edge irreducible graphs
the polyhedra obtained from different choices only differ by orientation. In addition to nodes and edges, there
is also a natural notion of two-cells.
The latter are defined as areas bounded by loops of edges. We are mainly interested in the dual
picture that in this case is a triangulation of the sphere. Thus there is a unique identification
 of the vertices in the dual picture. A cyclic ordering of the edges for each $f$ is
inherited from the orientation of the sphere.

In the following, we will construct an immersion (not an embedding) of this triangulation of the
sphere into $\mathbb{R}^3$, such
that every edge $e$ is given by $B_{t(e)e}$ (with the right orientation).

Let us choose one vertex $v_0$. Every other vertex $v'$ can be connected to $v_0$ by a path
\begin{equation}
 v_0,e_0, v_1,e_1,\ldots, v'\ .
\end{equation}
Every edge $e_i$ in the sequence belongs to two faces. Exactly one of these faces is such that
$v_i,v_{i+1}$ are the consecutive vertices w.r.t the cyclic order of the face. We denote this
face by
$f_i$ (see figure \ref{pic-surface-built2}). We introduce the vector
\begin{equation}
 \tilde{v}'=\sum_i B_{f_ie_i} \quad .
\end{equation}
One can prove that this vector does not depend on the chosen path. To see this, let us
consider a basic
move that consists of replacing $v_i,e_i,v_{i+1}$ by $v_i,e,v,e',v_{i+1}$ where all three
vertices belong to the same face $f$. Using the property
\begin{equation}
 \sum_{e\subset f} B_{fe}=0
\end{equation}
and the proper orientation, one can show that the vector $\tilde{v}'$ is invariant with respect to
this move. In fact any two paths
can be transformed into one another by a sequence of these basic moves (or their inverses)
due to the fact that the graph is spherical. A different choice of $v_0$ gives a
translated surface.
\begin{figure}[ht]
\begin{center}
\includegraphics[scale=0.5]{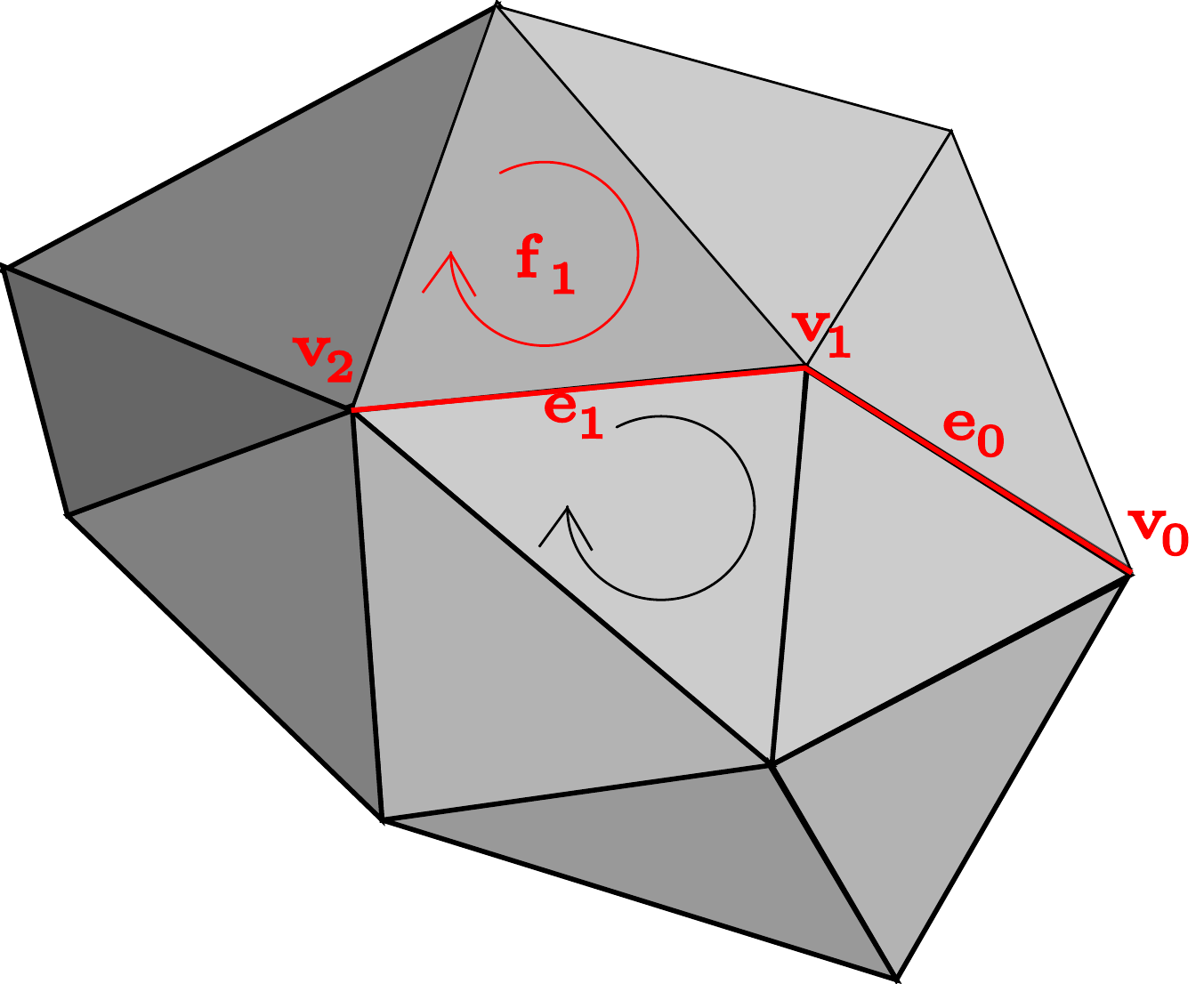}%
\caption{Reconstruction of the surface}
\label{pic-surface-built2}
\end{center}
\end{figure}
It is straightforward to check that
\begin{equation}
 \tilde{v_b}-\tilde{v}_a=B_{fe} \quad ,
\end{equation}
where $v_a$ and $v_b$ are vertices joint by the edge $e$ and $f$ is the face such that
$(v_a,v_b)$ is the pair of
consecutive vertices in the cyclic order of $f$.

Let us notice that from three vectors $B_{fe}$ satisfying the closure condition one can form
a triangle in two ways (see figure \ref{pic-clebsch-orient}), but only that one
depicted on the left appears in the
reconstruction discussed here.
\begin{figure}[ht]
\begin{center}
\includegraphics[scale=0.5]{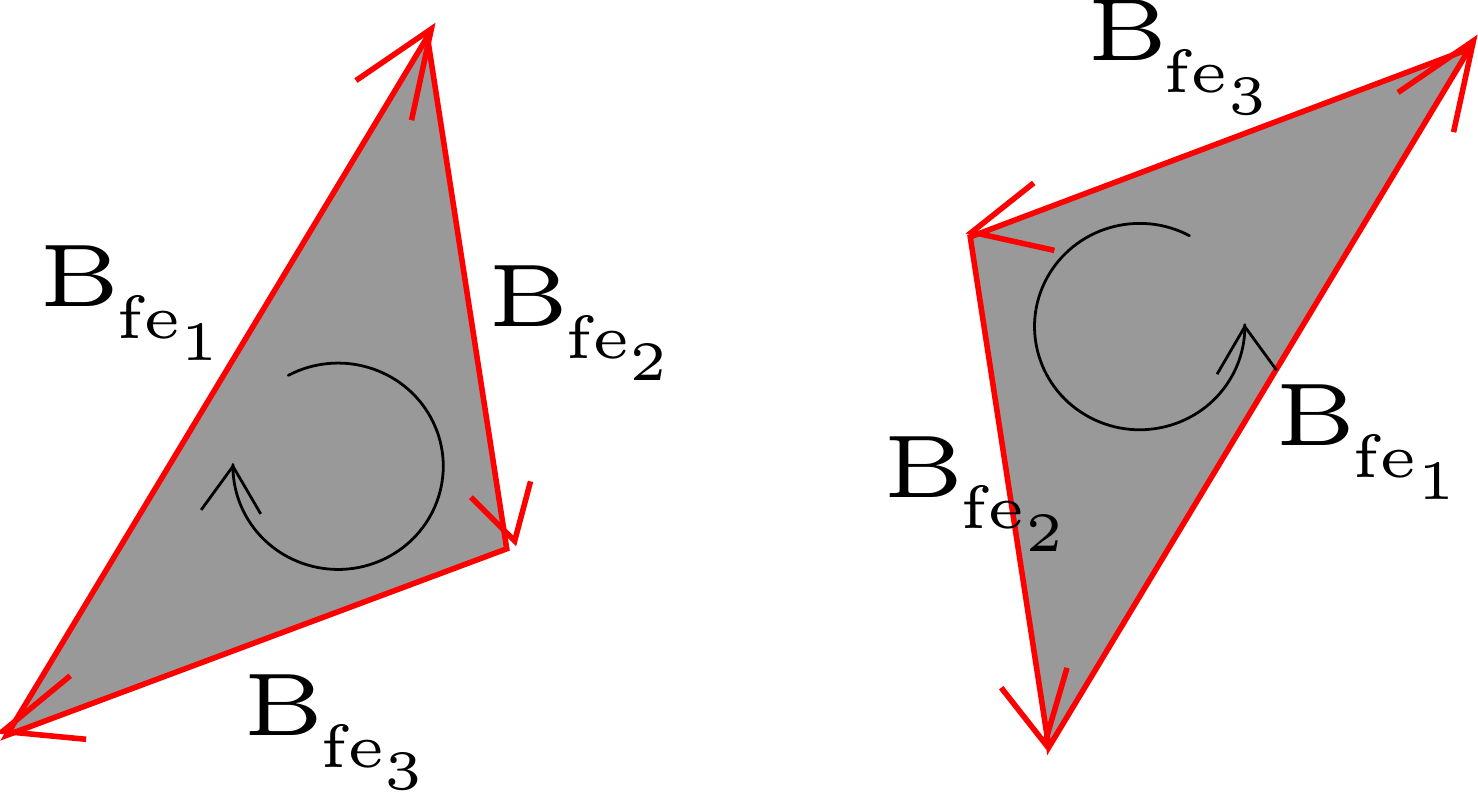}%
\caption{Two possible triangles formed by $B_{fe}$ satisfying closure condition.}
\label{pic-clebsch-orient}
\end{center}
\end{figure}
Moreover the direction of the normal to the face coincides with the orientation
inherited
by the face from the cyclic order of its edges.

For non-planar graphs, in general, we can only reconstruct the universal cover of the surface.

Before we continue with the stationary point analysis for the angles $\phi_{fe}$ in the next
section, let us briefly summarize the results of section \ref{sec:construction_invariants}: We have
introduced a class of modified coherent states for irreducible representations of $SU(2)$, which
contain an additional smearing parameter, and presented how to construct invariants from them. From
the contraction of these invariants (according to the spin network) an effective action has been
derived, whose points of stationary phase allow for the same geometrical interpretation as the
standard coherent states \cite{Perelmov,Frank3D}. The amount of stationary points is significantly
increased by the smearing parameters, yet they are all related by symmetry transformations of the
action; a certain set of them can be suppressed by the prescribed modifiers. Eventually,
we have depicted a way to reconstruct a triangulation from planar spin networks.

\section{Variable transformation and final form of the integral}
\label{sec:evaluation-final}

In this section we focus on the stationary point analysis with respect to the angles
$\phi_{fe}$, which is the key modification in comparison to previously used coherent state
approaches, see also section \ref{sec:inv-def}. This analysis allows us to obtain an
effective action for $S_e$ associated to the edge $e$ in terms of a phase, which we will
identify as
the angle
between the normals of the faces sharing the edge $e$. Furthermore we are able to expand
the effective action for $S_e$ in orders of $\frac{1}{j}$ and initiate the discussion
of
next-to-leading order
contributions.

\subsection{Partial integration over $\phi$ and the new action}

Suppose that we have a non-degenerate configuration, i.e.
\begin{equation}
 \forall_e n_{s(e)}\cdot n_{t(e)}\not=\pm 1 \quad .
\end{equation}
Then the partial stationary point analysis with respect to all $\phi_{fe}$ can be performed.
Its result will be the sum over the contribution from all stationary points with
respect to $\phi_{fe}$ for a given configuration of $B_{fe}$ vectors, but for fixed $U_f$ (so also
fixed $n_f$).

\subsubsection{Stationary points for $S_e$}
\label{sec:stat-phi}

In this section we will explain the contribution to the integral from the stationary point
of the action $S_e$ with respect
to $\phi_{s(e)e},\phi_{t(e)e}$. The $f_{s(e)},f_{t(e)}$
terms can be ignored, since they are equal to $1$ around the
stationary point.

We can separately consider terms corresponding to each edge
\begin{equation}
\frac{1}{4 \pi^2} \int d \phi_{s(e)e} d\phi_{t(e) e}  \;
\epsilon\left(U_{s(e)}O_{s(e)e}|\coh\rangle\ ,\
U_{t(e)}O_{t(e)e}|\coh\rangle\right)^{2j_e} \quad ,
\end{equation}
and perform the
stationary point analysis that gives the asymptotic result of the integration over
$\phi_{s(e)e},\phi_{t(e)e}$.
The stationary point with respect to $\phi_{t(e)e}$ and $\phi_{s(e)e}$ is given by the conditions
\begin{equation}
 U_{s(e)}O_{s(e)e}|\coh\rangle\perp U_{t(e)}O_{t(e)e}|\coh\rangle \quad ,
\end{equation}
which is equivalent to
\begin{equation}\label{eq:normal-form}
 U=O_{s(e)e}^{-1}U_{s(e)}^{-1}U_{t(e)}O_{t(e)e}= (-1)^{\tilde{s}} e^{-i\tilde{\theta}
\begin{pmatrix}
                                                            1 &0\\ 0&-1
                                                           \end{pmatrix}}
\begin{pmatrix}
 0 & -1\\ 1 & 0
\end{pmatrix} \quad ,
\end{equation}
where $\tilde{\theta}\in \left(-\frac{\pi}{2},\frac{\pi}{2}\right)$ and
$\tilde{s}\in \{0,1\}$ are uniquely determined by this equation.
In section \ref{sec:angle-inter} we will show that $2\tilde{\theta}$ can be interpreted as the
angle enclosed by the normal vectors $n_{s(e)}$ and $n_{t(e)}$ (w.r.t. the axis $B_{t(e)e}$). Hence,
$S_e$ on the stationary point
is of the following form:
\begin{equation}
 S_e=2j_{e} \ln \epsilon(\cdots)
=2j_{e}\tilde{\theta}+ i\ 2j_e\pi \tilde{s} \quad .
\end{equation}
As already discussed in section \ref{sec:geom}, each stationary point is characterized by the
existence of $B_{t(e)e}=-B_{s(e)e}$ orthogonal to both $n_{s(e)}$ and $n_{t(e)}$ (see also
stationary point conditions in section \ref{sec:action}). There exist two such configurations
that differ by a
sign of $B_{s(e)e}$.

For every configuration one has $4$ stationary points that can be obtained from one
another by $-o_{s(e)}$- and $-o_{t(e)}$-transformations. In case $j_e$ is an integer the
contributions from the two stationary points are equal, see also section \ref{sec:transformations}.

Contributions from $B_{fe}$ configurations with opposite signs are related by complex
conjugation.

\subsubsection{Geometric interpretation of the angle $\tilde{\theta}$}
\label{sec:angle-inter}

The missing piece of the description above is the exact value of the angle $\tilde{\theta}$.
Here we will provide a geometric interpretation
of this angle and its relation to the angle between faces. Let us recall:
\begin{align}
 B_{s(e)e}&=j_{e}U_{s(e)}O_{s(e)e}\begin{pmatrix}
                                                            1 &0\\ 0&-1
                                                           \end{pmatrix}
O_{s(e)e}^{-1}U_{s(e)}^{-1}\\
n_{s(e)}&=U_{s(e)}O_{s(e)e}HO_{s(e)e}^{-1}U_{s(e)}^{-1}\\
n_{t(e)}&=U_{t(e)}O_{t(e)e}HO_{t(e)e}^{-1}U_{t(e)}^{-1}
=e^{i\tilde{\theta} \frac{B_{t(e)e}}{|B_{t(e)e}|}}n_{s(e)}e^{-i\tilde{\theta}
\frac{B_{t(e)e}}{|B_{t(e)e}|}}
\end{align}
The angle $2\tilde{\theta}$ is the angle by which one needs to rotate $n_{s(e)}$ around the
axis $B_{t(e)e}$ to obtain $n_{t(e)}$.
We will denote this $SO(3)$ angle by
\begin{equation}
 \theta=2\tilde{\theta},\quad \theta \in (-\pi,\pi)\ .
\end{equation}
This remaining ambiguity of the sign factor $\tilde{s}$ will be resolved in appendix
\ref{sec:sign}.

\subsection{Partial integration over $\phi$}
\label{d-sec}

We introduce new variables
\begin{equation}
 \phi_1=\phi_{s(e)}-\phi_{s(e)}^0,\quad \phi_2=\phi_{t(e)}-\phi_{t(e)}^0\ ,
\end{equation}
where $\phi_{s(e)}^0$ and $\phi_{t(e)}^0$ denote the stationary points. Then using
\eqref{eq:normal-form}, we can write the action as:
\begin{equation} \label{eq:start}
\frac{1}{4 \pi^2} \int \rd\phi_1\rd \phi_2
(-1)^{2j_e\tilde{s}}
\left(e^{i \tilde{\theta}}
\cos \phi_1 \cos \phi_2 + e^{-i \tilde{\theta}} \sin \phi_1 \sin \phi_2
\right)^{2 j} \quad ,
\end{equation}
where we integrate over $\phi_i$. By splitting the terms in the bracket in real and imaginary part, we obtain:
\begin{equation}
\cos \tilde{\theta} \underbrace{\left(\cos \phi_1 \cos \phi_2 + \sin \phi_1 \sin
\phi_2
\right)}_{\cos(\phi_1 - \phi_2)} + i \sin \tilde{\theta} \underbrace{\left(\cos
\phi_1
\cos \phi_2 - \sin \phi_1 \sin \phi_2 \right)}_{\cos (\phi_1 + \phi_2)} \quad .
\end{equation}
We define new variables
\begin{equation}
\alpha := \phi_1 - \phi_2, \quad \beta:= \phi_1 + \phi_2 \quad,
\end{equation}
and the Jacobian for this transformation is given by:
\begin{equation}
\left| \frac{\partial \alpha \partial \beta}{\partial \phi_1 \partial \phi_2}
\right| =\left|
\begin{array}{c c}
1 & -1 \\
1 & 1
\end{array}
\right| = 2 \quad .
\end{equation}
Hence equation \eqref{eq:start} becomes:
\begin{align} \label{eq:edge_action}
& \frac{1}{8 \pi^2} {\int d\alpha d \beta\ } (-1)^{2j_e\tilde{s}}\left(\cos
\tilde{\theta} \cos \alpha + i \sin \tilde{\theta} \cos \beta \right)^{2 j_e} =
\nonumber \\
=&  \frac{1}{8 \pi^2} {\int d\alpha  d \beta\ } (-1)^{2j_e\tilde{s}}\exp\left\{
\underbrace{2j_e
\ln \left(\cos \tilde{\theta} \cos \alpha + i \sin \tilde{\theta} \cos \beta
\right)}_{=: S_e'}\right\} \quad ,
\end{align}
where $S_e=S_e'+i\ 2j_e\pi \tilde{s}$.

\subsubsection{Expansion around stationary points}

Given the definitions from the previous section, we compute the expansion of the following
expression:
\begin{equation}
\frac{1}{8\pi^2} \int\rd \alpha\rd\beta\ (-1)^{2j\tilde{s}} e^{S_e'} \quad .
\end{equation}
The stationary point is given by $\alpha=\beta=0$, which
corresponds to $\phi_i=0$, i.e. $\phi_{s(e)}=\phi_{s(e)}^0$,
$\phi_{t(e)}=\phi_{t(e)}^0$. In this point the action associated to the edge $e$ becomes:
\begin{equation}
 S_e'=2j_e\ln \left(e^{i\tilde{\theta}_e}\right)=i2j_e\tilde{\theta}_e
\end{equation}
In order to compute the first order contribution, one has to consider the matrix of second
derivatives (evaluated on the point of stationary phase):
\begin{align}
\frac{\partial^2 S_e'}{\partial \alpha^2} =& - 2j_e \frac{\cos \tilde{\theta} \cos
\alpha}{\cos \tilde{\theta} \cos \alpha + i \sin \tilde{\theta} \cos \beta} \quad , \\
\frac{\partial^2 S_e'}{\partial \alpha \partial \beta} =& 0 = \frac{\partial^2
S_e'}{\partial \beta \partial \alpha}  \quad ,\\
\frac{\partial^2 S_e'}{\partial \beta^2} = & - 2 i j_e \frac{\sin \tilde{\theta} \cos
\beta}{\cos \tilde{\theta} \cos \alpha + i \sin \tilde{\theta} \cos \beta} \quad .
\end{align}
Around the stationary point the action can be expanded (up to second order in the variables
$\alpha, \beta$):
\begin{equation} \label{eq:edge_expansions_stationary}
S_e'= i 2j_e  \tilde{\theta} +\frac{1}{2}\begin{pmatrix}
    \alpha & \beta
   \end{pmatrix}
\begin{pmatrix}
-2j_e \cos \tilde{\theta} e^{-i  \tilde{\theta}} & 0 \\
0 & -2 i j_e \sin \tilde{\theta} e^{-i \tilde{\theta}}
\end{pmatrix}
\begin{pmatrix}
\alpha \\
\beta
\end{pmatrix}
+\cdots \quad .
\end{equation}
In order to correctly perform the stationary phase approximation, it is indispensable to state
the right branch of the square root, here for $\tilde{\theta}\in
\left(-\frac{\pi}{2}, \frac{\pi}{2}\right)$:
\begin{equation}
\begin{split}
 \sqrt{\cos\tilde{\theta}e^{-i\tilde{\theta}}}&=\sqrt{|\cos\tilde{\theta}|}
e^{-i\frac{1}{2}\tilde{\theta}}\\
\sqrt{i\sin\tilde{\theta}e^{-i\tilde{\theta}}}&=\sqrt{|\sin\tilde{\theta}|}
e^{-i\frac{1}{2}\tilde{\theta}}
\left\{\begin{array}{ll}
        e^{{ - i\frac{\pi}{4}}} & \tilde{\theta}\in \left(-\frac{\pi}{2}, 0\right)\\
        e^{{i\frac{\pi}{4}}} & \tilde{\theta}\in \left(0, \frac{\pi}{2}\right)
       \end{array}\right. \quad .
\end{split}
\end{equation}
Let us notice that
\begin{equation}
 \sign\sin\theta=\sign\sin\tilde{\theta}
\quad \text{for}\quad \tilde{\theta}\in
\left(-\frac{\pi}{2}, \frac{\pi}{2}\right) \quad .
\end{equation}
Hence, the leading order contribution from the stationary point is:
\begin{equation}\label{eq:Legendre-preliminary}
\frac{1}{8 \pi^2} \frac{2 \pi (-1)^{2j_e\tilde{s}}}{\sqrt{2 j^2 \left| \sin 2
\tilde{\theta} \right|}}
e^{i 2 \tilde{\theta} \left(j + \frac{1}{2}\right) -i \frac{\pi}{4}
\text{sign}(\sin 2
\tilde{\theta})}
\left(1+O\left(\frac{1}{j}\right)\right) \quad .
\end{equation}
In the next section we will show an improvement of  this result.

\subsubsection{The total expansion of the edge integral}
\label{sec:total-edge}

Let us introduce a number (see appendix \ref{rel-islas} for a motivation of its origin)
\begin{equation}\label{eq:C_j-first}
 C_{j}=\frac{1}{4^j}\frac{\Gamma(2j+1)}{\Gamma(j+1)^2} \quad .
\end{equation}
We can multiply \eqref{eq:Legendre-preliminary} by $\frac{C_j}{C_j}=1$ and use the expansion
$\frac{1}{C_j} = \sqrt{\pi j} \left(1 + O(\frac{1}{j}) \right)$
derived in appendix \ref{sec:C_j} to write the result as
\begin{equation}\label{eq:first-exp}
C_{j_e} \frac{(-1)^{s_e}}{4 \sqrt{2 \pi j_e \left| \sin  {\theta} \right|}} e^{i(
\theta \left(j_e +
\frac{1}{2}\right) -  \frac{\pi}{4} \text{sign}(\sin {\theta}))}
\left(1+O\left(\frac{1}{j_e}\right)\right) \quad.
\end{equation}
By $s_e$ we denoted the sign factor
\begin{equation}
 s_e=\left\{\begin{array}{ll}
           0 & j_e\ \text{integer},\\
           0 & j_e\ \text{half-integer and}\ \tilde{s}=0\\
           1 & j_e\ \text{half-integer and}\ \tilde{s}=1\quad .
          \end{array}\right.
\end{equation}
We will determine the sign $s_e$ in appendix \ref{sec:sign}.

We introduce new `length' parameters
\begin{equation}
 l_e:=j_e+\frac{1}{2} \quad ,
\end{equation}
and using the fact that $\frac{(-1)^{s_e}}{4 \sqrt{2 \pi l_e \left| \sin  {\theta} \right|}}=
\frac{(-1)^{s_e}}{4 \sqrt{2 \pi j_e \left| \sin  {\theta} \right|}}\left(1+O(j^{-1})\right)$
we can express \eqref{eq:first-exp} in terms of $l_e$.
Before we move on, we would like to present a first glimpse at the next-to-leading order
contribution: As it will be shown in section \ref{sec:total-phi} by application of the stationary
point
analysis \eqref{eq:first-exp} and the recursion relation \eqref{eq:rec-rel}, the contribution
(including next-to-leading order (NLO)) from the
integral of $e^{S_e}$ over $\phi_{s(e)e},\phi_{t(e)e}$  is given by
\begin{equation}
 C_{j_e}\frac{(-1)^{s_e}}{4 \sqrt{2 \pi l_e \left| \sin \theta \right|}}
e^{i(  l_e\theta  -  \frac{\pi}{4} \text{sign}(\sin
\theta)-\frac{1}{8l_e}\cot\theta)}
\left(1+O\left(\frac{1}{l_e^2}\right)\right) \quad ,
\end{equation}
where $\theta\in(-\pi,\pi)$ is the angle by which one has to rotate $n_{s(e)}$ around
$B_{t(e)e}$
to obtain $n_{t(e)}$.

\subsection{New form of the action}
\label{sec:new-form-action1}

In the previous sections we have computed the contribution of one point of stationary phase
with respect to
the angles $\phi_{fe}$. From section \ref{d-sec} we can also conclude that having one stationary
point all others are obtained
by application of transformations from $\tilde{G}$ that keep $U_f$ fixed. These are given by
compositions of
\begin{equation}
 (-u_f)o_{f}(\pi),\ -o_{fe}\ \forall f\quad .
\end{equation}
However, only the orbit generated by the group of $(-u_f)o_{f}(\pi)$ from a non-trivial
stationary
point contributes, since all other stationary points are suppressed by the modifiers $f_f$.
Therefore it is sufficient to compute the number of these stationary points. The group
generated
by
$(-u_f)o_{f}(\pi)$ is equal to $Z_2^{|F|}$ and acts freely on the stationary points; the
countability of the orbit is thus $2^{|F|}$.

Around the stationary orbit, the integral is hence of the form:
\begin{equation}
(-1)^s 2^{|F|}\int \prod dU_f \prod_{e}(-1)^{s_e}C_{j_e}
\frac{1}{4\sqrt{2\pi\left(j_e +\frac{1}{2}\right)|\sin\theta_e|}}
e^{-i\frac{\pi}{4}\sign\sin {\theta_e}}\
e^{i\left(j_e+\frac{1}{2}\right)\theta_{e} - \frac{1}{8 (j_e + \frac{1}{2})} \cot \theta_e} \quad ,
\end{equation}
where $\theta_{e}$ is the angle between $n_{s(e)}$ and $n_{t(e)}$ with the sign determined by left
hand rule with respect to $B_{t(e)e}$. In the neighbourhood of the stationary point this definition is
meaningful.
The value of the product $\prod_e (-1)^{s_e}$ is discussed in appendix \ref{sec:sign1}. We use
new
`length' parameters introduced in section \ref{sec:total-edge}
\begin{equation}
 l_e:=j_e+\frac{1}{2}
\end{equation}
and perform a change of variables
\begin{equation}
U_f \rightarrow n_f \quad ,
\end{equation}
which is worked out in appendix \ref{variables-n}. The correct integral measure is given by:
\begin{equation}
\mu = \frac{1}{2 \pi} \delta(|n|^2 -1)\, dn_1 \, dn_2 \, dn_3 \quad .
\end{equation}
Thus, we can write the integral (integrating out $o_f$ and $-u_f$ gauges) as:
\begin{equation}
\begin{split}
 &\frac{(-1)^s(-1)^{\sum_e s_e}e^{-i\frac{\pi}{4}\sum_e
\sign_e}\prod_e C_{j_e}}{2^{\frac{5}{2}|E|}\pi^{|F|+\frac{1}{2}|E|}}\\
&\int\prod_{f\in F}
\delta(|n_f|^2-1)\ d^3n_f\frac{1}{\sqrt{\prod_{e}l_e|\sin\theta_e|}}
e^{i\sum_e\left(l_e\theta_e +S^{l_e}_1(\theta_e) \right)} \quad .
\end{split}
\end{equation}
where from \ref{d-sec} we know
\begin{equation}
S^{l_e}_1(\theta_e)= -\frac{1}{8l_e}\cot\theta_e+\ldots \quad .
\end{equation}
The only present symmetry that has to be discussed is a $u$-symmetry, which is implemented by
$SO(3)$ rotations:
\begin{equation}
 n_f\rightarrow un_fu^{-1}
\end{equation}
If the configurations of the vectors $B_{fe}$ is rigid then the stationary $u$-orbit is
isolated.\footnote{Rigid means that the only deformations of the configuration of the edges with
given lengths are rotations. For an isolated orbit, there exists a neighbourhood of the orbit that
does not intersect any other orbit.}.

\subsubsection{$c$ transformation as parity transformation}
\label{parity}

Furthermore, we would like to point out that given one orbit of stationary phase, we can
always
construct a different one via parity transformation of the $B_{fe}$ vectors (see also section
\ref{sec:geom} about $c$ transformations). After integrating out gauges these two points are
related by
\begin{equation}
 \begin{split}
  n_f'=n_f \quad ,\\
  B_{fe}'=-B_{fe} \quad ,
 \end{split}
\end{equation}
so also the angles are related by $\theta_e'=-\theta_e$
($n_f$ are preserved as pseudovectors).
Finally, we see that the asymptotic contribution from the parity related stationary orbits is just
the complex conjugate of the original one, such that the complete expansion is real.

In order to provide the correct expression of the action before performing the remaining stationary
point analysis, it is necessary to compute the normalization of the intertwiners, the so-called
`Theta' graph.

\subsection{Normalization - `Theta' graph}\label{sec:Theta}

We need to compute the self-contraction of the invariants $C_f$  using the (in this case)
symmetric bilinear form $\epsilon$ (as a generalization of the anti-symmetric form $\epsilon$ of
spin $\coh$ to arbitrary representations). Its special properties allow us to relate the
$\epsilon$ product
($(\cdot,\cdot)$) to the scalar product on $SU(2)$:
\begin{equation}
 (C_f,C_f)=\langle \overline{C_f}, \begin{pmatrix}
                                           0 &1\\ -1 & 0
                                         \end{pmatrix} C_f\rangle=\langle {C_f},C_f\rangle \quad ,
\end{equation}
since $C_f$ is real and $SU(2)$ invariant. The integral of the contraction of the intertwiner with itself is given by:
\begin{equation}
\begin{split}
 \int_{SU(2)^2\times
S_1^6}dU_1dU_2&\prod_i\frac{\rd\phi_{i1}}{2\pi}\prod_i\frac{\rd\phi_{i2}}{2\pi}\\
&f_1(\{\phi_{i1}\})f_2(\{\phi_{i2}\})
\prod_i\left(\coh|,O_{\phi_{i1}}^{-1}U_1^{-1}U_2O_{\phi_{i2}}|\coh\right)^{2j_i} \quad .
\end{split}
\end{equation}
Its stationary point conditions are:
\begin{itemize}
 \item $B_{i1}=-B_{i2}$ .
\item $\sum_{i}B_{i1}=\sum_i B_{i2}=0$ .
\end{itemize}
As the `Theta' graph itself is an evaluation of a spin network its effective action have the
same transformations on the action as described in \ref{sec:transformations}.

The $u$ symmetry can be ruled out just by dropping the integration over ${U}_1$. Then one is left
with the group $G$ generated by the transformations
\begin{equation}
 r_{f1},\ r_{f2},\ -u_2,\ o_{f1},\ o_{f2},\ -o_{f1,i},\ -o_{f2,i} \quad .
\end{equation}
On the stationary $H$ orbits, i.e. the normal subgroup of $G$ generated by $\{o_f,-u_f\}$,
these transformations act as the group $K=G/H$, which gives $\mathbb{Z}_2^3\times
\mathbb{Z}_2^3$.

This group acts freely on the stationary $H$ orbits and as before the modifiers suppress all but one
of the $H$ orbits.
If we take $f_1=f_2=1$ and restrict ourselves to the case where $\sum_i j_i$ is
even (all $j_i$ integer) then
the action is invariant with respect to all transformations, thus every
stationary orbit
contribute the same $\frac{1}{2^6}$ of the overall result.\footnote{In the case
when $\sum j_e$ is not even,
or some $j_e$ are not integer, this choice leads to vanishing invariant.}

The computation of the full expansion of the theta graph in the even
case also gives an expansion on the stationary orbit in the presence of $f_i$. This is
briefly discussed in the next section.

\subsubsection{Theta graph for integer spins and $\sum j$ even}

We will derive the complete expansion for $\sum j$ even.
We need to compute
\begin{equation}
 \prod_{i} C_{j_i}(C^{j_1j_2j_3}_{000})^2
\end{equation}
where $C_{j_i}$ (see also appendix \ref{sec:C_j}) is the normalization of the $|0\rangle$
vector.

In appendix \ref{theta:sec} we show (following \cite{Varshalovich}) that the theta
graph $(C^{j_1j_2j_3}_{000})^2$ is
equal to
\begin{equation}
 \frac{1}{2 \pi {S}}\left(1+O\left(\frac{1}{l^2}\right)\right) \quad ,
\end{equation}
where $S$ is the area of the triangle with edges $j_i+\frac{1}{2}$.

\subsection{Final formula}
\label{sec:final-form}

Let us state the final formula normalized by the square roots of the `Theta' diagrams. Those
are equal to:
\begin{equation}\label{eq:Clebsch-to-Penrose}
 (-1)^{s_f}2^{-7/2}\sqrt{\frac{\prod_{e\subset f}C_{j_e}}{\pi
S_f}}\left(1+O\left(\frac{1}{l^2}\right)\right) \quad ,
\end{equation}
where $s_f$ is a sign factor necessary to be consistent with \cite{Penrose,Penrose2} that will be derived in
\ref{sec:sign_intertwiner}.

To summarize the various calculations of this chapter, the contraction of normalized intertwiners
has the following asymptotic expansion after the stationary phase
approximation for the angles $\phi_{fe}$ has been performed and the asymptotic expansion from
\eqref{eq:Clebsch-to-Penrose} has been inserted:
\begin{equation} \label{eq:phi_int_performed}
\begin{split}
 &\frac{(-1)^{s+\sum_f s_f+\sum_e s_e}e^{-i\frac{\pi}{4}\sum_e
\sign_e}}{2^{\frac{5}{2}|E|-\frac{7}{2}|F|}\pi^{\frac{1}{2}|F|+\frac{1}{2}|E|}}
\frac{\prod_{f\in F}S_f^{\coh}}{\prod_{e\in E} l_e^{\coh}}\\
&\int\prod_{f\in F}
\delta(|n_f|^2-1)\ d^3n_f\frac{1}{\sqrt{\prod_{e}|\sin\theta_e|}}
e^{i\sum_e\left(l_e\theta_e - \frac{1}{8 l_e} \cot \theta_e \right)} \quad .
\end{split}
\end{equation}
As it will be shown in appendix \ref{sec:sign}, $s+\sum_f s_f+\sum_e s_e=0$ mod $2$ and thus
the term
\begin{equation}
 (-1)^{s+\sum_f s_f+\sum_e s_e}
\end{equation}
in the integral can be omitted.

This is the contribution up to next-to-leading order. It is straightforward to generalize it to higher
order due to the complete expansion of the edge amplitude (section
\ref{sec:DLphi})
and the expansion of `Theta' diagrams (appendix \ref{theta:sec}).

In the next section we will focus our attention on the specific example of the $6j$ symbol. After another variable transformation to the set of exterior dihedral
angles of the tetrahedron has been performed, we obtain the action of flat first order Regge
Calculus , i.e. Regge Calculus in which both edge lengths and dihedral angles are considered as independent variables. The stationary point conditions
(with respect to the dihedral angles) will reduce the action to ordinary Regge calculus, such
that the geometry is entirely described by the set of edge lengths, where angles on the
stationary point agree with the angles given for a tetrahedron built from the lengths.
 We will perform the
stationary point analysis, in particular compute the determinant of the Hessian matrix, and obtain
the correct asymptotic expression for the $SU(2)$ $6j$ symbol \cite{PR}.

\newpage

\section{Analysis of 6j symbol and first order Regge Calculus}
\label{sec:first-order}

In this section, we will perform the remaining integrations via stationary phase approximation
starting from \eqref{eq:phi_int_performed} in the case of the $6j$ symbol. As we are restricting the discussion to a specific spin network, we introduce the following
notations:

This spin network consists of $4$ faces $f$, which we will simply count by $i \in \{1,\dots,4\}$,
and $6$ edges $e$, which we will denote by $ij, i < j$, i.e. the faces sharing it. On the stationary
point with respect to $\{\phi_{fe}\}$, we have two configurations of $B_{fe}$, which we will label
accordingly as $B_{ij}$ and similarly $\theta_{ij}$ using the convention that $\theta_{ij}$ is the
angle at the edge $l_{ij}$.

In \cite{Frank3D} it has been shown that the $6j$ symbol can be interpreted as a tetrahedron on the
points of stationary phase (for non-degenerate configurations). In section \ref{sec:action} we
have shown
that our approach gives the same interpretation. Hence, we can assume that for one stationary point,
the normals to the faces $n_i$ of the tetrahedron are outward pointing and the $B_{ij}$ vectors are
oriented such that $\theta_{ij} \in (0,\pi)$. For the second stationary orbit, described by
$B'_{ij}=-B_{ij}$, the angles are negative, hence this contributes the complex conjugate.

In order to perform the remaining stationary point analysis, it is necessary to perform another
variable transformations from normals of faces $n_i$ to angles between these normals $\theta_{ij}$ followed by integrating out gauge degrees of freedom corresponding to $u$ transformations:
\begin{equation}
n_i \rightarrow \theta_{ij} \quad .
\end{equation}
This transformation is performed in appendix \ref{variables-theta} in great detail, and we
obtain the following relation:
\begin{equation} \label{eq:normals_to_theta}
\prod_{i} d^3 n_i \delta(|n_i|^2 -1) \rightarrow \prod_{ij} d \theta_{ij} \prod_{ij} |\sin
\theta_{ij}| \delta( \det \tilde{G}) \quad ,
\end{equation}
where $\tilde{G}$ denotes the angle Gram matrix (for exterior dihedral angles) of a tetrahedron with
components $G_{ij} = \cos(\theta_{ij})$, with $\theta_{ii}=0$. Using \eqref{eq:normals_to_theta} and
simplifying \eqref{eq:phi_int_performed} for the case of the $6j$ symbol, we obtain in the
neighbourhood of the stationary point:
\begin{equation}
\frac{e^{-i\frac{6}{4}\pi}\prod_i S_i^{\coh}}{2\pi^3\prod_{i<j}l_{ij}^{\coh}}
\int \prod_{i<j} \rd \theta_{ij}
\underbrace{\prod_{i<j} |\sin\theta_{ij}|\delta(\det\tilde{G})}_{\rm Jacobian}
\frac{1}{\prod_{i<j}\sqrt{|\sin\theta_{ij}|}}
e^{i\sum_{i<j}\left(l_{ij}\theta_{ij} -\frac{1}{8l_{ij}}
\cot\theta_{ij}\right)} \quad .
\end{equation}
Let us consider one of the stationary points for which $\sin\theta_{ij}>0$. The second one
contributes the complex conjugate of the first because two points (orbits) are related by $c$
(parity)
transformations:
\begin{equation} \label{eq:first_order_regge}
 \frac{ i}{4\pi^4}\frac{|l|\prod_i S_i^{\coh}}{\prod_{i<j}l_{ij}^{\coh}}
\int \rd\rho\prod_{i<j}
\rd \theta_{ij}
\prod_{i<j} \sqrt{\sin\theta_{ij}}
e^{i\left(\sum_{i<j}\left(l_{ij}\theta_{ij} -\frac{1}{8l_{ij}}\cot\theta_{ij}
\right)-|l|\rho\det\tilde{G}\right)} \quad ,
\end{equation}
where $|l|^2:=\sum_{i<j} l_{ij}^2$ and $\rho$ is a Lagrange multiplier.

It is worth to examine the action in \eqref{eq:first_order_regge} in more detail: This function of
edge lengths $l_{ij}$ and angles $\theta_{ij}$ is known as the action for `first order' Regge
Calculus \cite{Barrett:1994nn}. We will comment on this further in section
\ref{sec:1st-Regge6j}.

In the next section we will perform a stationary phase approximation for the integrations over the
angles $\theta_{ij}$. We
will use the improved action $\sum_{i<j} l_{ij}\theta_{ij}$, where we regard higher order
corrections as the vertices of a Feynman diagram expansion, and the resulting points of
stationary phase will correspond to perturbed stationary points obtained previously from the
stationary point analysis w.r.t. the $SU(2)$ group elements $U_f$ in section
\ref{sec:construction_invariants}.

\subsection{Stationary point analysis}

The stationary point conditions for the action \eqref{eq:first_order_regge} are:
\begin{itemize}
\item Derivative with respect to $\theta_{ij}$:
\begin{equation} \label{eq:stationary_theta}
l_{ij} - |l| \rho \frac{\partial \det \tilde{G}}{\partial \theta_{ij}} = 0 \quad .
\end{equation}
\item Derivative with respect to $\rho$:
\begin{equation} \label{eq:stationary_rho}
- |l| \det \tilde{G} = 0 \quad .
\end{equation}
\end{itemize}
Equations \eqref{eq:stationary_theta} and \eqref{eq:stationary_rho} are exactly those equations
stating that $\theta_{ij}$ are the exterior dihedral angles of a tetrahedron formed by edges of
length $l_{ij}$ (see appendix \ref{app:technical} and \cite{DF}). From the stationary point
analysis w.r.t. group elements $U_f$ we know that all normals $n_i$ to the faces are outward
directed\footnote{The point of stationary phase w.r.t. the angles $\theta_{ij}$ is only a small
perturbation in comparison to the stationary point w.r.t. group elements.}. The areas of the
respective face are denoted by $S_i$. For a flat tetrahedron, the following relation holds (see for
example \cite{DF,Kokkendorff}):
\begin{equation} \label{eq:relation_angle_length}
 l_{ij} =\frac{2}{3}\frac{1}{V}S_iS_j\sin\theta_{ij}\ .
\end{equation}

On the other hand $\det\tilde{G}=0$ holds, where a (single) null eigenvector of
$\tilde{G}$ is given by the vector of areas of the triangles $(S_1,\dots,S_4)$ (of the
tetrahedron)\footnote{This illustrates that $\det \tilde{G}=0$ imposes the closure of the flat
tetrahedron.}. Thus follows:
\begin{equation} \label{eq:derivative_gram}
 \frac{\partial\det\tilde{G}}{\partial\theta_{ij}}=-2 \frac{{\det}'\tilde{G}}{\sum_kS_{k}^2
}S_iS_j\sin\theta_{ij}\overset{\eqref{eq:relation_angle_length}}
{=} -3 \frac{V{\det}'\tilde{G}}{\sum_kS_{k}^2}\
l_{ij} \quad ,
\end{equation}
where $\det' \tilde{G} = \sum_i \tilde{G}_{ii}^*$ and $\tilde{G}^*_{ii}$ is the $(i,i)$th minor of
$\tilde{G}$. $\det'\tilde{G}$ is computed in appendix \ref{bunch41}:
\begin{equation}\label{eq:help1}
 {\det}'\tilde{G}=\frac{3^4}{2^2}(\sum_i S_i^2)\frac{
V^4}{\prod S_i^2} \quad .
\end{equation}
Using \eqref{eq:derivative_gram} and \eqref{eq:help1}, we solve \eqref{eq:stationary_theta}
for the Lagrange multiplier $\rho$:
\begin{equation}
 \rho=-\frac{2^2\prod S_i^2}{3^5 V^5|l|}\quad .
\end{equation}
The quadratic order in the expansion around the stationary point, which we also call the kinetic
term, i.e. the Hessian matrix of the action, is given by:
\begin{equation}
\mathcal{H} := -i|l|\left(\begin{array}{cc}
        0 &\frac{\partial\det\tilde{G}}{\partial\theta_{ij}}\\
    \frac{\partial\det\tilde{G}}{\partial\theta_{km}} & \rho
\frac{\partial\det\tilde{G}}{\partial\theta_{ij}\partial\theta_{km}}\\
       \end{array}
\right) \quad .
\end{equation}
To complete the stationary point analysis, we have to compute the determinant of its inverse
evaluated on the stationary point.

\subsection{Propagator and Hessian}

Let us introduce a function of lengths $l$:
\begin{equation}
 \lambda=|l|\rho=- \frac{2^2 \prod S_i^2}{3^5 V^5} \quad .
\end{equation}
It is of scaling dimension $1$ with respect to $l$.

\subsubsection{Propagator}

We will prove that the inverse of the kinetic term is equal to
\begin{equation}
 \mathcal{H}^{-1}=i\left(\begin{array}{cc}
        \frac{c}{|l|^2} &\frac{1}{|l|}\frac{\partial\lambda}{\partial l_{ij}}\\
   \frac{1}{|l|}\frac{\partial\lambda}{\partial l_{kl}} & \frac{\partial\theta_{ij}}{\partial
l_{kl}}\\
       \end{array}
\right) \quad ,
\end{equation}
where $c$ is a constant (defined in Lemma \ref{lm:bunch} in appendix \ref{app:technical}). Let
us compute
\begin{equation}
 i\left(\begin{array}{cc}
        \frac{c}{|l|^2} &\frac{1}{|l|}\frac{\partial\lambda}{\partial l_{ij}}\\
   \frac{1}{|l|}\frac{\partial\lambda}{\partial l_{mn}} & \frac{\partial\theta_{ij}}{\partial
l_{mn}}\\
       \end{array}
\right) (-i)|l|\left(\begin{array}{cc}
        0 &\frac{\partial\det\tilde{G}}{\partial\theta_{kl}}\\
    \frac{\partial\det\tilde{G}}{\partial\theta_{mn}} & \rho
\frac{\partial\det\tilde{G}}{\partial\theta_{kl}\partial\theta_{mn}}\\
       \end{array}
\right) \quad .
\end{equation}
This gives
\begin{equation} \label{eq:produc_hessian_inverse}
 |l|\left(\begin{array}{cc}
\frac{1}{|l|}\frac{\partial\lambda}{\partial
l_{mn}}\frac{\partial\det\tilde{G}}{\partial\theta_{mn}} &
\frac{\partial\det\tilde{G}}{\partial\theta_{ij}}\frac{c}{|l|^2}+\frac{1}{|l|}\frac{\partial\lambda}
{\partial l_{mn}}\rho
\frac{\partial\det\tilde{G}}{\partial\theta_{mn}\partial\theta_{ij}}\\
\frac{\partial\det\tilde{G}}{\partial\theta_{mn}}\frac{\partial\theta_{mn}}{\partial l_{ij}}
 &
\frac{1}{|l|}\frac{\partial\lambda}{\partial l_{ij}}
\frac{\partial\det\tilde{G}}{\partial\theta_{kl}}+
\frac{\partial\theta_{ij}}{\partial
l_{mn}}\rho\frac{\partial^2\det\tilde{G}}{\partial\theta_{mn}\partial\theta_{kl}}
       \end{array}
\right) \quad ,
\end{equation}
using the results of appendix \ref{bunch}, we see that \eqref{eq:produc_hessian_inverse} is
equal to
the identity.

\subsubsection{Hessian}

Similar to the angle Gram matrix discussed in the previous section, $\det \frac{\partial
\theta_{ij}}{\partial l_{kl}} = 0$ in the case of a flat tetrahedron. This is due to the fact that
given a set of dihedral angles of a flat tetrahedron, the tetrahedron is only defined up to
rotations and uniform scaling of its edge lengths. Hence, the null eigenvector of the matrix
$\frac{\partial
\theta_{ij}}{\partial l_{kl}}$ is given by the edge vector $\vec{l}:=(l_{12},\dots,l_{34})$\footnote{This is equivalent to the Schl\"afli identity in $3$D: $\sum_{ij} l_{ij} d\theta_{ij}=0$}. We rewrite the matrix $\mathcal{H}^ {-1}$ in the basis in which its second row is
parallel to $\vec{l}$ and the next ones are perpendicular to $\vec{l}$:

\begin{equation}
 i\left(\begin{array}{cccc}
        \frac{c}{|l|^2} &\frac{1}{|l|}\frac{\partial\lambda}{\partial l}&\cdots&\cdots\\
    \frac{1}{|l|}\frac{\partial\lambda}{\partial l}&0&0&0\\
    \vdots& 0&\frac{\partial\theta_{ij}}{\partial
l_{kl}}&\vdots\\
   \vdots&0&\vdots&\vdots
       \end{array}
\right) \quad .
\end{equation}
The determinant of $(-\mathcal{H}^{-1})$ is thus equal to
\begin{equation}
 \det (-\mathcal{H}^{-1})=-(-i)^7\Big(\underbrace{\frac{1}{|l|}\frac{\partial\lambda}{\partial
l}}_{=\frac{\lambda}{|l|^2}}\Big)^2\ {\det}'\frac{\partial\theta_{ij}}{\partial
l_{kl}} \quad .
\end{equation}
Since $\lambda$ is of scaling dimension $1$ (with respect to edge lengths), $l_{ij}
\frac{\partial \lambda}{\partial l_{ij}} = \lambda$. More
details and the tedious calculation of $\det' \frac{\partial \theta_{ij}}{\partial l_{kl}}$ can be
found in appendix \ref{app:technical}:
\begin{equation}
{\det}' \frac{\partial \theta_{ij}}{\partial l_{kl}} =  \frac{3^3}{2^5} \frac{|l|^2}{\prod
S_i^2} V^3 \quad .
\end{equation}
Combining all these results, we obtain:
\begin{equation}
\det (-\mathcal{H}^{-1}) =  -i\frac{1}{|l|^4}\left(-\frac{2^2 \prod S_i^2}{3^5 V^5}\right)^2
\frac{3^3}{2^5} \frac{|l|^2}{\prod
S_i^2}
V^3= -i \frac{1}{2 \; 3^7}\frac{\prod_i S_i^2}{|l|^2V^7} \quad ,
\end{equation}
and hence
\begin{equation}
 \sqrt{|\det \mathcal{H}^{-1}|}=\frac{1}{\sqrt{2} \; 3^{\frac{7}{2}}}\frac{\prod_i S_i}{|l|V^\frac{7}{2}} \quad .
\end{equation}
Since $\mathcal{H}^{-1}$ is antihermitian, it has only imaginary (and nonzero) eigenvalues.
Therefore it is important to count the number of $+i{\mathbb R}$ and $-i{\mathbb R}$
eigenvalues
in order to pick the right branch of $\sqrt{\det (-\mathcal{H}^{-1})}$. The number of positive and
negative imaginary eigenvalues is
constant on the connected components of parameter spaces. For oriented tetrahedra (one of the two
components) it can be computed in the equilateral case, i.e. all $l_{ij}$ are equal.
This was done in appendix \ref{iso:sec}, then $\mathcal{H}^{-1}$ has $4$ $i{\mathbb R}$
eigenvalues
and $3$ $-i{\mathbb R}$. Finally, we conclude:
\begin{equation}\label{eq:Hess}
 \frac{1}{\sqrt{{\det (-\mathcal
H)}}}=e^{-4i\frac{\pi}{4}}e^{3i\frac{\pi}{4}}\sqrt{|\det \mathcal{H}^{-1}|}=\frac{1}{\sqrt{2} \,
3^{\frac{7}{2}}}e^{-i\frac{\pi}{4}}
\frac{\prod S_i}{|l|V^\frac{7}{2}} \quad.
\end{equation}
The last step is to combine all the previous results to obtain the final formula for the
asymptotics of the $6j$ symbol.

\subsection{Final Result} \label{sec:final_result}

In this section, we will combine the results of the previous calculations step by step. First we perform the stationary point analysis for \eqref{eq:first_order_regge}:
\begin{equation} \label{eq:inter1}
 \begin{split}
  \frac{ i}{4\pi^4}\frac{|l|\prod_i S_i^{\coh}}{\prod_{i<j}l_{ij}^{\coh}}\prod_{i<j}
&\sqrt{\sin\theta_{ij}}
\frac{(2\pi)^\frac{7}{2}}{\sqrt{{\det (-{\mathcal H})}}}
e^{i\left(\sum_{ij}\left(l_{ij}\theta_{ij} -\frac{1}{8l_{ij}}
\cot \theta_{ij}\right)+\tilde{S}_1\right)}\\
&=i\frac{2^{\frac{3}{2}}}{\pi^\frac{1}{2}}\frac{|l| \prod_i S_i^{\coh} \prod_{i<j}
\sqrt{\sin\theta_{ij}}}{\sqrt{{\det (-{\mathcal H})}}\prod_{i<j}l_{ij}^{\coh}}
e^{i\left(\sum_{ij}l_{ij}\theta_{ij}+S_1\right)} \quad ,
 \end{split}
\end{equation}
where $S_1$ is the NLO contribution. As a next step, we substitute $\sin \theta_{ij}=
\frac{3}{2} \frac{l_{ij} V}{S_i S_j}$ (for $ \sin \theta_{ij}>0$) and \eqref{eq:Hess} in
\eqref{eq:inter1}:
\begin{equation} \label{eq:inter2}
\begin{split}
\frac{2^{\frac{3}{2}}}{\pi^\frac{1}{2}}
&\frac{|l| \prod S_i^{\coh}}{\prod l_{ij}^{\coh}} \left(\frac{3}{2}\right)^{3}
\frac{\prod l_{ij}^{\coh} V^3}{\prod S_i^{\frac{3}{2}}} e^{i\frac{\pi}{4}}\frac{\prod S_i}{ \sqrt{2} \,
3^{\frac{7}{2}} |l|V^{\frac{7}{2}}}
e^{i\left(\sum_{ij}l_{ij}\theta_{ij}+S_1\right)}
\\
&=\frac{1}{2}\frac{1}{\sqrt{12 \pi \, V}}
e^{i\frac{\pi}{4}}e^{i\left(\sum_{ij}l_{ij}\theta_{ij}+S_1\right)}
\end{split}
\end{equation}
As previously discussed, the full contribution comes from two stationary points, which are related by parity transformations. Eventually, we obtain:
\begin{equation} \label{eq:PR_final_formula}
\frac{1}{\sqrt{12 \pi V}}\left(
\cos\left(\sum_{ij}l_{ij}\theta_{ij}+\frac{\pi}{4}+S_1\right)+ O\left(|l|^{-2}\right)\right)
\quad ,
\end{equation}
as in \cite{PR}. In the formula above, we implicitly assumed that $S_1$ is real. This
property will be proven in section \ref{sec:NLO-DL}.

\subsection{First order Regge calculus}
\label{sec:1st-Regge6j}

A first order formulation of Regge Calculus \cite{Barrett:1994nn,regge_new_angle} is a
discretization of General Relativity defined on the triangulation of the manifold in which
both edge lengths and dihedral angles are considered as independent variables. Its
introduction was motivated by Palatini's formulation of Relativity where equations of motion are first order differential equations.
Its action in $3$D is given by
\begin{equation}
 S_R[l_e]=\sum_e l_e \epsilon_e, \quad \epsilon_e=2 \pi - \sum_{\tau \supset e} \theta^{(\tau)}_e
\end{equation}
where $l_e$ denotes the length of the edge $e$, $\theta^{(\tau)}_e$ denotes the dihedral
angle at edge $e$ in the tetrahedron $\tau$. By $\epsilon_e$ we denote the deficit angle at
edge $e$. For every tetrahedron an additional constraint is imposed, namely
\begin{equation}
 \det \tilde{G} = 0
\end{equation}
that enter the action via a Lagrange multiplier \cite{Barrett:1994nn}.
$\tilde{G}$ is the angle Gram matrix of the tetrahedron. One can eliminate the
$\theta_e^\tau$ variables by partially solving the equations of motion (given by variations
with respect to
$\theta_e^\tau$), then
\begin{equation}
 \theta_e^\tau=\theta_e^\tau(l)
\end{equation}
turns out to be the dihedral angle at the edge $e$ for a discrete geometry determined by the
edge lengths $\{l_e\}$.

Our derivation of the $6j$ symbol asymptotics follows the same idea. It also suggests a
suitable measure in the path integral quantization for (linearized) first order Regge calculus
in order
to reobtain the factor
$\frac{1}{\sqrt{V}}$ from Ponzano-Regge asymptotics. We also hope that our methods might be applied
in the $4$D case, where a similar action, motivated by the construction of modern spin foam models,
was proposed in \cite{Dittrich:2008va}. Furthermore, the present results could naturally
provide and motivate a triangulation independent measure for first order Regge calculus following
the approach
in \cite{Dittrich:2011vz}. Examining first order and area-angle (quantum) Regge calculus in $4$D
might also give new insights into possible measures for $4$D spin foam models.

\newpage

\section{Properties of the next to leading order and complete asymptotic expansion}
\label{sec:NLO-DL}

So far, we dealt with the asymptotic expansion of a spherical spin network evaluation in the
leading order approximation and managed to work out the example of the $6j$ symbol. However,
our method allows us to derive, in principle, the full asymptotic expansion of the evaluation by
the higher order stationary point analysis, e.g. we have already mentioned the next-to-leading
order (NLO) corrections to the contribution from edges of the spin network (on the
stationary points) in section \ref{d-sec}. Such corrections improve the asymptotic behaviour in
particular for
small spins. Therefore we will apply our formalism in this section to derive new insights on the NLO
corrections (to the $6j$ symbol).

NLO order corrections to the asymptotic formula of the $SU(2)$ $6j$ symbol have been thoroughly
discussed in \cite{LD,LD2}. In particular, the authors found evidence that the leading contributions in the
expansion in $\frac{1}{l}$ are purely real and oscillating as $\cos(S_R + \frac{\pi}{4})$, whereas
the next order term (also purely real) behaves like $\sin(S_R +
\frac{\pi}{4})$, where $S_R$ denotes the Regge action for the tetrahedron. Furthermore, this
behaviour is conjectured to be alternating for consecutive orders.

We will refer to this behaviour introduced in \cite{LD,LD2} as ``Dupuis-Livine''
(DL) property and we will show that it holds
for the full expansion of the asymptotics of any evaluation of spin networks, satisfying
certain generic conditions, for example the
$6j$ symbol in the non-degenerate case. Furthermore we will derive a new recursion relation for
the $6j$ symbol which can be applied to obtain a simpler form of the next to leading order correction to the Ponzano-Regge formula.

\subsection{Properties of the Dupuis-Livine form}

In this section we will give a definition to the Dupuis-Livine form and also discuss some of
its basic properties.

Consider an asymptotic expansion in the variables $\{j\}$  of the  following form
\begin{equation}
 \sum_{i} A_k(\{j\})e^{i\sum j_i\theta_i} \quad ,
\end{equation}
where $A_k$ is a homogeneous function in all variables $j$ of degree $k+\beta$. It can be
rewritten  in terms of the variables $\{l\}$ (with $l=j+\frac{1}{2}$):
\begin{equation}
 \sum_{i} \tilde{A}_k(\{l\})e^{i\sum l_i\theta_i} \quad ,
\end{equation}
where $\tilde{A}_0=e^{-\frac{i}{2}\sum \theta_i}A_0$.

We will say that it has the Dupuis-Livine (DL) property, if it can be written as
\begin{equation}
 \tilde{A}_0(\{l\})e^{i\sum l_i\theta_i}\sum_k B_k \quad ,
\end{equation}
where $i^k B_k$ is a real and homogeneous function of degree $k$. Note that if we write
this expansion in the form
\begin{equation}
 \tilde{A}_0(l)e^{i\sum l_i\theta_i+ S}
\end{equation}
then $S$ also has DL form (and starts with degree $1$). Furthermore, suppose that two asymptotic
series $f_1$ and $f_2$ have the DL property then also
\begin{equation}
 f_1 \, f_2 \quad ,\quad \frac{1}{f_{i}}
\end{equation}
have this property. In particular the last two relations are very useful for our discussion,
since they allow us to examine the full expansion of the evaluation of the spin network in
steps: first we examine the contributions from the edges, i.e. the partial integrations over the
$\phi_{fe}$, then the normalization factors until we eventually discuss the full expansion.

\subsection{Partial integration over $\phi$} \label{sec:DLphi}

In this subsection, we will examine whether the contributions from the partial integration over
$\phi$ have the DL property. We will prove it by using a recurrence relation
similar to Bonnet's formula for Legendre polynomials.
Therefore, it will be necessary to introduce some
technical definitions, from which we are able to derive recursion relations.

\subsubsection{Weak equivalence}

Let $\psi_i=f_ie^{S}$, where $S=k S_{-1}+\ldots$, $\Re S_{-1}\leq
0$ and $f_i$ grows at most polynomially in $k$ and admits a power series expansion in $k$.

\begin{df}
$\psi_1$ is weakly equivalent to $\psi_2$ around the
point $x_0$,
\begin{equation}
 \psi_1\equiv \psi_2 \quad ,
\end{equation}
if the expansion in $k$ of the integral of both around $x_0$ is the same.
\end{df}

If $\psi=fe^S$ then
\begin{equation}
 L^*\psi\equiv 0
\end{equation}
where
\begin{equation}
 L^*\psi=L\psi+(\diver L)\psi
\end{equation}
and $L$ is a vector field.

\subsubsection{Equivalences and recursion relations}
\label{equiv:sec}

Let us introduce
\begin{align}
 L_\pm&=\cos\tilde{\theta}\sin\alpha\frac{\partial}{\partial\alpha}
\pm i\sin\tilde{\theta}\sin\beta\frac{\partial}{\partial\beta} \quad ,\\
A_\pm&=\cos\tilde{\theta}\cos\alpha\pm i\sin\tilde{\theta}\cos\beta
\end{align}
that we regard as vector fields and functions of the variables $\alpha,\beta$. It is
straightforward to calculate
\begin{equation}
 {\diver L_\pm}=A_\pm \quad ,
\end{equation}
and
\begin{equation}
 \begin{split}
  L_+A_+=L_-A_-&=\frac{1}{2}A_+^2+\frac{1}{2}A_-^2-\cos 2\tilde{\theta}\\
  L_-A_+=L_+A_-&=A_+A_--1 \quad .
 \end{split}
\end{equation}
Starting from $L^*_{\pm} A_+^k \equiv 0$ and using the above identities, we derive the
following relation (see appendix \ref{weak:sec} for more details):
\begin{equation} \label{eq:rec-rel_A}
 -\frac{(k+2)^2}{k+1}(A_+)^{k+2}+2(k+1)\cos 2\tilde{\theta}\ A_+^k-(k-1)A_+^{k-2}\equiv 0 \quad .
\end{equation}
Therefore we introduce the following quantity:
\begin{equation} \label{eq:P_l-def}
 \tilde{P}_l=\frac{1}{C_j}A_+^{2j} \quad ,
\end{equation}
where $l=j+\frac{1}{2}$ and $C_j$ is given by \eqref{eq:C_j-first}:
\begin{equation} \label{eq:C_j-def}
 C_j = \frac{1}{4^j} \frac{\Gamma(2j+1)}{\Gamma(j+1)^2} \quad .
\end{equation}
Furthermore $C_j$ admits a complete expansion in $j$, see also appendix \ref{sec:C_j}:
\begin{equation}
 C_j= \frac{1}{\sqrt{\pi
j}}\left(1+O\left(
\frac{1}{j} \right)\right) \quad .
\end{equation}
Moreover, one can show that
\begin{equation} \label{eq:C_j-rel}
 C_{j+1}=\frac{2j+1}{2j+2}C_j\quad ,\quad C_{j-1}=\frac{2j}{2j-1}C_j \quad .
\end{equation}
Combining \eqref{eq:P_l-def}, \eqref{eq:C_j-def} and \eqref{eq:C_j-rel} with
\eqref{eq:rec-rel_A} and substituting $k=2j$ in \eqref{eq:rec-rel_A} we obtain:
\begin{equation}
\begin{split}
0\equiv &2 C_j\left[-\frac{(2j+2)(j+1)}{2j+1}\frac{2j+1}{2j+2}\
\tilde{P}_{l+1}+(2j+1)\cos 2\tilde{\theta}
\tilde{P}_l-\frac{2j-1}{2}\frac{2j}{2j-1}\tilde{P}_{l-1}\right]=\\
=&2 C_j \left[ -\left(l+\frac{1}{2}\right)\tilde{P}_{l+1}+2l\cos
2\tilde{\theta}
\tilde{P}_l-\left(l-\frac{1}{2}\right)\tilde{P}_{l-1}\right] \quad .
\end{split}
\end{equation}
But $C_j$ admits a nonzero asymptotic expansion, thus
\begin{equation} \label{eq:Pl_recursion}
  -\left(l+\frac{1}{2}\right)\tilde{P}_{l+1}+2l\cos 2\tilde{\theta}
\tilde{P}_l-\left(l-\frac{1}{2}\right)\tilde{P}_{l-1}\equiv 0
\end{equation}
around any stationary point. With the definitions given here, \eqref{eq:edge_action}, i.e.
the amplitude associated to one edge, becomes:
\begin{equation}
\frac{1}{8 \pi^2} \int d\alpha d\beta A_+^{2j} = \frac{C_j}{8 \pi^2} \int d\alpha d\beta \tilde{P}_l \quad ,
\end{equation}
which establishes the connection to our previous calculations.

Let us notice that \eqref{eq:Pl_recursion} is exactly Bonnet's recursion formula for Legendre
polynomials.

\subsubsection{Total expansion and DL property}

Over any stationary point we have shown that the integral of $\tilde{P}_l$ can be expanded as
\begin{equation}
(-1)^s \sum_{k\geq 0}
\frac{e^{il\theta}}{l^{k+\frac{1}{2}}}A_k(\theta)+O(l^{-\infty})
\quad ,
\end{equation}
where $\theta=2\tilde{\theta}$ is now the $SO(3)$ angle and $s$ is a sign factor
that comes from the $SU(2)$ angle\footnote{ Values of the integral for $\tilde{\theta}$ and $\tilde{\theta}+\pi$ differ by the
factor $(-1)^s$. This restricts $A_k$ to be of the form described above.}.
Moreover we know from the previous section that
\begin{equation} \label{eq:rec-Legendre}
  -\left(l+\frac{1}{2}\right)\tilde{P}_{l+1}+2l\cos \theta
\tilde{P}_l-\left(l-\frac{1}{2}\right)\tilde{P}_{l-1}\equiv 0 \quad .
\end{equation}
Applying the asymptotic form to the recursion relations, we obtain:

\begin{lm}\label{ergod3}
For every $m\geq 0$
\begin{equation} \label{eq:lemma1}
 \sum_{k \leq m}(2\beta^k_{m+1-k}+\beta^k_{m-k})A_k i^{m+1-k}\sin\left(\theta -
\frac{\pi}{2}(m-k)\right)=0 \quad ,
\end{equation}
where
\begin{equation}
 \beta^k_{m}=\frac{(-k- \frac{1}{2})_{m}}{ m!}\in {\mathbb R} \quad ,
\end{equation}
and
\begin{equation}
(a)_m = a \cdot (a-1) \cdot \ldots \cdot (a - m + 1),\quad (a)_0 = 1 \quad .
\end{equation}
\end{lm}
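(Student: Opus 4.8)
The plan is to substitute the asymptotic expansion of $\int\tilde P_l$ obtained above into the integrated form of Bonnet's recursion \eqref{eq:rec-Legendre} and to extract the stated identity by matching coefficients of $l^{-m-\frac12}$. Since weak equivalence $\equiv 0$ means precisely that the asymptotic series of the corresponding integral about the stationary point vanishes identically, I would first integrate \eqref{eq:rec-Legendre} about the (geometry-dependent, hence $l$-independent) point $\alpha=\beta=0$ and insert $\int\tilde P_l\sim(-1)^s\sum_{k\ge0}l^{-k-\frac12}e^{il\theta}A_k(\theta)$. The prefactor $(-1)^s$ is the same for $l$ and $l\pm1$ (it only records the behaviour under $\tilde\theta\mapsto\tilde\theta+\pi$, equivalently the parity of $2j$, which is unchanged by $j\mapsto j+1$), so it cancels, as does the common factor $e^{il\theta}$, leaving
\begin{equation}
-\Big(l+\tfrac12\Big)e^{i\theta}\sum_{k\ge0}\frac{A_k}{(l+1)^{k+\frac12}}+2l\cos\theta\sum_{k\ge0}\frac{A_k}{l^{k+\frac12}}-\Big(l-\tfrac12\Big)e^{-i\theta}\sum_{k\ge0}\frac{A_k}{(l-1)^{k+\frac12}}\sim0 .
\end{equation}

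Next I would expand everything in $1/l$. Writing $l\pm\tfrac12=l\,(1\pm\tfrac1{2l})$ and
\begin{equation}
(l\pm1)^{-k-\frac12}=l^{-k-\frac12}\sum_{n\ge0}\beta^k_n\,(\pm1)^n\,l^{-n},\qquad\beta^k_n=\frac{(-k-\tfrac12)_n}{n!},
\end{equation}
with $\beta^k_n$ exactly the real numbers appearing in the statement, and collecting the coefficient of $l^{-m-\frac12}$. The structurally crucial observation is that the term involving $A_{m+1}$ drops out at this order: it enters only through the $n=0$ parts of the two outer sums (where $\beta^{m+1}_0=1$), contributing $-e^{i\theta}A_{m+1}-e^{-i\theta}A_{m+1}$, and through the middle term, contributing $2\cos\theta\,A_{m+1}$, and $2\cos\theta=e^{i\theta}+e^{-i\theta}$. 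Hence the order-$l^{-m-\frac12}$ relation is a genuine linear constraint on $A_0,\dots,A_m$ alone, which is what the subsequent Dupuis-Livine argument needs. What survives is a sum over $k\le m$ in which $A_k\beta^k_{m+1-k}$ (from the leading $l\cdot$ parts of $-(l\pm\tfrac12)e^{\pm i\theta}$ paired with the $n=m+1-k$ term) carries the factor $-e^{i\theta}+(-1)^{m-k}e^{-i\theta}$, and $A_k\beta^k_{m-k}$ (from the $\pm\tfrac12$ parts paired with the $n=m-k$ term) carries one half of the same factor.

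Finally I would repackage this into the alternating sine form. Using $i^{\pm p}=e^{\pm i\pi p/2}$ and $i^{2p}=(-1)^p$ one verifies at once that
\begin{equation}
-e^{i\theta}+(-1)^{p}e^{-i\theta}=-2\,i^{\,p+1}\sin\!\Big(\theta-\frac{\pi}{2}p\Big),
\end{equation}
and specialising to $p=m-k$ (so that $i^{\,p+1}=i^{\,m+1-k}$) turns the surviving equation, up to an irrelevant overall constant, into
\begin{equation}
\sum_{k\le m}\big(2\beta^k_{m+1-k}+\beta^k_{m-k}\big)\,A_k\,i^{\,m+1-k}\,\sin\!\Big(\theta-\frac{\pi}{2}(m-k)\Big)=0 ,
\end{equation}
which is \eqref{eq:lemma1}. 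The only points requiring care are the legitimacy of reading coefficients off a weak-equivalence relation (immediate, since the integral of anything $\equiv0$ has identically vanishing asymptotic series), the cancellation of the $A_{m+1}$ term, and this last trigonometric identity, which is exactly the mechanism that makes the $\sin/\cos$ (Dupuis-Livine) alternation manifest; none of these is deep, so the real work is careful index bookkeeping in the $1/l$ expansion.
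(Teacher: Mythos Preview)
Your proposal is correct and follows essentially the same route as the paper: substitute the asymptotic expansion into the integrated Bonnet recursion \eqref{eq:rec-Legendre}, expand $(l\pm1)^{-k-\frac12}$ via the generalized binomial series with coefficients $\beta^k_n$, and read off the coefficient of $l^{-m-\frac12}$. The only organisational difference is that the paper splits the recursion as $-l(\tilde P_{l+1}+\tilde P_{l-1}-2\cos\theta\,\tilde P_l)-\tfrac12(\tilde P_{l+1}-\tilde P_{l-1})$ and treats the two pieces separately (which makes the $A_{m+1}$ cancellation implicit in the first piece via the $m=k$ term of the bracket), whereas you keep the full expression and isolate the $A_{m+1}$ cancellation explicitly; the resulting identity and the trigonometric repackaging $-e^{i\theta}+(-1)^{m-k}e^{-i\theta}=-2\,i^{\,m+1-k}\sin(\theta-\tfrac{\pi}{2}(m-k))$ are the same.
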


We will prove the lemma in appendix \ref{proof:sec}.

Consider the case where $k=m$ in \eqref{eq:lemma1}. For any $m \geq 0$ one
obtains that
\begin{equation}
 2\beta^m_{1} + \beta^m_0= -2 \left(m + \frac{1}{2}\right) + 1 = -2m \quad ,
\end{equation}
such that \eqref{eq:lemma1} can be rewritten in the following way:
\begin{equation} \label{eq:rewritten}
 2 m A_m i \sin(\theta)=\sum_{k<m} (2\beta^k_{m+1-k} + \beta^k_{m-k}) A_k
i^{m+1-k} \sin\left(\theta-\frac{\pi}{2}(m-k)\right) \quad .
\end{equation}
Let us introduce
\begin{equation}
B_m := A_m i^{-m} e^{i\frac{\pi}{4}{\rm sign}\sin\theta} \quad .
\end{equation}
From the asymptotics of the integrations over
$\tilde{P}_l$ follows that $B_0 \in
\mathbb{R}$ and \eqref{eq:rewritten} can be rewritten as
\begin{align}
2 m  \underbrace{A_m i^{-m} e^{i \frac{\pi}{4}{\rm sign \sin(\theta)}}}_{=B_m}
\sin(\theta) = & \sum_{k < m} \left(2 \beta^k_{m+1-k} + \beta^k_{m-k} \right)
\underbrace{A_k i^{-k} e^{i \frac{\pi}{4} {\rm sign \sin(\theta)} }}_{=B_k}
\sin\left( \theta - \frac{\pi}{2} (m-k) \right)\nonumber \\
\label{eq:rec-rel}
\iff 2 m B_m \sin(\theta) =& \sum_{k<m} \left(2 \beta^k_{m+1-k} +
\beta^k_{m-k}\right) B_k \sin\left(\theta-\frac{\pi}{2}(m-k)\right) \quad .
\end{align}
This implies that all $B_k\in {\mathbb R}$  and it proves that the asymptotic
terms (in the connected component expansion - $e^S$) are of the form
\begin{equation}
 \tilde{A_k}\in i^k{\mathbb R}\text{ for } k>0 \quad .
\end{equation}
This proves that the contributions from the integration  over $\phi$ evaluated on the points of
stationary phase are of DL form.

\subsubsection{The total expansion of the original integral}
\label{sec:total-phi}

We know that the total expansion of the original integral around the
stationary point is of the form given in \eqref{eq:first-exp}.
Using the recurrence relation \eqref{eq:rec-rel} we can compute its next-to-leading order:
\begin{equation}
 C_j\frac{(-1)^{s}}{4 \sqrt{2 \pi l \left| \sin \theta \right|}}
e^{i(  l\theta  -  \frac{\pi}{4} \text{sign}(\sin
\theta)-\frac{1}{8l}\cot\theta)}
\left(1+O\left(\frac{1}{l^2}\right)\right) \quad .
\end{equation}
As a next step, we will examine whether the normalization factors computed from the self-contraction
of intertwiners is of DL form as well.

\subsection{Different forms of intertwiners and DL property}

To examine whether the normalization factors satisfy the DL property, we will construct different
forms of invariants. Since the (three-valent) intertwiner is unique, all new constructions
are proportional to the original one.

Let $U_i$ be distinct group elements from a sufficiently small neighbourhood of the identity. Let
\begin{equation}
\begin{split}
 C_{U_i,f}=&\int dU\int \prod\frac{d\phi_i}{2\pi} f(\phi_1,\phi_2,\phi_3)\\
&UU_1O_{\phi_1}|\coh\rangle^{2j_{e_1}}\otimes
UU_2O_{\phi_2}|\coh\rangle^{2j_{e_2}}\otimes
UU_3O_{\phi_3}|\coh\rangle^{2j_{e_3}}
\end{split}
\end{equation}
be the new invariant, where $f$ is such a function that it is constant in the neighbourhood of the
angles, which satisfy the stationary phase conditions, i.e. where all
\begin{equation}
 B_i=j_{e_i}UU_1O_{\phi_1}\begin{pmatrix}
                    1 &0\\ 0 &-1
                   \end{pmatrix}(UU_1O_{\phi_1})^{-1}
\end{equation}
sum to zero. We will choose $U_i$ in such a way (described below) that such points are separated. In such a case we can choose $f$ to be nonzero around only one of them.

Let us now describe $U_i$. For given three vectors $B_i$ in the plane
perpendicular to $H=\left(\begin{array}{c c} 0 & i\\ -i & 0  \end{array}\right)$ (see also
section \ref{sec:construction_invariants} for more details) such that
\begin{equation}
 \sum B_i=0 \quad ,
\end{equation}
we choose $U_i$ in the neighborhood of identity such that $U_iHU_i^{-1}\perp
B_i$. There are many such choices which will be used in
the sequel.

Let us take contraction of such a $C_{U_i,f}$ with the intertwiner $C_{f'}$ obtained with the help of modifiers.
\begin{equation}
 \left(C_{U_i,f},C_{f'}\right)
\end{equation}
Due to the definition of $C_{U_i,f}$, there is only one $-u$ and $o_f$ orbit of
stationary points on which $f$ and $f'$ are nonzero.
These are given by the conditions
\begin{equation}
\begin{split}
 &B_{fe_i}=-B_i \quad ,\\
&n_f\perp B_{fe_i} \quad ,\\
&\underbrace{UU_iHU_i^{-1}U^{-1}}_{n_{f'}}\perp B_{fe_i} \quad .
\end{split}
\end{equation}
Hence, on the stationary point $U$ is of the form
\begin{equation}
 U=\begin{pmatrix}
    \cos\alpha &\sin\alpha\\ -\sin\alpha &\cos\alpha
   \end{pmatrix} \quad .
\end{equation}
If we choose $U_i$ in such a way that
\begin{equation}
 U_iHU_i^{-1}\cdot H\not=\pm 1 \quad,
\end{equation}
i.e. the two normal vectors are not (anti)parallel, then also
\begin{equation}
 UU_iHU_i^{-1}U^{-1}\cdot H\not=\pm 1 \quad .
\end{equation}
This guarantees that this configuration is non-degenerate, such that the partial integration
over $\phi$  (see section \ref{sec:DLphi}) and the fixing of the $o_f$
and $u_f$ symmetry can be performed. Following the same method as presented in section
\ref{sec:DLphi} we prove that the asymptotic expansion of $\left(C_{U_i,f},C_{f'}\right)$ has
the DL
property. Similar considerations apply to
\begin{equation}
 \left(C_{U_i^1,f^1},C_{U_i^2,f^2}\right)
\end{equation}
if $U_i^1H(U_i^1)^{-1}\cdot U_i^2H(U_i^2)^{-1}\not=\pm 1$.

Finally, using the uniqueness of the intertwiner, we obtain
\begin{equation} \label{eq:DLclebsch}
 \left(C_{f},C_{f}\right)=\pm\frac{\left(C_{f},C_{U_i^2,f^2}\right)\left(C_{
U_i^1,f^1},C_{f}\right)}{\left(C_{U_i^1,f^1},C_{U_i^2,f^2}\right)} \quad .
\end{equation}
As a product of functions whose asymptotic expansion is of DL form, it follows directly that
\eqref{eq:DLclebsch} is of DL form, too.

\subsection{Leading order expansion and a recursion relation for the $6j$ symbol} \label{sec:NLO_expansions}

In the two previous sections we have shown that both the contributions from partial integrations
over $\phi$ and the normalization factors satisfy the DL property. Hence using properties explained
in appendix \ref{DL-app} we have proven the conjecture from \cite{LD,LD2}.

In this section we will discuss the next-to-leading order expansion for the $6j$-symbol.
Therefore we do a brief recap of the results of section \ref{sec:first-order}.

From the stationary point (with outward pointing normals) we have contributions from the Hessian,
i.e. the kinetic term, and higher order terms, which are computed using a Feynman diagrammatic
approach:
\begin{equation}
 \propto\frac{1}{\sqrt{ \det({-\mathcal H})  }}e^{i\sum l_{ij}\theta_{ij}+S_1} \quad ,
\end{equation}
where $S_1$ are the evaluations of the connected Feynman diagrams of the expansion in
$\{\theta, \rho \}$ evaluated on the stationary point of the action $i\sum l_{ij}\theta_{ij}
$, using $-{\mathcal H}^{-1}$ as the propagator of this theory. We are interested only in
$|l|^{-1}$ contributions, the respective Feynman rules are briefly discussed in  appendix
\ref{sec:Feyn}.

The expansion up to the next to leading order is of the form
(see also section \ref{sec:final_result}):
\begin{equation}
 \frac{1}{2} \frac{1}{\sqrt{12 \pi V}}
e^{i\left(\sum_{ij}\left(l_{ij}\theta_{ij}-\frac{1}{8l_{ij}}
\cot \theta_{ij}\right)+\tilde{S}_1\right)}
= \frac{1}{2} \frac{1}{\sqrt{12 \pi V}}
e^{i\left(\sum_{ij}l_{ij}\theta_{ij}+S_1\right)}
\end{equation}
where $S_1$ is of order $|l|^{-1}$.
The full contribution comes from two stationary point that are related via parity
transformation, see also section \ref{parity}; their contributions are related by complex
conjugation. Hence, we obtain up to $|l|^{-1}$:
\begin{equation}
\frac{1}{\sqrt{12 \pi V}}\left(
\cos\left(\sum_{ij}l_{ij}\theta_{ij}+\frac{\pi}{4}+S_1\right)+O\left(|l|^{ -2}\right)\right)
\end{equation}
The next to leading order expansion is briefly described in appendix \ref{sec:Feyn}.
Although, this method is algorithmically more involved than the method proposed in
\cite{LD,LD2}, the final expression is also more geometric. We will now derive a recursion relation for the full $6j$ symbol using a similar idea as in \cite{Bonzom,Bonzom2} that, we hope, can serve to compute the NLO expansion in more concise way.

\subsubsection{Recursion relation for $6j$ symbols}

In this section we derive a recursion relation for the whole $6j$ symbol. First, let us introduce a multiplication operator
\begin{equation}
 N(l)=\sqrt{\prod_i \Theta_i(l)}
\end{equation}
where $\Theta_i$ is normalization (of a three-valent intertwiner) computed from the Theta graph. Furthermore we define
the operator $T^{v}_{ij}$ via its action on a function of edge lengths $\{l\}$:
\begin{equation}\label{eq:T_ij}
 T^v_{ij} \, C(l)=\left(1+v\frac{1}{2l_{ij}}\right)C(\{l_{km}+v\delta_{(ij)(km)}\}) \quad .
\end{equation}
We assume that $T^v_{ii}=1$.

As a next step, recall the definition of $\tilde{P}_l$ \eqref{eq:P_l-def} and its recursion relation
\eqref{eq:Pl_recursion}. The latter can be written as follows:
\begin{equation}
  \cos {\theta}
\tilde{P}_l\equiv
\left(\frac{1}{2}+\frac{1}{4l}\right)\tilde{P}_{l+1}+
\left(\frac{1}{2}-\frac{1}{4l}\right)
\tilde { P }_{l-1} \quad .
\end{equation}
and we can write the non-normalized $6j$ amplitude as
\begin{equation} \label{eq:Z_Pl}
 Z''(l)=\int \prod
\rd \theta_{ij}
\prod_{(ij)}\sin\theta_{ij}
\prod_{(ij)} \tilde{P}_{l_{ij}}(\theta_{ij})
\delta(\det\tilde{G}) \quad .
\end{equation}
In order to derive the recursion relation, we insert an additional $\det \tilde{G}$ into \eqref{eq:Z_Pl}:
\begin{equation} \label{eq:Z_Pl_detG}
 \int \prod
\rd \theta_{ij}\det\tilde{G}
\prod_{(ij)}\sin\theta_{ij}
\prod_{(ij)} \tilde{P}_{l_{ij}}(\theta_{ij})
\delta(\det\tilde{G})=0 ,
\end{equation}
since $\det \tilde{G}$ is constrained to vanish. Similar to \cite{Bonzom,Bonzom2}, $\det \tilde{G}$ can be expanded as a sum over perturbations:
\begin{equation} \label{eq:exp_detG}
\det \tilde{G} = \sum_{\sigma \in S_4} \text{sgn} \sigma \frac{1}{16} \sum_{\vec{v} \in \{-1,1\}^4} e^{i v_i \theta_{i \sigma_i}} \quad , 
\end{equation}
with the convention that $\theta_{ij} = \theta_{ji}$ and $\theta_{ii} =0$. Using \eqref{eq:exp_detG}, equation \eqref{eq:Z_Pl_detG} can be rewritten as:
\begin{equation} \label{eq:recursion_Z_Pl}
 \sum_{\sigma\in S_4} \sign\sigma
\frac{1}{16}\sum_{\vec{v}\in\{-1,1\}^4} \prod_i T^{v_i}_{i\sigma_i}Z''(l)=0 \quad .
\end{equation}
On the other hand, we know from previous calculations that
\begin{equation} \label{eq:6j_expansion}
 \{6j\}\equiv N^{-1}Z''(l)+c.c+O(l^{-\infty}) \quad ,
\end{equation}
such that we can summarize both \eqref{eq:recursion_Z_Pl} and \eqref{eq:6j_expansion} into the following recursion relation for the $6j$ symbol\footnote{The recursion relation has been verified numerically for several $6j$ symbols.}:
\begin{equation}
 \det\left[\frac{{T}^1_{ij}+{T}^{-1}_{ij}}{2}\right]N\{6j\}\equiv 0 \quad ,
\end{equation}
where $T^v_{ij}$ is defined as in \eqref{eq:T_ij}.

Another useful form is the following
\begin{equation}
 \sum_{\sigma\in S_4} \sign\sigma
\frac{1}{16}\sum_{\vec{v}\in\{-1,1\}^4}  \frac{N(l+v_{i\sigma_i})}{N(l)}
\left(\prod_iT^{v_i}_{i\sigma_i}\right)\{6j\}\equiv 0 \quad ,
\end{equation}
since the expansion of $\frac{N(l+v_{i\sigma_i})}{N(l)}$ is straightforward to compute. We have to point out though that the coefficients in this formula are not rational, yet they allow for nice a asymptotic expansion. Thus they should in principle allow for the computation of the
higher order expansions of the $6j$ symbol.

\section{Discussion and outlook} \label{sec:discussion}

Coherent state approaches are the only available tools so far to successfully compute the
asymptotic expansion of spin foam models \cite{Conrady-Freidel,Frank,Frank2,Frank3,FrankEPRL,Frank3D}, which gives us a first, and yet, very
incomplete understanding of the relation of spin foam models to gravity. The strength and beauty of
this approach is its clear geometrical interpretation and straightforward computation of the
dominating phase of the expansion, which is identified as the Regge action of the examined
triangulation. Despite these successes, the approach usually fails in the computation of the
determinant of the Hessian matrix, which provides the normalization to the path integral and, more
importantly, a measure on the space of geometries.

To overcome this drawback, we have introduced modified coherent states, i.e. states labelled by null
eigenvectors with respect to a generator of rotations, smeared perpendicular to the axis of
rotation. We have shown that these states allow for the same geometrical interpretation as the usual
$SU(2)$ coherent states and presented a method to deal with the (due to the smearing) increased
number of stationary points. This allowed us to
derive the well-known asymptotic expansion of the $SU(2)$ $6j$ symbol \cite{PR} entirely, by
computing its amplitude in the stationary phase approximation, first with respect to the smearing
parameters and second, after a variable transformation, with respect to the dihedral angles of the
tetrahedron.
In the process, we have discovered that the resulting amplitude is proportional to the action of the
first order formulation of Regge calculus, a result that
supports the conjecture given in \cite{Dittrich:2008va} that $4$D spin foam models can be better
described by angle and area variables instead of only edge lengths, the fundamental variables of
ordinary Regge calculus. This result could also stimulate new work following the ideas of
\cite{Dittrich:2011vz} to obtain an invariant  path integral measure (under Pachner moves \cite{Pachner1,Pachner2}) for
first order Regge
calculus and to compare it to spin foam models.

In addition to this result, we also extended the calculation to the next to leading order
correction for the $6j$ symbol. We have been able to prove the conjecture presented in \cite{LD,LD2}
that the higher order
corrections are alternatingly oscillating with the cosine or the sine of the Regge action, and
furthermore we can, in principle, calculate the asymptotic expansion up to arbitrary order. Despite
this success, we are not able to present the next-to-leading order in a short and concise way.
This is a nuisance of all known derivations of next-to-leading order expansion, see for
example \cite{LD,LD2}. However, we derived a recursion relation for the $6j$ symbol, very similar in nature to
the one in \cite{Bonzom,Bonzom2}, that can in principle be used to obtain more concise form of the next to leading order term.

The main goal of this work was not the derivation of known results, but to develop and advertise a
new coherent state method, which is capable of challenging the determination of the measure in spin
foam models \cite{Conrady-Freidel,Frank,Frank2,Frank3,FrankEPRL}. The computation of the full asymptotic expansion (even only up to leading
order) would not only increase the understanding of spin foam
models, but could also give a measure on the space of geometries, which could be compared to the
proposed measure in \cite{Dittrich:2011vz}. Given such a measure, one would be able to examine which
geometries dominate the spin foam transition amplitudes in the various models, which could also be
used to exclude some of them. Our successful and complete derivation of the asymptotic
expansion of the $SU(2)$ $6j$ symbol is a good start, however the method still has to prove itself
by tackling more complicated models. Therefore, two issues have to be overcome:

The first problem is to extend the presented coherent state approach to groups with non-unique
intertwiners. Our calculations are heavily based on the fact that the
intertwiner of three irreducible representations of $SU(2)$ is unique, which simplified the
construction of our model. The only $4$D spin foam model with unique intertwiners is the
Barrett-Crane model \cite{Barrett-Crane}, which has already been ruled out as a viable quantum gravity theory. Nevertheless, our calculations presented in this work can be applied and can lead to interesting new insights \cite{BC-paper}.

The second problem is common to all coherent state approaches to spin foam models so far; all the
known calculations are restricted to one simplex of the triangulation. To extract the asymptotic
expansion for larger triangulations and to examine possible invariances under (local) changes of the
triangulation like Pachner moves is still an open issue. In this work, before computing the
asymptotic expansion of the $6j$ symbol, we have kept the discussion as general as possible. It
would be interesting to examine, whether the relation to the first order formulation of Regge
calculus can also be found in larger triangulations or whether one obtains modifications, which
could be understood as quantum gravity effects.

At the end we would like to point out that the application of our method to the case of
the non-compact group $SL(2,\mathbb{R})$ is rather straightforward and we leave
the determination of the $3$D Lorentzian $6j$ symbol for future investigations.

\subsection*{Acknowledgement}
 The authors would like to thank Frank Hellmann, Jerzy Lewandowski and Krzysztof Meissner for
fruitful discussions,
and especially Bianca Dittrich for a lot of valuable comments
and for showing us the paper \cite{Freidel}. We would also like to acknowledge
Matteo Smerlak who focused our attention to the NLO expansion problem. W.K. acknowledges
partial
support by the grant ``Maestro'' of  Polish Narodowe Centrum Nauki nr 2011/02/A/ST2/00300 and the grant of Polish Narodowe Centrum Nauki number 501/11-02-00/66-4162. S.St. gratefully acknowledges a stipend by the DAAD (German Academic Exchange Service). This research was supported in part by Perimeter Institute for Theoretical Physics. Research at Perimeter Institute is supported by the Government of Canada through Industry Canada and by the Province of Ontario through the Ministry of Research and Innovation.


\appendix

\section{Spin network evaluation and sign convention}
\label{sec:sign}

This appendix is devoted to the sign issue. We will show how one can determine the total
sign of our formula using the prescription of \cite{Penrose,Penrose2}.

\subsection{Penrose prescription for spherical graph}
\label{sec:prescription}

In this section we will describe a canonical way to evaluate spherical (planar) spin
networks. Let us draw it on the $2$-sphere such that no edges intersect; if the spin network is 2-line irreducible
there are two distinct ways
to do so, which differ by orientation. The result of the evaluation does not
depend on this choice.
\begin{figure}[ht]
\begin{center}
\includegraphics[scale=0.5]{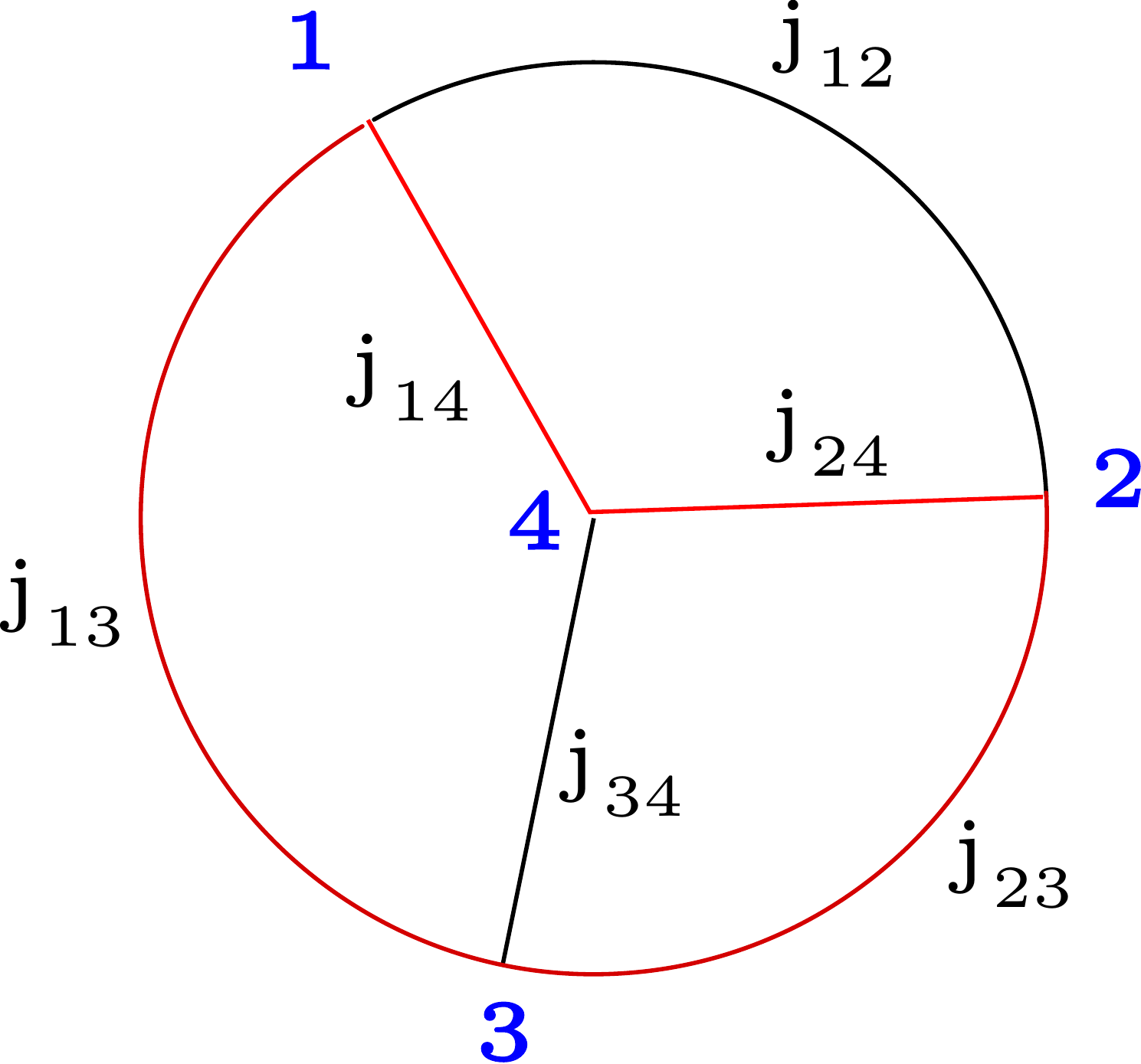}%
\caption{Orientation of intertwiners inherited from orientation of the sphere (plane).
Half-integer spins colored red.}
\label{pic-sphere-orient}
\end{center}
\end{figure}
For every node of the graph (a face in the dual picture) we have a natural cyclic order inherited from
the orientation of the sphere. In the second
step we choose any ordering of nodes (faces). This gives a natural orientation of the edges; they start in
nodes lower in the order and end in nodes higher in the order. We draw the graph on the plane as
shown on figure \ref{pic-sphere-orient}
such that the order of the nodes is preserved and the order of legs in every node is
consistent with the cyclic
order obtained above.

In the third step we count the number of crossings $s$ of half-integer edges with each other.
\begin{figure}[ht]
\begin{center}
\includegraphics[scale=0.5]{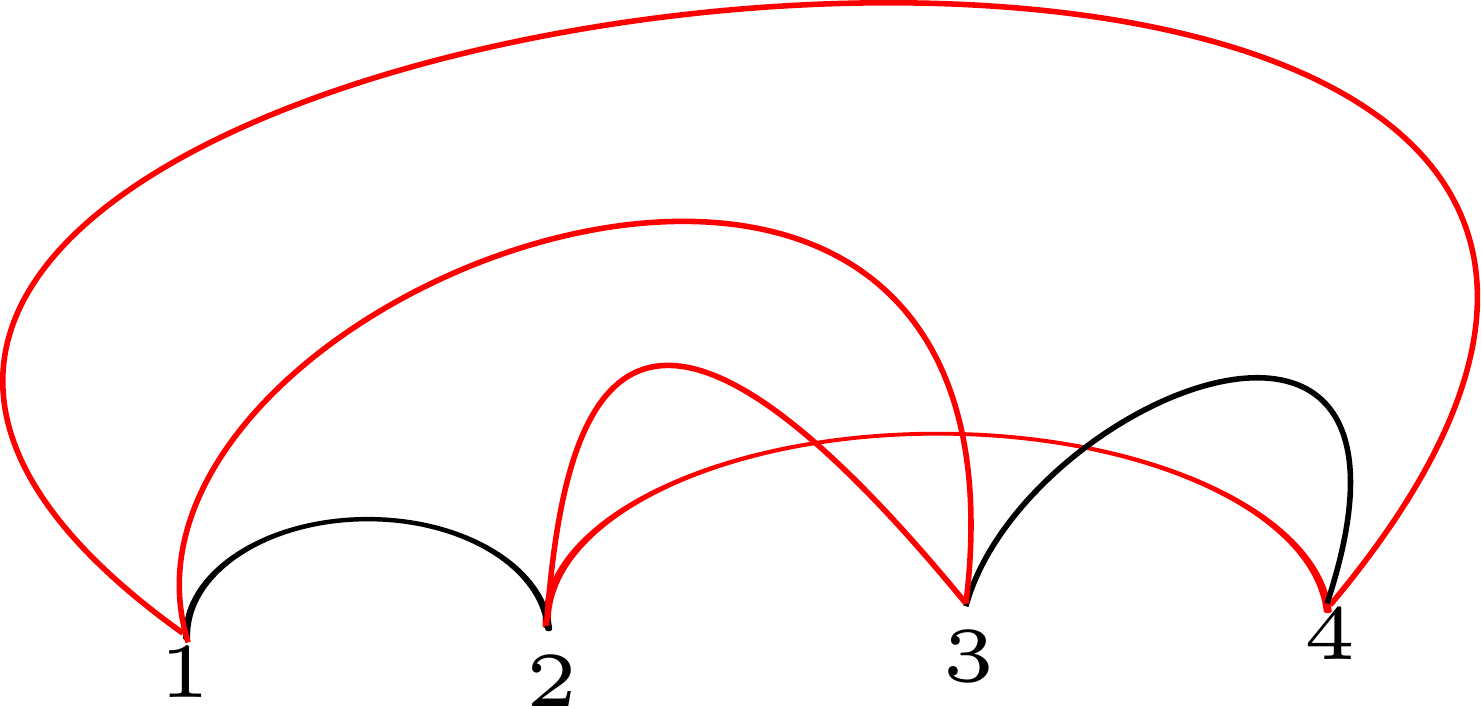}%
\caption{Nodes are in the right order and for each intertwiner the legs are in the right cyclic order. The number of
crossings for half-integer edges is $s=2$.}
\label{pic-evaluation}
\end{center}
\end{figure}
The spin network is evaluated by contracting invariants, given for every node, by using the
$\epsilon$ bilinear form oriented according to the edge orientation inherited from the nodes. The
ordering of the legs is as in figure \ref{pic-evaluation}:
\begin{equation}
 \prod_{e} \epsilon^{j_e}_{A_{s(e)e}A_{t(e)e}}\prod_{v} I_v^{A_{ve_1}A_{ve_2}A_{ve_3}} \quad .
\end{equation}
These invariants are described in \cite{Penrose,Penrose2} (see also section \ref{sec:Theta}).
One can show that the result does not depend on the made choices.

\subsection{Sign factors and spin structure}
\label{sec:sign-total}

In this section we will show how to compute the sign factor for spherical graphs.
First of all, let us notice that in the case that all $j$ are integers, the sign disappears
completely. We will prove now that this is also the case in general. Explicitly we will prove that
(see \ref{sec:final-form} for the definitions)
\begin{equation}
 s+\sum_f s_f+\sum s_e=0\ \text{mod}\ 2\ .
\end{equation}

\subsubsection{Sign factor in the intertwiner}
\label{sec:sign_intertwiner}

In this section we compute the sign $s_f$. In order to do this, we compare our invariant with the
one from \cite{Penrose,Penrose2} (given for a fixed order of
$j_1,j_2,j_3$). The dual  of the latter
is given on vectors $\xi_1^{2j_1}\otimes \xi_2^{2j_2}\otimes \xi_3^{2j_3}$ by the formula
\begin{equation}\label{eq:Penrose-inv}
 (-1)^{j_1+j_3-j_2}\,C\,\epsilon(\xi_1,\xi_2)^{j_1+j_2-j_3}
\epsilon(\xi_2,\xi_3)^{j_2+j_3-j_1}
\epsilon(\xi_3,\xi_1)^{j_1+j_3-j_2} 
\end{equation}
with normalization $C>0$ \cite{Penrose,Penrose2}.

The contraction of \eqref{eq:Penrose-inv} with our invariant is given by:
\begin{equation}\label{eq:contr-inv}
 (-1)^{j_1+j_3-j_2}C\int \frac{d\phi_1\ d\phi_2\ d\phi_3}{(2\pi)^3}
\epsilon(v_{\phi_1},v_{\phi_2})^{j_1+j_2-j_3}
\epsilon(v_{\phi_2},v_{\phi_3})^{j_2+j_3-j_1}
\epsilon(v_{\phi_3},v_{\phi_1})^{j_1+j_3-j_2} \quad ,
\end{equation}
where
\begin{equation}
 v_{\phi}=\begin{pmatrix}
           \cos\phi \\ \sin\phi
          \end{pmatrix}, \quad \epsilon(v_{\phi},v_{\phi'})=\sin(\phi'-\phi) \quad ,
\end{equation}
and we skipped the integration over $U$, since \eqref{eq:Penrose-inv} is invariant. Let us recall
our notation:
\begin{equation}
 \psi_{ij}=\phi_i-\phi_j\quad .
\end{equation}
After a change of variables
\begin{equation}
 (\phi_1,\phi_2,\phi_3)\rightarrow (\psi_{21},\psi_{32}, \phi_1)
\end{equation}
and performing one trivial integration over $\phi_1$, \eqref{eq:contr-inv} is equal to
\begin{equation}
 (-1)^{j_1+j_3-j_2}C\int \frac{d\psi_{21}\ d\psi_{32}}{(2\pi)^2}
(\sin\psi_{21})^{j_1+j_2-j_3}
(\sin\psi_{32})^{j_2+j_3-j_1}
(\sin\psi_{13})^{j_1+j_3-j_2} \quad ,
\end{equation}
with the constraint $\psi_{21}+\psi_{32}+\psi_{13}=0$.

As the expression is real (since $j_i+j_k-j_l$ is an integer), in the asymptotic limit it is
dominated by the
stationary point (maxima of the integral) of the action
\begin{equation}
(j_1+j_2-j_3)\ln|\sin\psi_{21}|+
(j_2+j_3-j_1)\ln|\sin\psi_{32}|+
(j_1+j_3-j_2)\ln|\sin\psi_{13}|+\rho(\psi_{21}+\psi_{32}+\psi_{13}) \quad ,
\end{equation}
where $\rho$ is a Lagrange multiplier and $\psi_{21},\psi_{32},\psi_{13}$ are treated as independent
variables. The stationary point condition reads
\begin{equation}
 (j_i+j_k-j_l)\cot\psi_{ij}=\rho\ .
\end{equation}
Now we can use the fact that
\begin{equation}
 \cot\psi_{32}\cot\psi_{21}+
 \cot\psi_{13}\cot\psi_{32}+
 \cot\psi_{21}\cot\psi_{13}=1
\end{equation}
to obtain
\begin{equation}
\rho^2=\frac{(j_1+j_2-j_3)(j_2+j_3-j_1)(j_1+j_3-j_2)}{j_1+j_2+j_3} \quad .
\end{equation}
Furthermore, we see that
\begin{equation}
 \cot^2\psi_{32}=\frac{(j_1+j_2-j_3)(j_1+j_3-j_2)}{(j_2+j_3-j_1)(j_1+j_2+j_3)}
=\frac{j_1^2-(j_2-j_3)^2}{(j_2+j_3)^2-j_1^2} \quad .
\end{equation}
Hence, we compute that
\begin{align}
 \cos2\psi_{32}&=\frac{\cot^2\psi_{32}-1}{\cot^2\psi_{32}+1}=\frac{j_1^2-j_2^2-j_3^2}{2j_2j_3} \quad ,\\
 \sin2\psi_{32}&=\frac{2\cot\psi_{32}}{\cot^2\psi_{32}+1} =
\pm\frac{A}{j_2j_3} \quad ,
\end{align}
where $A$ is the area of the triangle with edge lengths $j_1,j_2,j_3$. Thus $\pm 2\psi_{32}$ modulo
$2\pi$ is the angle in this triangle opposite to the edge $j_1$. Similar relations hold for
$\psi_{21}$ and $\psi_{13}$. Together with the relation
$\psi_{21}+\psi_{32}+\psi_{13}=0$, it gives the condition that
\begin{equation}
 2\psi_{21},\ 2\psi_{32},\ 2\psi_{13}\ \text{mod}\ 2\pi
\end{equation}
are oriented (i.e. incorporate sign) angles of the triangle on the plane with edges
$(j_1,j_2,j_3)$.

In the presence of a function $f_f$, only one of those stationary points contributes. Since the
Jacobian is real, the only contribution to the sign is given by the value of the integral in the
stationary point. We know that $\psi_{ij}\in (\pi,2\pi)$ for consecutive pair of edges
$(ij)$
(see \ref{modifier-sec}), thus $\sin\psi_{ij}<0$ and the total sign is
\begin{equation}
 (-1)^{j_1+j_3-j_2}\ (-1)^{j_1+j_2-j_3}
(-1)^{j_2+j_3-j_1}
(-1)^{j_1+j_3-j_2}=(-1)^{2j_2}
\end{equation}
As already discussed above, this is a relative sign of our invariant with respect to the invariant
described in \cite{Penrose,Penrose2}.

\subsubsection{The sign $\sum s_e$}
\label{sec:sign1}

In the stationary point we can write (see \ref{sec:stat-phi} and
\ref{sec:angle-inter} for the derivation)
\begin{equation}
U_{s(e)}^{-1}U_{t(e)}=(-1)^{\tilde{s_e}}O_{s(e)e}e^{-i\tilde{\theta}_{s(e)t(e)}
\begin{pmatrix}
                                                            1 &0\\ 0&-1
                                                           \end{pmatrix}}
\begin{pmatrix}
 0 & -1\\ 1 & 0
\end{pmatrix}O_{t(e)e}^{-1} \quad ,
\end{equation}
where we assumed that $\tilde{\theta}_{s(e)t(e)}\in
\left(-\frac{\pi}{2},\frac{\pi}{2}\right)$. It is straightforward to check that
\begin{equation}
U_{t(e)}^{-1}U_{s(e)}=(-1)^{\tilde{s_e} +1}O_{t(e)e}e^{-i\tilde{\theta}_{t(e)s(e)}
\begin{pmatrix}
                                                            1 &0\\ 0&-1
                                                           \end{pmatrix}}
\begin{pmatrix}
 0 & -1\\ 1 & 0
\end{pmatrix}O_{s(e)e}^{-1} \quad ,
\end{equation}
where $\tilde{\theta}_{t(e)s(e)}=-\tilde{\theta}_{s(e)t(e)}\in
\left(-\frac{\pi}{2},\frac{\pi}{2}\right)$. Thus in general we have
\begin{equation}
U_{f}^{-1}U_{f'}=(-1)^{\tilde{s_e} +c_e}O_{fe}e^{-i\tilde{\theta}_{ff'}
\begin{pmatrix}
                                                            1 &0\\ 0&-1
                                                           \end{pmatrix}}
\begin{pmatrix}
 0 & -1\\ 1 & 0
\end{pmatrix}O_{f'e}^{-1} \quad,
\end{equation}
where
\begin{equation}
 c_e=\left\{\begin{array}{ll}
             0 & f=s(e)\ \text{and}\ f'=t(e)\\
             1 & f=t(e)\ \text{and}\ f'=s(e)
            \end{array}\right. \quad .
\end{equation}
By a cycle we denote an assignment of a number $\{0,1\}$ to every edge such that
\begin{equation}
 \forall_f\ \sum_{e\subset f} c_e=0\ \text{mod}\ 2\ .
\end{equation}
The set of cycles is denoted by $Z_1$.
Abusing the notation, we will also say that the cycle is formed by edges with
$c_e=1$. Let us notice that such edges form a disjoint sum of loops that we will denote by $c_i$.

For every cycle $c$ holds
\begin{equation}\label{eq:s_e}
 \prod_i \left(\prod_j U_{f^i_j}^{-1}U_{f^i_{j+1}}\right)=1 \quad ,
\end{equation}
where $\{f^i_jf^i_{j+1}\}$ are consequtive pair of faces in the cycle $c_i$ (in the correctly chosen order).

Thus, we can write
\begin{equation}
 (-1)^{\tilde{s}(c)}=\prod_{e=[ff']\subset c} (-1)^{c_e}O_{fe}e^{-i\tilde{\theta}_{ff'}
\begin{pmatrix}
                                                            1 &0\\ 0&-1
                                                           \end{pmatrix}}
\begin{pmatrix}
 0 & -1\\ 1 & 0
\end{pmatrix}O_{f'e}^{-1} \quad ,
\end{equation}
where we used the same order of multiplications as before.
The equations \eqref{eq:s_e} translate into the set of equations satisfied by $\tilde{s}_e$:
\begin{equation}\label{eq:cycle}
 \forall\ c\in Z_1\quad \sum_{e\in c} \tilde{s}_ec(e)=\tilde{s}(c)\ {\text{mod}}\
2 \quad .
\end{equation}
Given a solution for the $\tilde{s}_e$, $U_f$ can be reconstructed up to a $U$ transformation. The solutions $\{\tilde{s}_e\}$ are not completely determined, but the residual symmetry is given by $\Ran \partial$ where
\begin{equation}
 \partial:C_0\rightarrow C_1,
\end{equation}
is a boundary operator\footnote{This is because $C_1=\ker\partial^*\oplus \Ran
\partial$}.
Those correspond exactly to $-U_f$ transformations.

We are interested in \ref{sec:final-form}
\begin{equation}
 \sum s_e=\sum_{e\in c} \tilde{s}_ec(e) \quad ,
\end{equation}
where $c$ is the cycle formed by all edges that are half-integer.

\subsubsection{Sign of basic cycles in spherical case}

In this section we will compute the sign factor $\tilde{s}(c)$ for cycles consisting of only a
single loop.
Every other cycle can be uniquely written as a sum (as the $\mathbb{Z}_2$ module) of such disjoint cycles.

Let us take such a cycle. The cycle is described by the sequence
of consecutive faces and edges.
The value of $(-1)^{\tilde{s}(c)}$ is thus equal to
\begin{equation}\label{eq:cycle2}
\prod_{\{ff'\}\in c} O_{fe}e^{-i\tilde{\theta}_{ff'}
\begin{pmatrix}
                                                            1 &0\\ 0&-1
                                                           \end{pmatrix}}
\begin{pmatrix}
 0 & -1\\ 1 & 0
\end{pmatrix}O_{f'e}^{-1} \quad .
\end{equation}
All parameters (i.e $\phi_{fe}$ and $\tilde{\theta}_{ff'}$) can be continuously deformed,
i.e. there exists
a map
\begin{equation}
 [0,1]\ni t\rightarrow \{\phi^t_{fe},\tilde{\theta}^t_{ff'}\}
\end{equation}
such that
\begin{equation}
 \forall_{e\subset f}\ \phi^0_{fe}=\phi_{fe},\quad \forall_{ff'}\
\tilde{\theta}^0_{ff'}=\tilde{\theta}_{ff'},
\end{equation}
that satisfies the following conditions:
\begin{itemize}
 \item the image of \eqref{eq:cycle2} in $SO(3)$ is always the identity,
\item at the end all
deformed $SU(2)$ angles $\tilde{\theta}_{ff'}^1$ are equal to $0$
\item for every face  $f$ with ordered pair of edges $e,\,e'$ (neighbours in the cycle), the
difference $\phi_{fe}^t-\phi_{fe'}^t\in
\left(\frac{\pi}{2},\frac{3\pi}{2}\right)$ modulo $2\pi$ during the whole
deformation process. In fact, it is larger
than $\pi$ if order of edges agrees with the orientation of the face and smaller if it does not.
\end{itemize}
The final stage of the deformation will be denoted by
\begin{equation}
\forall_{e\subset f}\ \tilde{\theta}'_{ff'}=\tilde{\theta}^1_{ff'},\quad
\forall_{ff'}\ \phi'_{fe}=\phi^1_{fe}\ \quad .
\end{equation}
Up to $2$-dimensional homotopies, there are two possible final stages of such deformations.
They differ by orientation of the cycle (loop) drawn on the plane. We assume that the faces are
ordered in agreement with total
orientation.

The cycle before and after the deformation is shown in figure \ref{pic-def11}.
\begin{figure}[ht]
\begin{center}
\includegraphics[scale=0.5]{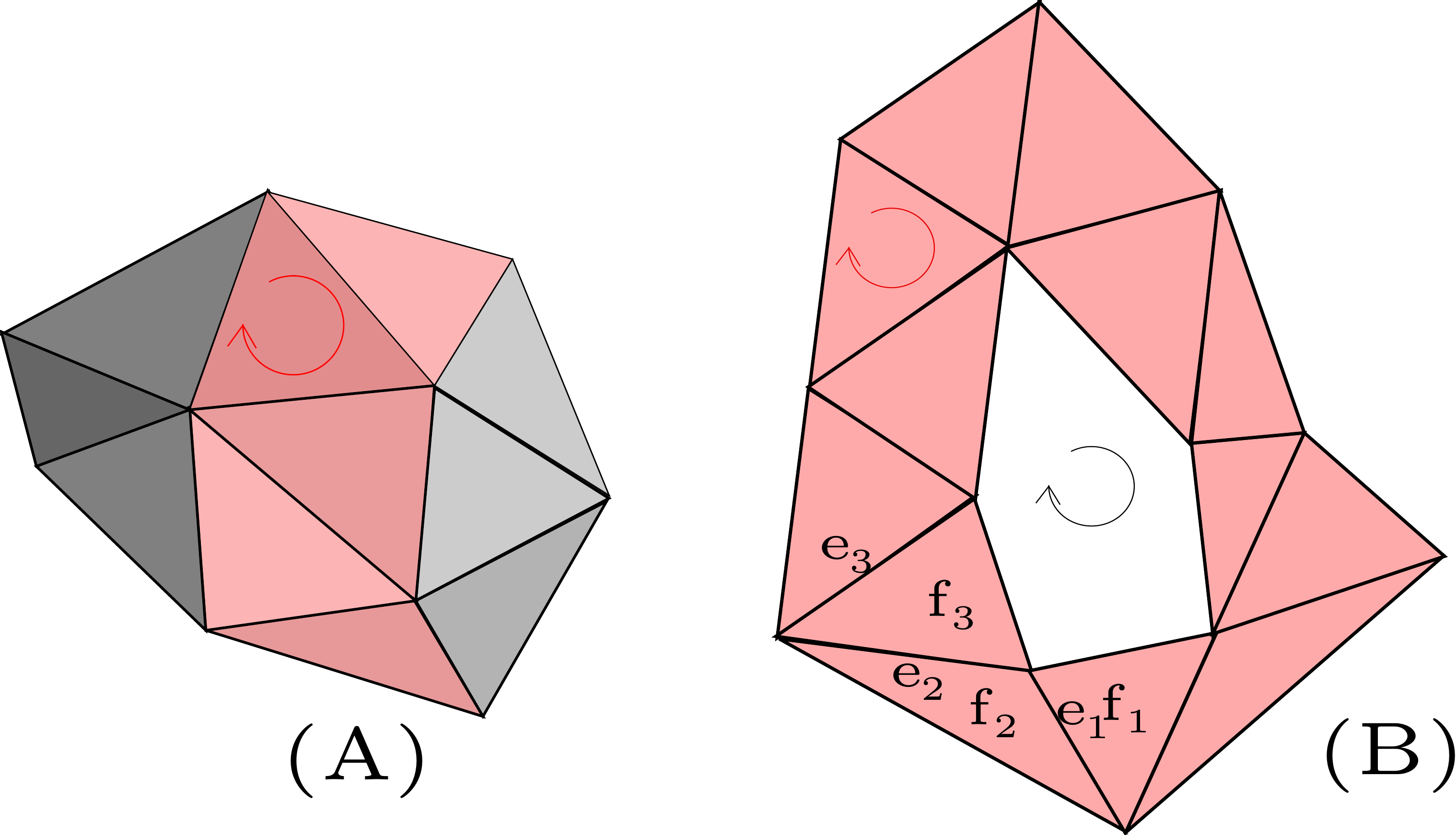}%
\caption{(A) Cycle before deformation. (B) Cycle after deformation.}
\label{pic-def11}
\end{center}
\end{figure}
The proces is shown on the figure \ref{pic-def1} on example of a single-loop cycle around the
vertex.
\begin{figure}[ht]
\begin{center}
\includegraphics[scale=0.5]{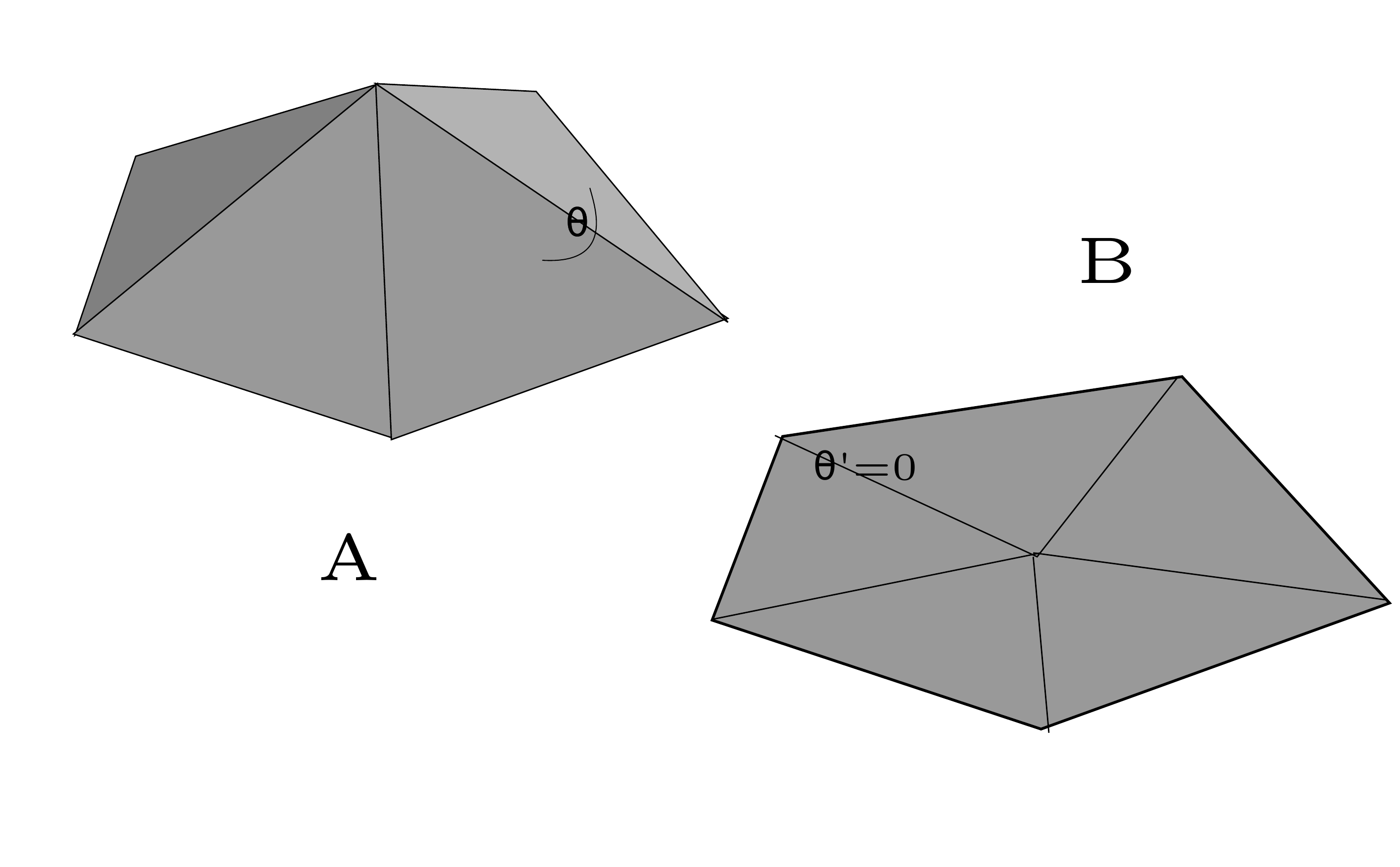}%
\caption{Example of the single-loop cycle around a vertex. (A) Cycle before deformation:
angle $\theta$ between two faces depicted. (B) Cycle after
deformation $\theta'=0$, the faces are parallel.}
\label{pic-def1}
\end{center}
\end{figure}
In the end we obtain
\begin{equation}
\prod_{\{ff'\}\in c} (-1)^{c_e}O_{fe}e^{-i\tilde{\theta}_{ff'}'
\begin{pmatrix}
                                                            1 &0\\ 0&-1
                                                           \end{pmatrix}}
\begin{pmatrix}
 0 & -1\\ 1 & 0
\end{pmatrix}O_{f'e}^{-1}
=(-1)^{C_e}\prod_{\{ff'\}} {O'}_{fe}{O'}_{f'e}^{-1}
\prod_e \begin{pmatrix}
 0 & -1\\ 1 & 0
\end{pmatrix} \quad ,
\end{equation}
where $C_e$ is the number of edges with $c_e=1$ because $O_{fe}'$ commutes with
$\begin{pmatrix}
 0 & -1\\ 1 & 0
\end{pmatrix}$.
The images of $O_{fe}'$ (and related $SO(3)$ angles $\pi(\phi_{fe}')$) satisfy (see
figure \ref{pic-def2})
\begin{equation}
 \sum_e \pi(\phi_{fe}')-\pi(\phi_{f'e}')= -\sum_{\{ee'\}\subset f}
\pi(\phi_{fe}')-\pi(\phi_{fe'}')=-(n-2)\pi \quad ,
\end{equation}
where $n$ is the
number of faces meeting in the cycle $c$.
\begin{figure}[ht]
\begin{center}
\includegraphics[scale=0.5]{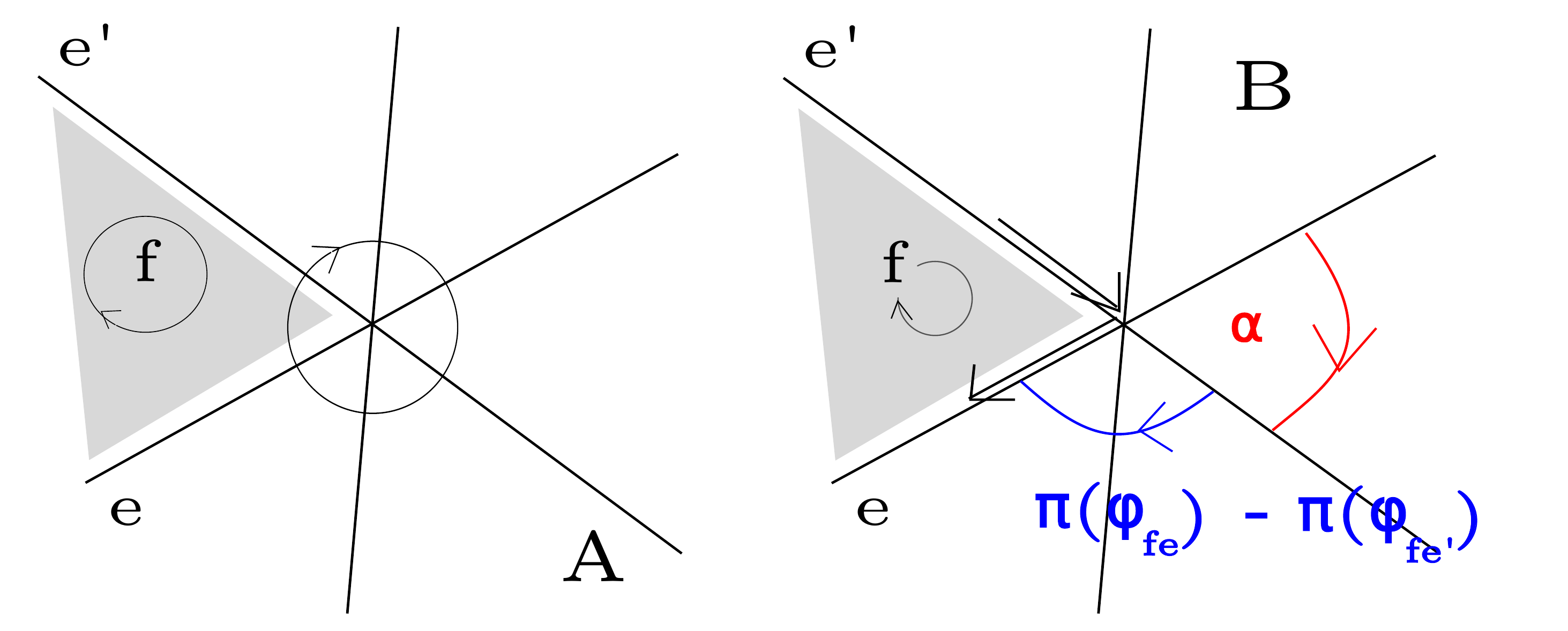}
\caption{(A) example of the cycle with orientations shown, (B)
$\pi(\phi_{fe}')-\pi(\phi_{fe'}')$ and $\alpha=\pi-(\pi(\phi_{fe}')-\pi(\phi_{fe'}'))$}
\label{pic-def2}
\end{center}
\end{figure}

Using prescription \ref{modifier-sec} for $\phi_{fe}-\phi_{fe'}$, the fact that $SU(2)$ is
the double cover of $SO(3)$ and continuity of the deformation we obtain (modulo $2\pi$)
\begin{equation}
 \sum_f \phi_{fe}'-\phi_{fe'}'=\sum_f
\frac{\pi(\phi_{fe}')-\pi(\phi_{fe'}')}{2} -\pi= \left(-n+\frac{n}{2}
+1\right)\pi \quad .
\end{equation}
Thus
\begin{equation}
\prod_{\{ff'\}} {O'}_{fe}{O'}_{f'e}^{-1}
\prod_e \begin{pmatrix}
 0 & -1\\ 1 & 0
\end{pmatrix}
=\begin{pmatrix}
 0 & 1\\ -1 & 0
\end{pmatrix}^{n-2}\begin{pmatrix}
 0 & -1\\ 1 & 0
\end{pmatrix}^{n}=-1 \quad .
\end{equation}
To sum up, we obtained for a given cycle $c$
\begin{equation}\label{eq-cycle}
 \sum_e c(e)\tilde{s}_e= C_e+1  \ \text{mod}\ 2 \quad .
\end{equation}
Since the cycle is oriented in the same way as the faces, $C_e$ is the number of edges oriented according
to the cycle.

\subsubsection{Other nethod of computation}
\label{sec:sign-other}

Let us consider an arbitrary cycle $c$. Let us draw it on the graph $G$ as in figure
\ref{pic-evaluation}. We will denote by $s(c)$ the number of crossings in the cycle. For any node (face)
$f$ we also denote
\begin{equation}
 f(c)=\left\{\begin{array}{ll}
              0 & \ \text{if the middle leg edge of $f$ does not belong to}\ c\\
              1 & \ \text{if the middle leg edge of $f$ belongs to}\ c
             \end{array}\right.
\end{equation}
In the following, we will present another method of how to compute $\sum_e
c_v(e)\tilde{s}_e$ for a basic cycle $c$.
First we will prove:
\begin{lm}
For a single loop cycle $c$ in a spherical network, the quantity
\begin{equation}
 C_e+\sum_f f(c)+s(e)\ \text{mod}\ 2
\end{equation}
does not depend on the choice of a graph $G$.
\end{lm}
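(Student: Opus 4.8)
The plan is to show that the quantity $C_e+\sum_f f(c)+s(e) \bmod 2$ is invariant under the elementary moves that relate any two admissible planar drawings of a spherical spin network. Recall that any two such drawings (for a fixed orientation of the sphere) can be connected by a finite sequence of local moves: (i) sliding an edge across a node, i.e. changing the position of one leg relative to the cyclic order at a trivalent vertex; (ii) an isotopy that creates or removes a pair of crossings between two edges (a Reidemeister-II-type move on the planar diagram); and (iii) the global choice of which node is `first' in the linear ordering, together with re-routing edges past the point at infinity of the sphere. I would check invariance of $C_e+\sum_f f(c)+s(e)$ separately under each of these, working modulo $2$ throughout.

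First I would handle the crossing move (ii). A Reidemeister-II move changes $s(c)$ by $0$ or $2$ depending on whether the two strands involved both lie in the cycle $c$; in either case $s(c)$ is unchanged mod $2$ unless exactly one crossing of the two created/removed lies on $c$ — but crossings come in pairs here, so the contribution to $s(c)$ (the count of self-crossings of the cycle) changes by an even number. Meanwhile neither $C_e$ (number of edges of the loop oriented with the cycle) nor $\sum_f f(c)$ (parity of the number of faces whose middle leg lies on $c$) is affected, since these are combinatorial data of the abstract graph together with the orientation, not of the planar embedding. Next, for the node-sliding move (i): moving a leg past the `middle' position at a node $f$ toggles $f(c)$ precisely when the edge being moved is the one on $c$, and simultaneously this reorganization of legs forces a compensating change in the number of crossings of that edge with the others emanating from the same node — I would verify that this change is odd exactly in the same cases, so that $f(c)+s(e)$ is preserved mod $2$, with $C_e$ untouched. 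The orientation of the edge (hence its contribution to $C_e$) is unchanged because the node ordering is fixed.

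The third move (iii) — re-choosing the basepoint node in the linear order, or equivalently routing the cycle around the back of the sphere — is the one I expect to be the main obstacle, since it is the genuinely global ingredient and is where the Euler-characteristic count $n-2$ appearing in the previous subsection enters. Here I would argue that passing the cycle through the point at infinity turns the `outside' region into an `inside' region, which reverses the cyclic orientation that the cycle inherits from the sphere; this flips each local comparison $\pi(\phi_{fe}')-\pi(\phi_{fe'}')$ in a correlated way, and one checks that the net effect on $C_e$, on $\sum_f f(c)$, and on $s(e)$ cancels mod $2$ by the same bookkeeping used in deriving $\sum_e c(e)\tilde s_e = C_e+1$ in the single-loop computation above. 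Concretely, I would compare the given drawing with the `standard' one used in that computation, show the difference is a product of the three moves, and conclude that $C_e+\sum_f f(c)+s(e)$ takes the same value as in the standard drawing, hence is graph-independent. Finally, evaluating in the standard drawing and matching against equation \eqref{eq-cycle} identifies this invariant with $\sum_e c(e)\tilde s_e$, which is exactly what the surrounding discussion needs.
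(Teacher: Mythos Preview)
Your overall strategy --- reduce to invariance under a generating set of local moves --- is exactly the paper's approach, and your treatment of the Reidemeister-type move is fine. However, there is a genuine gap in your generating set and in your handling of the ordering move.

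The freedom in the planar drawing is not just a choice of basepoint: the Penrose prescription allows an \emph{arbitrary} linear ordering of the nodes, and this ordering determines the edge orientations (and hence $C_e$). Your move (iii), ``re-choosing the basepoint'', is at best a cyclic shift and does not generate all reorderings. The paper instead uses the elementary transposition of two \emph{consecutive} nodes along the cycle as its third move, and this is where the real cancellation occurs: under such a transposition one finds $C_e' = C_e \pm 1$ while $s(c)' = s(c) + 3$, with all $f(c)$ unchanged, so $C_e + \sum_f f(c) + s(c)$ shifts by an even number. Without this move (or an equivalent) your list does not connect all admissible drawings.

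Separately, your verification of move (iii) is not a proof. You write that ``one checks that the net effect \ldots\ cancels mod $2$ by the same bookkeeping'' used earlier, and then propose to ``compare the given drawing with the standard one'' --- but that comparison is precisely the content of the lemma you are trying to prove, so the argument is circular. The paper avoids this by making each move genuinely local and computing the parity change explicitly from the diagram; for the leg-permutation move it simply observes that a cyclic permutation at a node $f$ preserves $f(c)+s(c)$, with $C_e$ untouched since the node ordering is fixed. Replacing your global move (iii) by the adjacent-transposition move, and verifying the explicit parity shifts $(C_e,s(c))\mapsto(C_e\pm 1,\,s(c)+3)$, would close the gap.
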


\begin{proof}
 Any two graphs can be transformed into one another by a sequence of basic moves
\begin{itemize}
\item One of the Reidemeister moves \cite{Reidemeister,Reidemeister2} for the edge
(see figure \ref{pic-move1} for example). It only changes $s(c)$ by an even number.
\begin{figure}[ht]
\begin{center}
\includegraphics[scale=0.5]{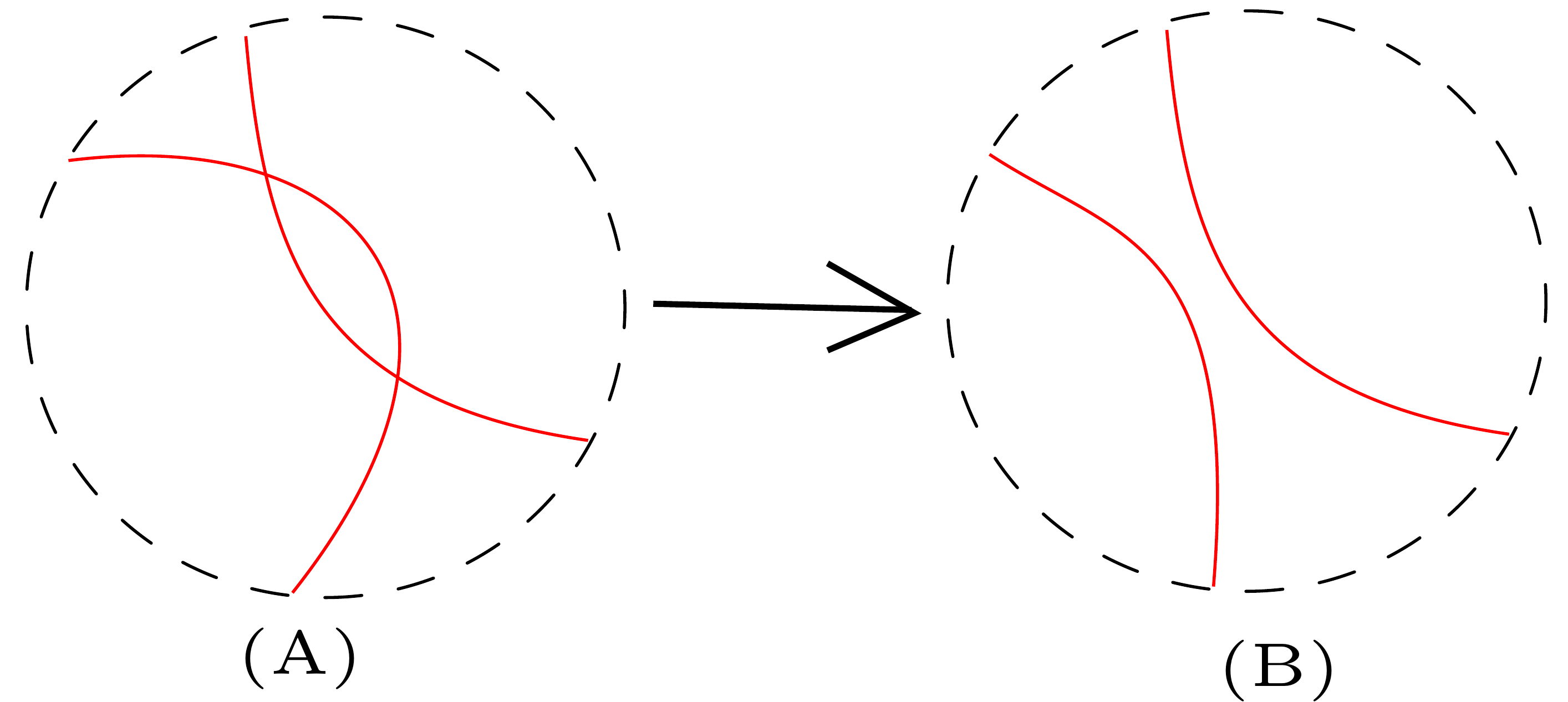}%
\caption{Part of the graph changed by the move. Example of a Reidemeister move.}
\label{pic-move1}
\end{center}
\end{figure}
 \item Transposition of two consecutive nodes belonging to the cycle (see figure \ref{pic-move2}).
In the move shown in the figure
\begin{equation}
 C_e'=C_e\pm 1,\quad  s(c)'=s(c)+3 \quad ,
\end{equation}
and all $f(c)$ remain unchanged.
\begin{figure}[ht]
\begin{center}
\includegraphics[scale=0.5]{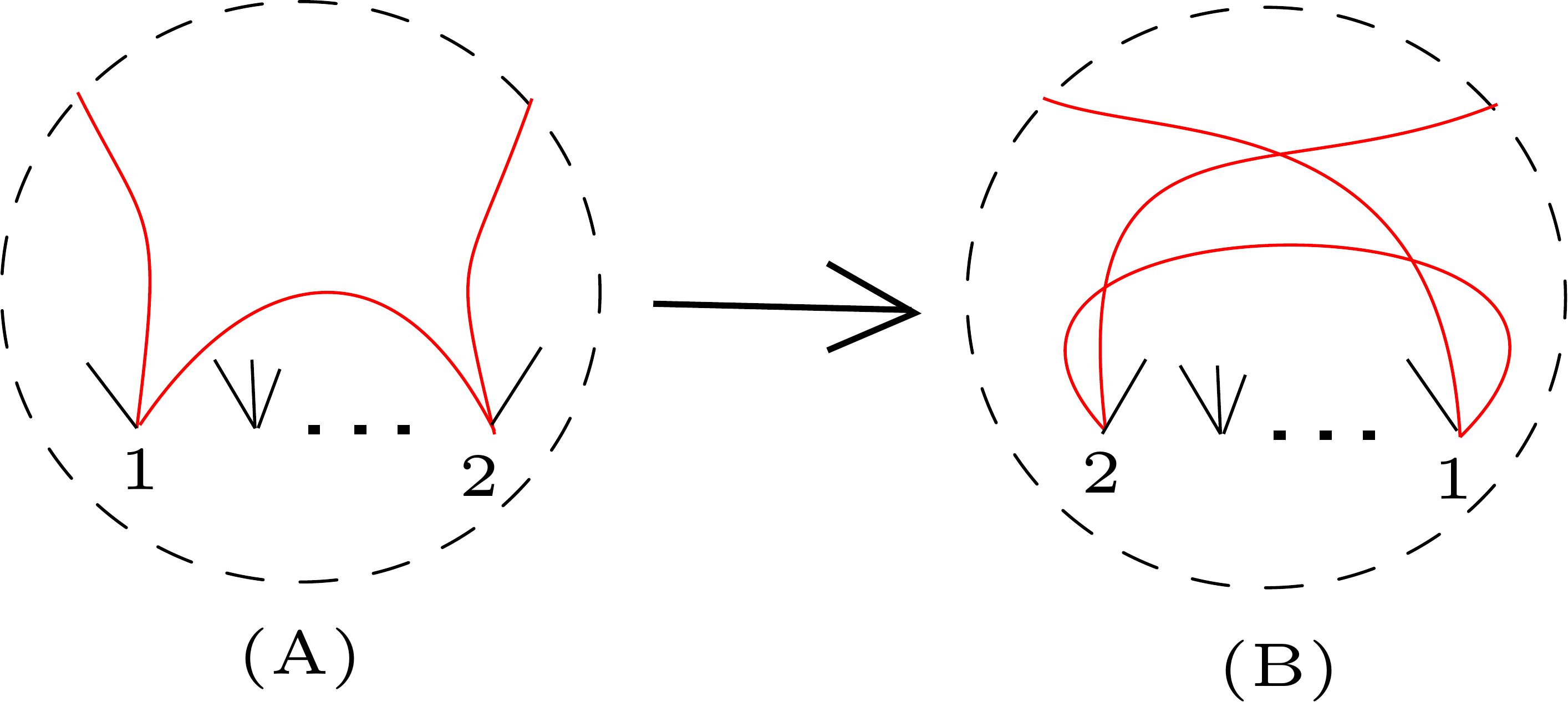}%
\caption{Part of the graph changed by the move (edges not belonging to the cycle are not drawn). Two
consecutive nodes in the cycle transposed.}
\label{pic-move2}
\end{center}
\end{figure}
\item Cyclic permutation of the legs of a node $f$ (figure \ref{pic-move3}). In this case
\begin{equation}
 f(c)+s(c)
\end{equation}
is preserved.
\begin{figure}[ht]
\begin{center}
\includegraphics[scale=0.5]{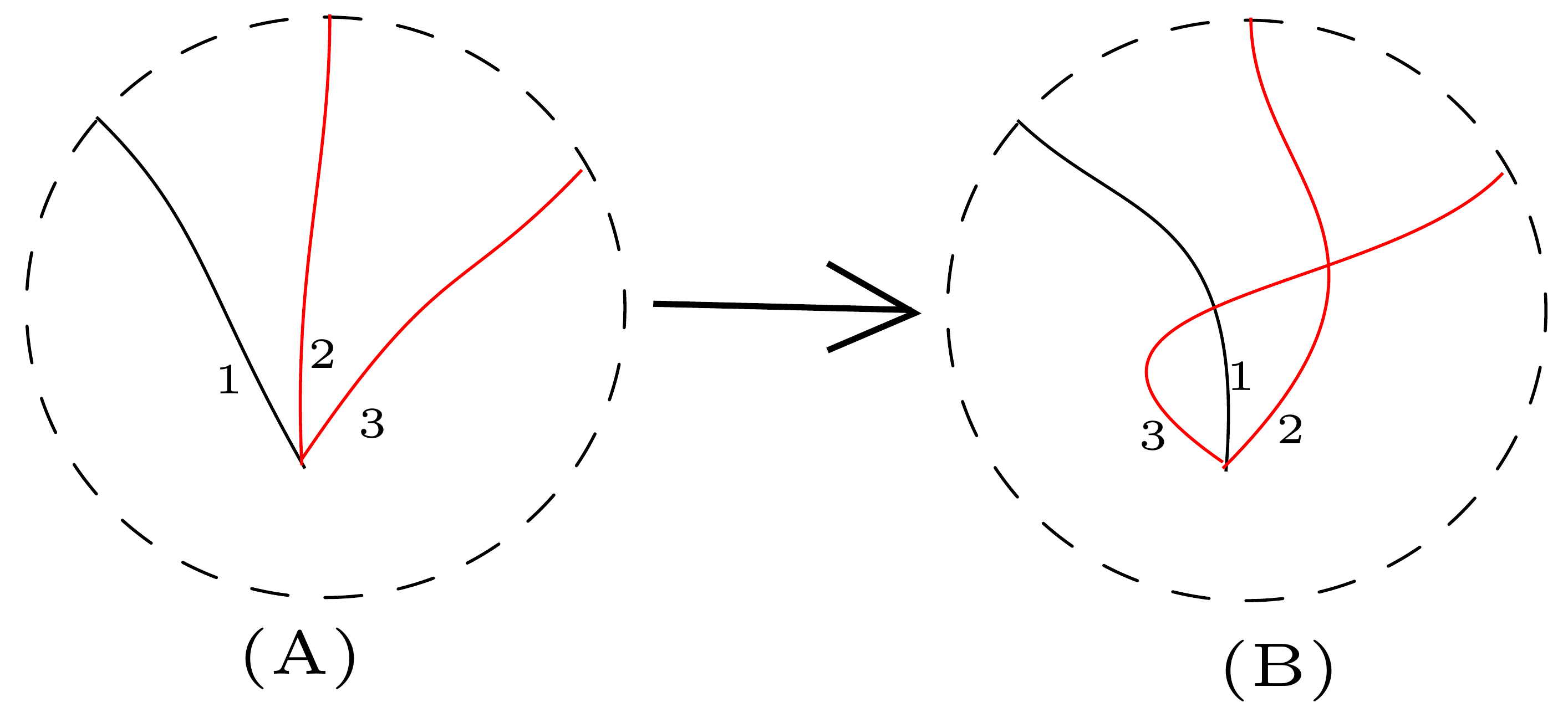}%
\caption{Part of the graph changed by the move. Cyclic change of the order of legs.}
\label{pic-move3}
\end{center}
\end{figure}
\end{itemize}
Thus $C_e+\sum_f f(c)+s(c)\ \text{mod}\ 2$ is invariant.
\end{proof}

We see that $(C_e+1)+\sum_f f(c)+s(c)$ does not depend on the chosen graph $G$, hence, we can
choose
the most convenient one (see figure \ref{pic-choice-sign}).
\begin{figure}[ht]
\begin{center}
\includegraphics[scale=0.3]{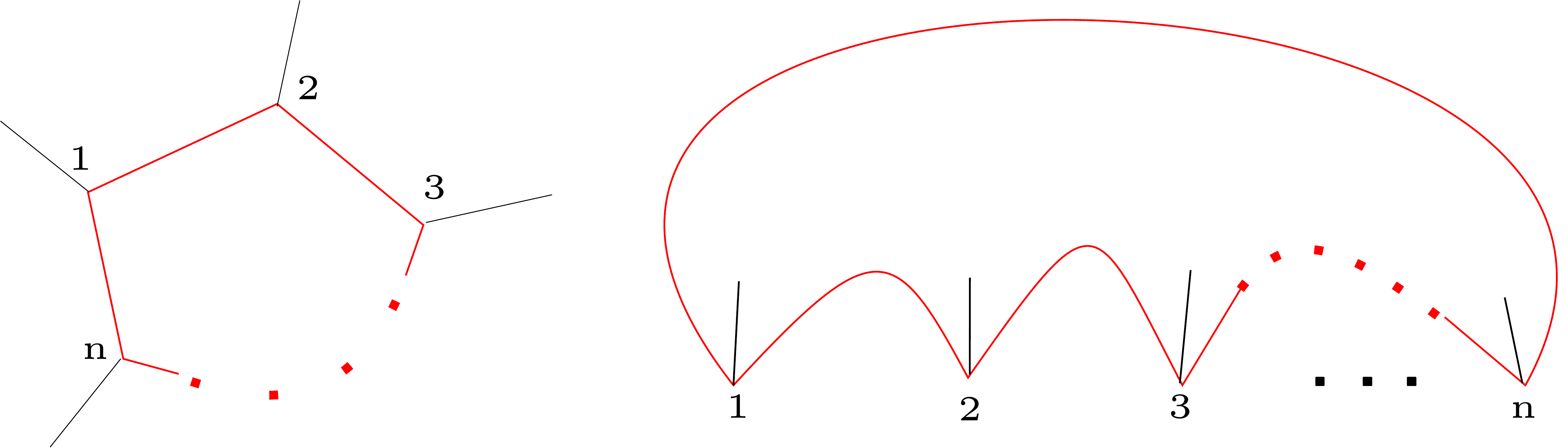}%
\caption{Convenient choice for the graph $G$}
\label{pic-choice-sign}
\end{center}
\end{figure}
For this particular choice
\begin{equation}
 C_e=1,\quad \forall_{f\subset c} f(c)= 0,\quad s(c)=0 \quad ,
\end{equation}
and thus $(C_e+1)+\sum_f f(c)+s(c)=(1+1)+0=0$ mod $2$ and
\begin{equation}
 \sum_e c(e)\tilde{s}_e=C_e+1 =\sum_f f(c)+s(c)\ \text{mod}\ 2\ .
\end{equation}

\subsubsection{Sign of the general cycle in spherical case}
\label{sign-general}

Let us state now a few properties of $f(c)$ and $s(c)$ useful in the sequel.

For two cycles $c$ and $c'$ we denote a cycle by $c+c'$ if it satisfies the following property:
\begin{equation}
 \forall_e,\ (c+c')(e)=c(e)+c'(e)\ \text{mod}\ 2\ .
\end{equation}
We have for two disjoint cycles $c$ and $c'$
\begin{align}
 s(c+c')&=s(c)+s(c')\ \text{mod}\ 2 \quad ,\\
 \forall_f \ f(c+c')&=f(c)+f(c')\ \text{mod}\ 2\ \quad .
\end{align}
We can now write every cycle $c$ in the spherical case as a sum of  disjoint single-loop
cycles
$c_\alpha$, such that $c=\sum_\alpha c_\alpha$:
\begin{equation}
\begin{split}
 \sum_e c(e)\tilde{s}_e&=\sum_e \left(\sum_\alpha c_\alpha(e)\right)\tilde{s}_e=
\sum_\alpha \left(\sum_f f(c_\alpha)+s(c_\alpha)\right)=\\
&=\sum_f f\left(\sum_\alpha c_\alpha\right) +s\left(\sum_\alpha
c_\alpha\right)=
\sum_f f(c)+s(c)
\ \text{mod}\ 2\ .
\end{split}
\end{equation}

\subsubsection{Final sign formula}

Let us notice that in the case when $c$ is the cycle of all half-integer spins we have
\begin{equation}
 s=s(c),\quad \forall_f s_f=f(c),\quad \sum s_e=\sum_e c(e)\tilde{s}_e\ ,
\end{equation}
since if we denote the spin of the middle leg edge of $f$ by $j_{f2}$ then $f(c)=2j_{f2}$ mod $2$.
Finally
\begin{equation}
s+\sum_f s_f+\sum s_e=2\left(\sum_f f(c)+s(c)\right)=0
\ \text{mod}\ 2\ .
\end{equation}

\section{Changes of variables and their Jacobians}
\label{var-change}

Let the Lie group $G$ act transitively on the manifold $S$ and let
\begin{equation}
 \chi\colon G\rightarrow {\mathbb R}
\end{equation}
be a homomorphism. There exists at most one measure $\mu$ (up to scaling) on $S$ such that
\begin{equation}
 g^*\mu=\chi(g)\mu \quad .
\end{equation}
Let
\begin{equation}
 H \circlearrowright S_1\rightarrow S_2 \quad ,
\end{equation}
where $S_1$ is a principal Lie group bundle with the structure group $H$ and the base space
$S_2$. Any (pseudo-)$k$-form $\mu_2$ on $S_2$ can be uniquely
represented by a (pseudo-) $k$-form $\mu_1$ on $S_1$ that satisfies
\begin{align}
 \label{con1mu}&h^*\mu_1=\mu_1\quad \forall h\in H\\
\label{con2mu}&\mu_1\perp\partial_\xi=0\quad \forall\ \partial_\xi\in \mathfrak {h} \quad ,
\end{align}
where ${\mathfrak h}$ is the Lie algebra of $H$ and $\perp$ is contraction of the (pseudo-)
form with the vector on the first site. Any form $\mu_1$ determines the form $\mu_2$ on $S_2$.
The integration over $S_2$ is the integration over any section of the projection map $S_1\rightarrow S_2$.

Such a form satisfying conditions \eqref{con1mu} and \eqref{con2mu} can be obtained from the $H$
invariant form $\mu$ on $S_1$ via the formula
\begin{equation}
 \mu_2= \mu \perp \bigwedge_{\xi \text{ basis } {\mathfrak h}} \partial_\xi \quad .
\end{equation}
In case of a compact group $H$ it is related to the measure obtained by integration over the fibers, called $\mu_{\int H}$,  as follows
\begin{equation}\label{norm-Haar}
 \mu_2=(\mu_H\perp\bigwedge \partial_\xi)\mu_{\int H} \quad ,
\end{equation}
where $\mu_{H}$ is the normalized Haar measure on $H$.

Let $M\subset S$ be a submanifold described locally by a set of independent equations $f_a$. For
any measure (form) $\mu$ on $S$ we can define a measure (form) $\mu_{f_a}$ on $M$ by the following
integration prescription: Let $g\in C^0(M)$ and $\tilde{g}$ be any continuous extension to
$S$, then
\begin{equation}
 \int_M\mu_{f_a} g=\int_M \prod \delta(f_a) \tilde{g}\mu \quad .
\end{equation}

Let $M$ be a section of the bundle $H\subset S_1\rightarrow S_2$ described by equations $f_a$, then
we can compare the just described measures on $M$ and $S_2$ since $M\rightarrow S_2$ is a diffeomorphism of $M$ onto $S_2$:
\begin{equation}
 \mu_{f_a}=\left(\det \partial_{\xi_i}f_a\right)^{-1}\mu\perp \bigwedge\partial_\xi \quad .
\end{equation}
Indeed, we can choose local coordinates such that $S_1=M\times H$ and the zero section is
described by $f_a=0$.
We have
\begin{equation}
 \mu=(\mu\perp \bigwedge \partial_\xi)\wedge \bigwedge d\xi \quad .
\end{equation}
By extending the function $g$ constantly along fibers from the zero section, we obtain:
\begin{equation}
\begin{split}
 \int_{M}g\mu_{f_a}=\int_{S_1} \prod \delta(f_a)\tilde{g}\mu&=
\int_{M} \tilde{g}\left(\int_H \prod \delta(f_a)\bigwedge d\xi\right)(\mu\perp \bigwedge
\partial_\xi)\\
&=\int_{S_2} g\left(\det \partial_{\xi_i}f_a\right)^{-1} (\mu\perp \bigwedge \partial_\xi) \quad .
\end{split}
\end{equation}

\subsection{Change of variables $u_i\rightarrow n_i$ (integrating out gauge)}
\label{variables-n}

Let us remind from section \ref{sec:new-form-action1} that
\begin{equation}
 S^2=SU(2)/S^1 \quad ,
\end{equation}
given by the right action of $S^1$ on $SU(2)$. The sphere $S^2$ can either be
represented by unit vectors $|n_i|^2=1$ or traceless $2 \times 2$ matrices $n_i$ with the
condition
\begin{equation}
 \frac{1}{2}\Tr n_in_i=1 \quad .
\end{equation}
Then the quotient map is given by
\begin{equation}
 n_i(u)=u H u^{-1} \quad ,
\end{equation}
where $H= \left(
\begin{matrix}
0 & i \\
-i & 0
\end{matrix} \right)$. The group $SU(2)$ acts on $S^2$ (as a left action on the quotient) by
\begin{equation}
 n_i\rightarrow u\,n_i\, u^{-1} \quad .
\end{equation}
The Haar measure from $SU(2)$ can be integrated over the fibers giving the invariant measure $\mu$
on the sphere with total volume $1$.

Another invariant measure is
\begin{equation}
 \delta(|n|^2-1)\rd n_1 \rd n_2\rd n_3 \quad .
\end{equation}
Since there is only one invariant measure up to scale, both are related by a scaling
transformation:
\begin{equation}
 \mu=c\delta(|n|^2-1)\rd n_1 \rd n_2\rd n_3 \quad .
\end{equation}
The constant is fixed by requiring:
\begin{equation}
 1\overset{!}{=}\int_{S^2}\mu=c\int_0^{2\pi}\rd\phi\int_0^\pi\sin\theta\rd\theta\int_0^\infty
r^2\delta(r^2-1)\rd r=2\pi c \quad ,
\end{equation}
thus
\begin{equation}
 \mu=\frac{1}{2\pi}\delta(|n|^2-1)\rd n_1 \rd n_2\rd n_3 \quad .
\end{equation}

\subsection{Variables $\theta$ in the flat tetrahedron}
\label{variables-theta}
Let us consider two sets of variables
\begin{equation}
 N=(\vec{n}_1,\ldots,\vec{n}_{m+1}) \quad ,
\end{equation}
where $\vec{n_i}$ are $m$ vectors with exactly one dependency, i.e. every subset of $m$
vectors forms a basis. Let
\begin{equation}
 M=N^TN,\ m_{ij}=\vec{n}_i\vec{n}_j,\ i\leq j\ ,
\end{equation}
where $M$ is a symmetric positive $(m+1)\times(m+1)$  matrix, which is degenerate with
exactly one null eigenvector, whose entries are all non vanishing.

On $N$ there exists a left action of $O(m)$, $\vec{n}_i\rightarrow
O\vec{n}_i$. The matrix $M$ is $O(m)$ invariant, so the parameters of this action can be
regarded as supplementary to $M$. The vector fields of
this action will be denoted by $L_{ab}$.
The map $N\rightarrow M$ is an $O(m)$ principal bundle.

In the following, our goal is to compare the pseudo-form
\begin{equation}
 \mu_1=\left|\bigwedge_{i=1...m+1,a=1..m} \rd n_i^a\perp\bigwedge_{a<b}
L_{ab}\right|
\end{equation}
with the form
\begin{equation}
 \mu_2=\left|\delta(\det M)\ \bigwedge_{i\leq j} \rd m_{ij}\right| \quad .
\end{equation}
Let us notice that both $\mu_1$ and $\mu_2$ are  measures on $M$.

There are additional transformations parametrized by $U\in GL(m+1)$
\begin{equation}
 \tilde{N}=NU,\ \tilde{M}=U^{T}MU\ ,
\end{equation}
which commute with the $O(m)$ action on $N$.
The measure $\mu_1$ is $\chi$ covariant with respect to this
action, where
\begin{equation}
 \chi(U)=|{\det}^{m}|(U):=|\det(U)|^m\quad \forall U\in GL(m+1)
\end{equation}
Furthermore, we have
\begin{itemize}
 \item $\mu_2$ is invariant for $U\in O(m+1)$,
\item for transformations of the form $U={\rm diag}\ {\lambda_i}$ (diagonal matrix) the
measure $\mu_2$ transforms as
\begin{equation}
 \Big|\underbrace{\frac{1}{\prod_i\lambda_i}}_{\text{from }\det M}
\underbrace{\prod_{i\leq
j}\lambda_i\lambda_j}_{=\prod_i\lambda_i^{m+1}}\Big|\mu_2 \quad ,
\end{equation}
\end{itemize}
so it is also $\chi$ covariant as rotations and scaling generate the whole
group.

Hence the two measures $\mu_1$ and $\mu_2$ differ by a constant $c$ as $GL(m+1)$ acts transitively
on $M$.
This constant can be computed for a special value of $N$:
\begin{equation}
 {n}_i^a=\left\{\begin{array}{ll}
                       \delta_{i}^a, & i\leq m\\ 0, & i=m+1
                      \end{array}\right. \quad .
\end{equation}
In this choice
\begin{equation}
 \rd {m}_{ij}=\left\{\begin{array}{ll}
                       \rd{n}_{i}^j+\rd{n}_{j}^i, & i,j\leq m\\
\rd{n}_{m+1}^i, & j=m+1\\
0, & i=j=m+1
                      \end{array}\right. \quad .
\end{equation}
Moreover
\begin{equation}
 \rd {n}_i^a\perp
L_{cd}=(L_{cd}\vec{{n}}_i)^a=\delta_{ac}\delta_{id}- \delta_{ad}\delta_{ic} \quad .
\end{equation}
The following equalities hold:
\begin{equation}
 2^m\left|\bigwedge_{i=1...m+1,a=1..m} \rd n_i^a\right|=\left|\bigwedge_{1\leq j\leq i\leq m+1,
j\not=m+1}\ \underbrace{dn_i^j+dn_j^i}_{dm_{ij}}\wedge\bigwedge_{1\leq j< i\leq m+1} dn_i^j\right| \quad ,
\end{equation}
but since $\rd m_{ij}\perp L_{ab}=0$:
\begin{equation}
 2^m\left|\bigwedge_{i=1...m+1,a=1..m} \rd n_i^a\perp\bigwedge_{a<b}
L_{ab}\right|=\left|\bigwedge_{1\leq j\leq i\leq m+1,
j\not=m+1}\ dm_{ij}\right|=\delta({m}_{m+1,m+1})\left|\bigwedge_{(i,j)\colon i\leq j} \rd
{m}_{ij}\right| \quad .
\end{equation}
Moreover in this case
\begin{equation}
 {M}_{ij}=\left\{\begin{array}{ll}
                       \delta_{ij}, & i\leq m\\ 0, & i=m+1
                      \end{array}\right.
\end{equation}
and so $\frac{\partial \det{M}}{\partial {m}_{m+1,m+1}}=1$. Eventually
\begin{equation}
2^m\left| \bigwedge_{i=1...m+1,a=1..m} \rd n_i^a\perp\bigwedge_{a<b}
L_{ab}\right|=\delta(\det M)\ \left|\bigwedge_{i\leq j} \rd m_{ij}\right| \quad .
\end{equation}

\subsubsection{Integration over the $SO(3)$ fiber}

In the case $m=3$ we are interested in integrating over the fiber, however not the
whole $O(3)$ but only over
one connected component with respect to $SO(3)$. This is due to the $u$ transformation symmetry
corresponds to $SO(3)$ not $O(3)$ (see also section \ref{sec:transformations}).

In this section we continue to compute the correct constant in front of the measure. We will
now consider a fibration with the group $SO(m)$ that can still be described locally by a projection $N\mapsto M$. This is, however, enough because we are only interested in
local variables.

In general we have
\cite{MacDonald}
\begin{equation}
 \int_{SO(m)}\left|\bigwedge_{a<b}
L^*_{ab}\right|=\prod_{k=2}^n\frac{2\pi^\frac{k}{2}}{\Gamma\left(\frac{k}{2}\right)} \quad ,
\end{equation}
so from \eqref{norm-Haar} our measure integrated over the fibre is equal to
\begin{equation}
 \int_{SO(m)}\left|\bigwedge_{i=1...m+1,a=1..m} \rd
n_i^a\right|=\frac{1}{2^m}\prod_{k=2}^{m}\frac{2\pi^\frac{k}{2}}{\Gamma\left(\frac{k}{2}\right)}
\delta(\det M)\
\left|\bigwedge_{i\leq j} \rd
m_{ij}\right| \quad .
\end{equation}
If we impose the condition $|\vec{n}_i|=1$, we integrate over a set of unit vectors. This
implies for $M$ that we have to skip $\rd m_{ii}$ in the measure and we define $m_{ij} = \cos
\theta_{ij}$. In these new angle variables the measure takes the form
\begin{equation}
 \frac{1}{2^m}\prod_{k=2}^m\frac{2\pi^{\frac{k}{2}}}{\Gamma\left(\frac{k}{2}\right)}\delta(\det
\tilde{G})\prod_{i<
j}|\sin\theta_{ij}|\ \bigwedge_{i< j} \left|\rd \theta_{ij}\right|\ ,
\end{equation}
where $\tilde{G}$ is the Gram matrix with the convention
\begin{equation}
 \tilde{G}_{ij}=\cos\theta_{ij},\quad \theta_{ii}=0 \quad.
\end{equation}
In case $n=3$ we have
\begin{equation}
 \pi^2 \delta(\det \tilde{G})\prod_{i<
j}|\sin\theta_{ij}|\ \bigwedge_{i< j} \rd \theta_{ij}\ .
\end{equation}

\subsection{Variables $\theta$/ $l$ in the spherical constantly curved tetrahedron}

Let us consider the spaces of matrices
\begin{equation}
 \tilde{N}=\{N\in M_n({\mathbb R})\colon \det N>0\}
\end{equation}
and
\begin{equation}
 \tilde{M}=\{M\in M_n({\mathbb R})\colon M>0\} \quad .
\end{equation}
We have a fibration with the group $SO(n)$ (via left action on $\tilde{N}$)
\begin{equation}
 \tilde{N}\rightarrow \tilde{M},\quad M=N^TN \quad .
\end{equation}
We can compare forms
\begin{equation}
 \begin{split}
  \mu_1&=\left|\bigwedge \rd n_i^a\perp \bigwedge\partial_\xi\right| \quad , \\
\mu_2&=(\det M)^{-\coh}\left|\bigwedge_{i<j}\rd m_{ij}\right| \quad .
 \end{split}
\end{equation}
As in section \ref{variables-theta} there is an action of $SL(n)$ by
\begin{equation}
 N\rightarrow NU,\quad M\rightarrow U^TMU \quad .
\end{equation}
We can check that both measures are $\chi=|{\det}^{n}|$ covariant. Since
$SL(n)$ acts transitively on matrices with positive determinant, we
have
\begin{equation}
 \mu_1=c\mu_2 \quad .
\end{equation}
Checking for $N={\mathbb I}$ gives $c=1$.

Let us notice that $n_i\cdot n_i=m_{ii}$. On the surface $m_{ii}=1$ we can introduce angle variables
$\cos\theta_{ij}=m_{ij}$ and obtain
\begin{equation}
 (\det\tilde{G})^{-\coh}\prod\sin\theta_{ij}\bigwedge\rd\theta_{ij}
=\pm\prod\delta(n_in_i-1)\bigwedge \rd n_i^a \quad .
\end{equation}

\subsection{Determinant $\det \frac{\partial \theta}{\partial l}$ for constantly curved simplices}
\label{bunch3}

We denote the length Gram matrix by $G$ and the angle Gram matrix by $\tilde{G}$. The
dimension is equal to $n-1$ and we are working in ${\mathbb R}^n$ on the sphere with radius~$1$.

Our goal is to prove the following formulas for the $n-1$-dimensional curved simplex. It was first proposed in \cite{DF} and checked using an algebraic manipulator. Now we are presenting the complete derivation.
\begin{lm}\label{lm:det}
The following formulas hold for a spherical $(n-1)$-simplex:
\begin{equation}
 \det \frac{\partial\theta_{ij}}{\partial l_{km}'}=(-1)^n
\frac{\prod\sin
l_{ij}'}{\prod\sin\theta_{ij}}\left(\frac{\det\tilde{G}}{\det G}\right)^{\frac{n+1}{2}}\ ,
\end{equation}
and for $n=4$
\begin{equation}
 \det \frac{\partial\theta_{ij}}{\partial l_{km}}=-\det \frac{\partial\theta_{ij}}{\partial
l_{km}'}= -\frac{\det \tilde{G}}{\det
G}\ .
\end{equation}
\end{lm}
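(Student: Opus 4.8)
The plan is to reduce the statement, via the substitution $\tilde G_{ij}=\cos\theta_{ij}$, $G_{ij}=\cos l'_{ij}$, to a Jacobian computation for the polar‑duality map between Gram matrices, and to carry out that computation with the measure‑comparison machinery of Appendix~\ref{var-change}. First I would note that in both Gram matrices the off‑diagonal entries $(i<j)$ are the only independent variables (the diagonal is fixed to $1$, and $\tilde G_{ii}=1$ is automatic), so $d\theta_{ij}=-\frac{1}{\sin\theta_{ij}}\,d\tilde G_{ij}$ and $dG_{ij}=-\sin l'_{ij}\,dl'_{ij}$. Pulling the row factors $1/\sin\theta_{ij}$ and the column factors $\sin l'_{km}$ out of the $\binom n2\times\binom n2$ Jacobian gives
\[
 \det\frac{\partial\theta_{ij}}{\partial l'_{km}}=\frac{\prod_{i<j}\sin l'_{ij}}{\prod_{i<j}\sin\theta_{ij}}\;\det\frac{\partial\tilde G_{ij}}{\partial G_{km}}\ ,
\]
so the whole content of the first formula is that the Jacobian of the polar‑duality map $D\colon G\mapsto\tilde G$ (on unit‑diagonal positive matrices coming from non‑degenerate spherical simplices) equals $(-1)^n(\det\tilde G/\det G)^{(n+1)/2}$.

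Next I would record the explicit form of $D$ and the algebraic identity that governs the half‑integer power. Since the outward unit normals $u_i$ of the facets are the normalised negatives of the dual basis of the vertices $v_i$, one gets $\tilde G_{ij}=(G^{-1})_{ij}\big/\sqrt{(G^{-1})_{ii}(G^{-1})_{jj}}$, equivalently $\tilde G=\Delta^{-1}G^{-1}\Delta^{-1}$ with $\Delta=\operatorname{diag}\!\big(\sqrt{(G^{-1})_{ii}}\big)$; hence $D\circ D=\mathrm{id}$ (the double dual is the original simplex) and
\[
 \det\tilde G=\frac{\det G^{-1}}{\prod_i (G^{-1})_{ii}}\ ,\qquad\text{i.e.}\qquad \prod_i (G^{-1})_{ii}=\frac1{\det G\,\det\tilde G}\ .
\]
Already $\det dD(G)\cdot\det dD(\tilde G)=1$ together with this identity forces the answer to be $\pm(\det\tilde G/\det G)^{(n+1)/2}$, which is a useful consistency check.

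The substantive and hardest step is computing $\det[\partial\tilde G_{ij}/\partial G_{km}]$. Here I would pass to non‑normalised configurations $\hat G=\Lambda G\Lambda$, $\Lambda=\operatorname{diag}(t_i)$, $t_i>0$: the matrix‑inversion map $\hat G\mapsto\hat G^{-1}$ on the full cone $\mathrm{PD}_n$ of positive symmetric matrices intertwines the rescaling $\hat G\mapsto\Lambda\hat G\Lambda$ on the source with $\hat G\mapsto\Lambda^{-1}\hat G\Lambda^{-1}$ on the target and descends to $D$ on the unit‑diagonal slice. From $GL_n$‑invariance of $(\det\hat G)^{-(n+1)/2}\bigwedge_{i\le j}d\hat G_{ij}$ one gets $\det d(\mathrm{inv})=(-1)^{\binom{n+1}2}(\det\hat G)^{-(n+1)}$; combining this with the factorisation $\bigwedge_{i\le j}d\hat G_{ij}=\pm\,2^{n}\big(\prod_i\hat G_{ii}\big)^{(n+1)/2}\bigwedge_i\frac{d\hat G_{ii}}{2\hat G_{ii}}\wedge\bigwedge_{i<j}dG_{ij}$ and the fact that inversion sends $t_i\mapsto t_i^{-1}$ and thus pulls back the fibre form $\bigwedge_i d\log t_i$ by a sign $(-1)^n$, one extracts on the slice, using $\prod_iG_{ii}=1$ and $\prod_i(G^{-1})_{ii}=1/(\det G\det\tilde G)$,
\[
 \det\frac{\partial\tilde G_{ij}}{\partial G_{km}}=(-1)^{n}\Big(\frac{\det\tilde G}{\det G}\Big)^{(n+1)/2}\qquad\text{(sign fixed by the right‑angled simplex $G=\tilde G=\mathbb I$)}\ .
\]
The delicate point, which is precisely what Appendix~\ref{var-change} is built to handle, is the bookkeeping of how the $n$ rescaling directions on the two sides of the duality correspond; this is what turns the naive power $n+1$ coming from inverting symmetric matrices into $\tfrac{n+1}{2}$.

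Finally, for the $n=4$ statement I would start from $\tilde G_{ij}=G^{*}_{ij}/\sqrt{G^{*}_{ii}G^{*}_{jj}}$ and the Desnanot–Jacobi identity $G^{*}_{ii}G^{*}_{jj}-(G^{*}_{ij})^{2}=\det G\cdot G^{**}_{(ij)(ij)}$, giving $\sin^{2}\theta_{ij}=\det G\cdot G^{**}_{(ij)(ij)}/(G^{*}_{ii}G^{*}_{jj})$ and, dually, $\sin^{2}l'_{ij}=\det\tilde G\cdot\tilde G^{**}_{(ij)(ij)}/(\tilde G^{*}_{ii}\tilde G^{*}_{jj})$. For a tetrahedron the codimension‑two minor $G^{**}_{(ij)(ij)}$ is $2\times2$ and equals $\sin^{2}l'_{km}$ for the opposite edge $\{k,m\}=\{1,\dots,4\}\setminus\{i,j\}$, and similarly on the dual side; since $\{i,j\}\mapsto\{k,m\}$ is a fixed‑point‑free involution of the six edges, $\prod_{i<j}G^{**}_{(ij)(ij)}=\prod_{i<j}\sin^{2}l'_{ij}$ and $\prod_{i<j}\tilde G^{**}_{(ij)(ij)}=\prod_{i<j}\sin^{2}\theta_{ij}$. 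Combining these with $\prod_iG^{*}_{ii}=(\det G)^{\,n-1}/\det\tilde G$ (the key identity once more) yields $\prod_{i<j}\sin l'_{ij}\big/\prod_{i<j}\sin\theta_{ij}=(\det G/\det\tilde G)^{3/2}$, so the $5/2$ power in the general formula collapses to $1$ and $\det[\partial\theta_{ij}/\partial l'_{km}]=\det\tilde G/\det G$. The remaining sign, $\det[\partial\theta_{ij}/\partial l_{km}]=-\det[\partial\theta_{ij}/\partial l'_{km}]$, comes from the change of variables relating the two length conventions $l$ and $l'$, which on the six‑dimensional edge space has Jacobian determinant $-1$ (it is the opposite‑edge relabelling, a product of three transpositions).
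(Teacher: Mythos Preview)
Your proposal is correct and follows essentially the same route as the paper: both factor the duality $G\mapsto\tilde G$ as matrix inversion on the positive cone followed by the diagonal rescaling to the unit-diagonal slice, use the $GL_n$-invariant measure $(\det M)^{-(n+1)/2}\bigwedge_{i\le j}dm_{ij}$ together with its pullback under inversion, quotient by the scaling directions to recover the $(n+1)/2$ exponent and the $(-1)^n$ sign, and check the constant at $M=\mathbb I$; for $n=4$ both simplify $\prod\sin l'_{ij}/\prod\sin\theta_{ij}$ via the cofactor identities (the paper cites Kokkendorff, you invoke Desnanot--Jacobi, which are the same relation here), and both obtain the residual sign from $\det\partial l/\partial l'=-1$ for the opposite-edge involution.
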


Where we used standard convention that
the angle $\theta_{ij}$ is the angle on
the hinge obtained by leaving out indices $i$ and $j$. The length of the opposite edge, i.e. the edge connecting vertices $i$ and $j$, is denoted by
$l'_{ij}$ and in $3$D, $l_{ij}$ is the length of the edge at which the angle sits.

\subsubsection{Outline of the proof}

We compute how the measure $\bigwedge d l_{ij}'$ transforms under the the change of
variables
\begin{equation}
 \theta_{ij}\rightarrow l_{ij}'\ .
\end{equation}
In fact, introducing variables $m_{ij}=\cos\theta_{ij}$ and $m_{ij}'=\cos l_{ij}$, we have (in the
right order)
\begin{align}
 \prod \sin\theta_{ij}\bigwedge_{i<j}
d\theta_{ij}&=\prod_i \delta(m_{ii}-1)\bigwedge_{i\leq j} \rd m_{ij} \quad , \\
 \prod \sin l_{ij}'\bigwedge_{i<j}
dl_{ij}'&=\prod_i \delta(m_{ii}'-1)\bigwedge_{i\leq j} \rd m_{ij}' \quad .
\end{align}
Both measures on the left hand side are on
\begin{equation}
 \tilde{M}_1=\{M\in \tilde{M}\colon \forall_i m_{ii}=1\} \quad ,
\end{equation}
where we introduced the notation $\tilde{M}=GL_+(n)$ for simplicity.

\subsubsection{Computation}

There is an action of the group of diagonal matrices
\begin{equation}
 D=\{d\in GL_+(n)\colon d_{ij}=\lambda_i\delta_{ij},\quad \lambda_i>0\}
\end{equation}
on $\tilde{M}$ given by
\begin{equation} \label{eq:diag_trafo}
 M\rightarrow d^TMd \quad .
\end{equation}
A basis for the Lie algebra ${\mathfrak d}$ of the group $D$ is given by
$\partial_{\xi_i}=E_{ii}$ (matrices with only one nonzero entry being the $i$-th element
on the diagonal equal to $1$).

We have a fibration
\begin{equation}
 \tilde{M}\rightarrow \tilde{M}/D
\end{equation}
and let $\tilde{M}_1\subset \tilde{M}$ be a cross section given by the equations
\begin{equation}
\forall_i\ m_{ii}=1\ .
\end{equation}
 Let us introduce maps
\begin{equation}
\begin{split}
 &\psi_1\colon \tilde{M}_1\rightarrow \tilde{M}/D,\quad M\, {\mapsto}\, [M]\quad ,\\
&\psi_2\colon \tilde{M}\rightarrow \tilde{M},\quad M\,{\mapsto}\, M^{-1} \quad .
\end{split}
\end{equation}
Acting with $\psi_2$ on matrix transformed as in \eqref{eq:diag_trafo}, we have
\begin{equation}
 \psi_2(d^TMd)=(d^{-1})^T\psi_2(M)(d^{-1}) \quad ,
\end{equation}
such that there is a map
\begin{equation}
 [\psi_2]\colon \tilde{M}/D\rightarrow \tilde{M}/D \quad .
\end{equation}
Let us notice that the composition $\psi:=\psi_1^{-1}[\psi_2]\psi_1$ transforms $\tilde{M}_1$ into
$\tilde{M}_1$.

We define measures
\begin{equation}
 \begin{split}
  \mu_1&=\prod\delta(m_{ii}-1)\bigwedge_{i\leq j} \rd m_{ij}=\bigwedge_{i<j} \rd m_{ij} \quad ,\\
   \mu&=(\det M)^{-\frac{n+1}{2}}\bigwedge_{i\leq j} \rd m_{ij} \quad ,\\
\mu_{\tilde{M}/D}&=\mu\perp \bigwedge_i\partial_{\xi_i} \quad ,
 \end{split}
\end{equation}
where $\partial_{\xi_i}$ is the basis of the Lie algebra ${\mathfrak d}$
\begin{equation}
 \partial_{\xi_i}m_{kl}=(\delta_{ik}+ \delta_{il})m_{kl}\quad .
\end{equation}
Let us notice that according to section \ref{variables-theta} $\mu$ is $SL(n)$ invariant
(where it acts as $D$) thus the pullback
\begin{equation}
 \psi_2^*\mu=c\mu \quad ,
\end{equation}
since $SL(n)$ acts transitively on $\tilde{M}$. We can check that $c=1$ by computing the measures
for $M=M^{-1}={\mathbb
I}$.

We have $\psi_2\partial_{\xi_i}=-\partial_{\xi_i}$ so
\begin{equation}
 [\psi_2]^*\mu_{\tilde{M}/D}=(-1)^n\mu_{\tilde{M}/D} \quad .
\end{equation}
From basic facts explained in the appendix \ref{var-change} we know that (in the right order)
\begin{equation}
 \psi_1^*\mu_{\tilde{M}/D}=\left[\det \partial_{\xi_i}(m_{jj}-1)\right](\det
M)^{\frac{n+1}{2}}\mu_1
=2^n(\det M)^{\frac{n+1}{2}}\mu_1 \quad .
\end{equation}
Combining all transformations we obtain
\begin{equation}
 \mu_1=(-1)^n \left(\frac{\det M}{\det \psi(M)}\right)^{\frac{n+1}{2}}\psi^*\mu_1 \quad .
\end{equation}
Finally, we obtain
\begin{equation}
 \bigwedge d\theta_{ij}=(-1)^n\frac{\prod\sin
l_{ij}'}{\prod\sin\theta_{ij}}\left(\frac{\det\tilde{G}}{\det G}\right)^{\frac{n+1}{2}}\bigwedge
dl_{ij}' \quad .
\end{equation}
So eventually
\begin{equation}
 \det \frac{\partial\theta_{ij}}{\partial l_{km}'}=(-1)^n
\frac{\prod\sin
l_{ij}}{\prod\sin\theta_{ij}}\left(\frac{\det\tilde{G}}{\det G}\right)^{\frac{n+1}{2}}\ .
\end{equation}

\subsubsection{Further simplifications for $n=4$}

We can simplify the above formula using equalities from \cite{Kokkendorff}:
\begin{equation}
 \det\tilde{G}=\frac{(\det G)^{n-1}}{\prod G^*_{ii}},\ (\sin\theta_{ij})^2=\frac{\det G\det
G_{(ij)}}{G^*_{ii}G^*_{jj}}\ ,
\end{equation}
where $G^*_{ii}$ is the $ii$-element of the minor matrix, $G_{(ij)}$ is the $G$ matrix without $i$th and $j$th rows and columns. Eventually we obtain
\begin{equation}
 \det \frac{\partial\theta_{ij}}{\partial l_{km}'}=(-1)^n
\frac{\prod\sin l_{ij}'\prod (G_{kk}^*)^{\frac{n-1}{2}}}{(\det G)^{\frac{n(n-1)}{4}}\prod \sqrt{\det
G_{(ij)}}}
\frac{(\det G)^{\frac{(n-1)(n+1)}{2}}}{\prod (G^*_{kk})^{\frac{n+1}{2}}}\frac{1}{(\det
G)^{\frac{n+1}{2}}}\ .
\end{equation}
After simplification it is equal to
\begin{equation}
 \det \frac{\partial\theta_{ij}}{\partial l_{km}'}=
\underbrace{\left(\prod\frac{\sin l_{ij}'}{\sqrt{\det
G_{(ij)}}}\right)}_{=1} \frac{\det \tilde{G}}{\det G}\ .
\end{equation}
Note that this simplification only holds for $n=4$. Since
\begin{equation}
 \det \frac{\partial l_{ij}}{\partial l_{km}'}=-1
\end{equation}
we obtain
\begin{equation}
  \det \frac{\partial\theta_{ij}}{\partial l_{km}}= -\det \frac{\partial\theta_{ij}}{\partial
l_{km}'} \quad .
\end{equation}

\section{Technical computations of determinants} \label{app:technical}

In this section we will prove several technical results.

Let us introduce the notation
\begin{equation}
 {\det}' M=\sum_i M^*_{ii}\ .
\end{equation}
It is an invariant of the matrix and, moreover, in the case when the matrix is symmetric and has one
null eigenvector, it is the determinant of the matrix restricted to the space perpendicular to that
null eigenvector.

Let us also remind some general facts
\begin{align}
&l_{ij}\frac{\partial\theta_{ij}}{\partial
l_{kl}}=l_{kl}\frac{\partial\theta_{ij}}{\partial
l_{kl}}=0\quad , \quad
\frac{\partial\theta_{ij}}{\partial
l_{kl}}=\frac{\partial\theta_{kl}}{\partial
l_{ij}} \quad ,\\
&l_{ij}=\lambda\frac{\partial\det\tilde{G}}{\partial\theta_{ij}}\quad ,\quad \lambda=- \frac{2^2 \prod
S_i^2}{3^5 V^5} \quad ,\\
&{\det}'\tilde{G}=\frac{3^4}{2^2}(\sum_i S_i^2)\frac{
V^4}{\prod S_i^2} \quad ,
\end{align}
which are proven for completeness in appendix \ref{bunch4}. Our results are (see appendix \ref{bunch2}):

\begin{lm}
For the flat tetrahedron holds
\begin{equation}
 {\det}'\frac{\partial\theta_{ij}}{\partial
l_{kl}}= \frac{3^3}{2^5} \frac{|l|^2}{\prod S_i^2} V^3 \quad .
\end{equation}
\end{lm}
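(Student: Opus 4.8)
The plan is to derive this from the spherical formula of Lemma~\ref{lm:det} by a flat-limit (small curvature) argument. Fix a non-degenerate flat tetrahedron with edge lengths $l_{kl}$, face areas $S_i$ and volume $V$. For small $t>0$, consider the geodesic tetrahedron on the unit $3$-sphere whose angular edge lengths are $t\,l_{kl}$; as $t\to 0$ its dihedral angles converge to those of the flat tetrahedron. Writing the spherical dihedral angles as functions $\Theta_{ij}$ of the angular edge lengths, a curvature expansion gives $\Theta_{ij}(\{t l\})=\theta_{ij}(\{l\})+t^{2}c_{ij}(\{l\})+O(t^{4})$, where $\theta_{ij}$ is the flat dihedral angle, homogeneous of degree $0$, and the correction $c_{ij}$ is homogeneous of degree $2$ in the $l_{kl}$. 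By the chain rule and Euler's relation, the $6\times 6$ matrix $M(t):=\partial\Theta_{ij}/\partial(t l_{km})$ then satisfies $M(t)=t^{-1}\left(A+t^{2}B+O(t^{4})\right)$, with $A=\partial\theta_{ij}/\partial l_{km}$ the flat matrix whose ${\det}'$ we must compute, and $B=\partial c_{ij}/\partial l_{km}$.

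Next I would analyse the left-hand side of Lemma~\ref{lm:det}. The flat matrix $A$ is symmetric with one-dimensional kernel spanned by $\vec l=(l_{12},\dots,l_{34})$, by the $3$D Schl\"afli identity $\sum_{ij}l_{ij}\,d\theta_{ij}=0$. Hence $\det(A+t^{2}B)=t^{2}\,\dfrac{\vec l^{\,T}B\,\vec l}{|l|^{2}}\,{\det}'A+O(t^{4})$, so $\det M(t)=t^{-4}\,\dfrac{\vec l^{\,T}B\,\vec l}{|l|^{2}}\,{\det}'A+O(t^{-2})$. Since $c_{ij}$ is homogeneous of degree $2$, Euler's relation gives $\vec l^{\,T}B\,\vec l=2\sum_{i<j}l_{ij}c_{ij}$, and this last number is pinned down by the spherical Schl\"afli formula restricted to the scaling direction, $\sum_{i<j}(t l_{ij})\,\partial_{t}\Theta_{ij}=c\,\partial_{t}V_{\mathrm{sph}}(t)$, with $c$ the ($t$-independent, nonzero) Schl\"afli constant and $V_{\mathrm{sph}}(t)=t^{3}V+O(t^{5})$; comparing $t^{2}$ coefficients yields $\vec l^{\,T}B\,\vec l=3cV$.

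On the right-hand side, Lemma~\ref{lm:det} gives $\det M(t)=-\det\tilde G(t)/\det G(t)$, where $\tilde G(t)_{ij}=\cos\Theta_{ij}$ and $G(t)$ is the vertex Gram matrix. Since the four vertices approach a common point with tangent displacements $t\,w_i$ that form the flat tetrahedron, $\det G(t)=\left(t^{3}\det(w_1,w_2,w_3)\right)^{2}+O(t^{8})=36\,V^{2}t^{6}+O(t^{8})$. For the angle Gram matrix, $\tilde G(t)=\tilde G+t^{2}\delta\tilde G+O(t^{4})$ with $\delta\tilde G_{ij}=-\sin\theta_{ij}\,c_{ij}$ for $i\neq j$; the flat $\tilde G$ is singular with null eigenvector $S=(S_1,\dots,S_4)$ and adjugate $({\det}'\tilde G)\,SS^{T}/\sum_k S_k^{2}$, so $\det\tilde G(t)=t^{2}\,\dfrac{S^{T}\delta\tilde G\,S}{\sum_k S_k^{2}}\,{\det}'\tilde G+O(t^{4})$. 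Using $S_iS_j\sin\theta_{ij}=\tfrac32 V\,l_{ij}$ from \eqref{eq:relation_angle_length} together with $\sum_{i<j}l_{ij}c_{ij}=\tfrac32 cV$ gives $S^{T}\delta\tilde G\,S=-\tfrac92 cV^{2}$. Equating the leading $t^{-4}$ terms of the two expressions for $\det M(t)$ and cancelling $c$ yields ${\det}'A=\dfrac{|l|^{2}\,{\det}'\tilde G}{24\,V\sum_k S_k^{2}}$, and substituting ${\det}'\tilde G=\tfrac{3^{4}}{2^{2}}\bigl(\sum_k S_k^{2}\bigr)\dfrac{V^{4}}{\prod_i S_i^{2}}$ from \eqref{eq:help1} produces exactly ${\det}'\dfrac{\partial\theta_{ij}}{\partial l_{kl}}=\dfrac{3^{3}}{2^{5}}\dfrac{|l|^{2}}{\prod_i S_i^{2}}\,V^{3}$.

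The main obstacle is the bookkeeping of orders in $t$: both $\det M(t)$ and $-\det\tilde G(t)/\det G(t)$ diverge like $t^{-4}$ in the flat limit (reflecting the singularity of $A$), so the identity becomes usable only after isolating the leading coefficient, i.e. after restricting off the common null direction $\vec l$ (respectively $S$). The one genuinely analytic input is the curvature expansion $\Theta_{ij}(\{t l\})=\theta_{ij}+t^{2}c_{ij}+O(t^{4})$ with $c_{ij}$ homogeneous of degree $2$, and the use of the spherical Schl\"afli formula along the scaling direction to evaluate $\sum_{i<j}l_{ij}c_{ij}$; everything else reduces to linear algebra and the already-established geometric relations \eqref{eq:relation_angle_length} and \eqref{eq:help1}. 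Note that the value of the Schl\"afli constant $c$ is never needed, as it cancels between the two sides.
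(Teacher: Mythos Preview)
Your proof is correct and follows essentially the same flat-limit strategy as the paper (Appendix~\ref{bunch2}): both sides of the spherical identity $\det(\partial\theta/\partial l)=-\det\tilde G/\det G$ diverge like $t^{-4}$, and matching the leading coefficients by projecting off the common null direction $\vec l$ (respectively $\vec S$) yields the result. The one minor difference is that you invoke the spherical Schl\"afli identity to evaluate $\sum_{i<j} l_{ij}c_{ij}$, whereas the paper simply observes that this entire quantity---equivalently $\vec l^{\,T}B\vec l = \sum_{ij,km} l_{ij}l_{km}\,\partial\theta^{\epsilon}_{ij}/\partial l_{km}$---appears as a common factor on both sides and cancels, so the Schl\"afli step is unnecessary (though it does make the nonvanishing of that factor transparent).
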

Moreover, in appendix \ref{bunch} we prove:
\begin{lm}\label{lm:bunch}
Let
 \begin{equation}
 \lambda=-\frac{2^2 \prod S_i^2}{3^5 V^5} \quad .
\end{equation}
For the flat tetrahedron holds
\begin{align}
&\frac{\partial\lambda}{\partial l_{ij}}
\frac{\partial\det\tilde{G}}{\partial\theta_{kl}}+
\frac{\partial\theta_{ij}}{\partial
l_{mn}}\lambda\frac{\partial^2\det\tilde{G}}{\partial\theta_{mn}\partial\theta_{kl}}=
\delta_{(ij),(kl)} \quad ,\\
&\exists_c\
\frac{\partial\det\tilde{G}}{\partial\theta_{kl}}c+\lambda\frac{\partial\lambda}{\partial
l_{ij}}\frac{\partial^2\det\tilde{G}}{\partial\theta_{ij}\partial\theta_{kl}}=0 \quad ,\\
&\frac{\partial\lambda}{\partial
l_{mn}}\frac{\partial\det\tilde{G}}{\partial\theta_{mn}}=1 \quad .
\end{align}
\end{lm}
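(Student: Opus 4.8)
The strategy is to deduce all three identities from a single master relation, the stationary‑point equation \eqref{eq:stationary_theta} together with $\lambda=|l|\rho$, which after substituting the dihedral angles $\theta=\theta(l)$ of the flat tetrahedron reads
\[
 l_{ij}=\lambda\,\frac{\partial\det\tilde G}{\partial\theta_{ij}}\ ,\qquad\text{equivalently}\qquad \frac{\partial\det\tilde G}{\partial\theta_{ij}}=\frac{l_{ij}}{\lambda}\ .
\]
This is an \emph{identity} in the edge lengths $l$ (for every non‑degenerate choice of lengths the stationary configuration is precisely the flat tetrahedron, with $\lambda=-2^2\prod S_i^2/(3^5V^5)\neq 0$), so it may be differentiated. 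Throughout I write $g=(\partial\det\tilde G/\partial\theta_{ij})$, $\lambda'=(\partial\lambda/\partial l_{ij})$, $\vec l=(l_{ij})$ as column vectors indexed by the six edges, and $J=(\partial\theta_{ij}/\partial l_{mn})$, $Q=(\partial^2\det\tilde G/\partial\theta_{ij}\partial\theta_{mn})$ as symmetric $6\times 6$ matrices, using the symmetry $\partial\theta_{ij}/\partial l_{mn}=\partial\theta_{mn}/\partial l_{ij}$ recalled in this appendix.

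The third identity is immediate: $\lambda$ is homogeneous of degree one in $l$, so Euler's theorem gives $\vec l^{\,T}\lambda'=\lambda$, and the master relation $g=\vec l/\lambda$ then yields $g^{T}\lambda'=\vec l^{\,T}\lambda'/\lambda=1$, i.e. $\frac{\partial\lambda}{\partial l_{mn}}\frac{\partial\det\tilde G}{\partial\theta_{mn}}=1$.

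For the first identity I differentiate $l_{kl}=\lambda(l)\,\partial_{\theta_{kl}}\det\tilde G(\theta(l))$ with respect to $l_{ij}$; the chain rule applied to the right‑hand factor gives
\[
 \delta_{(ij),(kl)}=\frac{\partial\lambda}{\partial l_{ij}}\,\frac{\partial\det\tilde G}{\partial\theta_{kl}}+\lambda\sum_{mn}\frac{\partial^2\det\tilde G}{\partial\theta_{kl}\partial\theta_{mn}}\,\frac{\partial\theta_{mn}}{\partial l_{ij}}\ ,
\]
and replacing $\partial\theta_{mn}/\partial l_{ij}$ by $\partial\theta_{ij}/\partial l_{mn}$ produces exactly the first claimed identity; in matrix form this is $\lambda' g^{T}+\lambda J Q=I$. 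Multiplying this identity on the right by $\lambda'$ and using $g^{T}\lambda'=1$ (the third identity) gives $\lambda'+\lambda J(Q\lambda')=\lambda'$, hence $J(Q\lambda')=0$. Now $J\vec l=0$ (the Schl\"afli relation $l_{mn}\,\partial\theta_{ij}/\partial l_{mn}=0$) and $\det'J\neq 0$ for a non‑degenerate flat tetrahedron (the companion lemma, proven in appendix~\ref{bunch2}), so $J$ has a one‑dimensional kernel $\mathbb R\vec l=\mathbb R g$. Therefore $Q\lambda'=\kappa\vec l=\kappa\lambda g$ for some scalar $\kappa$, and setting $c:=-\kappa\lambda^{2}$ yields $c\,\partial_{\theta_{kl}}\det\tilde G+\lambda\,\frac{\partial\lambda}{\partial l_{ij}}\,\partial^2_{\theta_{ij}\theta_{kl}}\det\tilde G=0$, which is the second identity.

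The only genuinely delicate points are the two flagged above: that the master relation holds as an identity in $l$ (so that differentiation is legitimate), and that $\ker(\partial\theta/\partial l)$ is exactly one‑dimensional, which is precisely where the nonvanishing of $\det'(\partial\theta/\partial l)$ enters. Apart from that, the main nuisance is bookkeeping of the summation convention over the six edges — pairs $i<j$ versus ordered pairs, and the associated factors of two — but that is routine.
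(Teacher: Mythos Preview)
Your proposal is correct and follows essentially the same route as the paper: both proofs differentiate the master relation $l_{ij}=\lambda\,\partial_{\theta_{ij}}\det\tilde G$ to obtain the first identity, invoke the degree-one homogeneity of $\lambda$ for the third, and use that $\ker(\partial\theta/\partial l)=\mathbb{R}\vec l$ to establish the second. The only cosmetic difference is that you deduce $J(Q\lambda')=0$ by multiplying the first identity (in matrix form) by $\lambda'$, whereas the paper obtains the equivalent statement by directly contracting the second-identity vector with $\partial\theta_{kl}/\partial l_{mn}$ and applying the chain rule; the underlying computation is the same.
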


\subsection{General knowledge}
\label{bunch4}

We know that $\det\tilde{G}=0$ for a geometric set of $\theta$'s. Moreover, the null eigenvector is
given by
\begin{equation}
 (S_1,S_2,S_3,S_4)\quad ,
\end{equation}
where $S_i$ denote the areas of the triangles of the tetrahedron. The computation of
${\det}'\tilde{G}$ can be found in appendix \ref{bunch41}.
We have
\begin{equation}
 \frac{\partial\det\tilde{G}}{\partial\theta_{ij}}=-2{\det}'\tilde{G}\frac{S_iS_j\sin\theta_{ij}}{
\sum
_k S_k^2}=- \frac{3^5}{2^2}(\sum_k S_k^2)\frac{
V^4}{\prod S_k^2} \frac{Vl_{ij}}{\sum
_k S_k^2}=- \frac{3^5}{2^2}\frac{V^5}{\prod S_k^2}\ l_{ij}=\lambda^{-1} l_{ij} \quad .
\end{equation}
In addition to that, we also have
\begin{equation}
 0=\frac{\partial\det\tilde{G}}{\partial l_{kl}}=\frac{\partial\det\tilde{G}}{\partial
\theta_{ij}}\frac{\partial\theta_{ij}}{\partial l_{kl}}=\lambda
l_{ij}\frac{\partial\theta_{ij}}{\partial l_{kl}} \quad .
\end{equation}
We know that $\theta$ has scaling dimension $0$, thus
\begin{equation}
 l_{kl}\frac{\partial\theta_{ij}}{\partial l_{kl}}=0 \quad .
\end{equation}

\subsubsection{Expressing $\frac{\partial\theta_{ij}}{\partial l_{kl}}$ in terms of $l_{ij}$}
\label{app-theta-l}

Here we recall several well-known facts for flat simplices of arbitrary dimension using the notation of $l_{ij}'$ from appendix \ref{bunch3}, see also \cite{Sorkin,DF} for more details.
Let $M$ be the following matrix
\begin{equation}
 M=\left[\begin{array}{cccc}
          0 & 1&\ldots& 1\\
          1 & {l'}_{11}^2& \ldots & {l'}_{1n}^2\\
          \vdots & \vdots &\ddots &\vdots\\
          1 & {l'}_{n1}^2& \ldots & {l'}_{nn}^2
         \end{array}\right] \quad ,
\end{equation}
where ${l'}_{11} = \hdots = {l'}_{nn} = 0$ and ${l'}_{ij} = {l'}_{ji}$. Then we have:
\begin{align}
V^2&=\frac{(-1)^{n-1}}{2^{n}(n-1)!^2}\det M \quad ,\quad
S_i^2=\frac{(-1)^{n-2}}{2^{n-1}(n-2)!^2}\ M_{ii}^* \quad ,\\
 \cos\theta_{ij}&=\frac{M_{ij}^*}{\sqrt{M_{ii}^*M_{jj}^*}} \quad .
\end{align}
In three dimension we also have
\begin{equation}
 \sin^2\theta_{ij}= \left(\frac{3}{2}\right)^2 \frac{V^2 l_{ij}^2}{S_i^2S_j^2} \quad ,
\end{equation}
so in the case $\theta_{ij}\in(0,\pi)$ we can write:
\begin{equation}
 \frac{\partial\theta_{ij}}{\partial
l_{kl}}=-\frac{1}{\sin\theta_{ij}}\frac{\partial\cos\theta_{ij}}{\partial l_{kl}}=
- \frac{2}{3}\frac{S_iS_j}{V l_{ij}}\frac{\partial}{\partial l_{kl}}\frac{M_{ij}^*}{\sqrt{M_{ii}^*M_{jj}^*}} \quad .
\end{equation}
This, in principle, allows us to compute $\frac{\partial\theta_{ij}}{\partial
l_{kl}}$ and all other derivatives in terms of lengths.

\subsubsection{Computation of ${\det}'\tilde{G}$}
\label{bunch41}

Let us start with the spherical case, i.e. a tetrahedron with constant non vanishing (positive)
curvature -- a curved tetrahedron on the unit sphere. In this case
we define $\tilde{l}_{ij}:=\epsilon l_{ij}$ and
$\theta^\epsilon_{ij}:=\theta(\epsilon l_{kl})$ and take the limit
$\epsilon\rightarrow 0$ in order to reobtain the flat case. The angles ($\theta_{ij}^\epsilon$) have a limit as the angles of
flat tetrahedron ($\theta_{ij}$) with lengths $l_{ij}$.

First, let us notice that
\begin{align}
 \det G&=\det \left(\begin{array}{ccc}
               1 & 0 &\cdots\\
               1 & G &\cdots\\
               \vdots &\cdots &\cdots
              \end{array}
\right)=\det \left(\begin{array}{ccc}
               1 & 0 &\cdots\\
               1 & 1-\frac{1}{2}\epsilon^2 l_{ij}^2 + O(\epsilon^4) &\cdots\\
               \vdots &\cdots &\cdots
              \end{array}
\right)\\&=\frac{1}{8}\epsilon^{6} \det \underbrace{\left(\begin{array}{
ccc }
               0 & 1 &\cdots\\
               1 & l_{ij}^2 &\cdots\\
               \vdots &\cdots &\cdots
              \end{array}
\right)}_{C}+O(\epsilon^8) \quad .
\end{align}
We can compute ${\det}'\tilde{G}=\sum_i \tilde{G}^*_{ii}$ using the following identity from
\cite{Kokkendorff} ($n=4$, i.e. $D=3$):
\begin{equation}
 \frac{\tilde{G}_{ii}^*}{G_{ii}^*}=\frac{(\det G)^{n-2}}{\prod G_{ii}^*} \quad ,
\end{equation}
obtaining
\begin{equation}
{\det}'\tilde{G}=\sum_i \tilde{G}^*_{ii}=\left(\sum_i G_{ii}^*\right)\frac{(\det
G)^{n-2}}{\prod G_{ii}^*}
=\frac{3^4}{2^2}\left(\sum_i S_i^2\right)\frac{
V^4}{\prod S_i^2}+O(\epsilon^2) \quad .
\end{equation}
So in the flat case
\begin{equation}
 {\det}'\tilde{G}=\frac{3^4}{2^2}\left(\sum_i S_i^2\right)\frac{
V^4}{\prod S_i^2} \quad .
\end{equation}

\subsection{ Collection of results}
\label{bunch}

Let us prove the following useful formulas:
 \begin{equation}
  \frac{\partial\lambda}{\partial l_{ij}}
\frac{\partial\det\tilde{G}}{\partial\theta_{kl}}+\sum_{m<n}
\frac{\partial\theta_{ij}}{\partial
l_{mn}}\lambda\frac{\partial^2\det\tilde{G}}{\partial\theta_{mn}\partial\theta_{kl}}=\frac{\partial}{\partial
l_{ij}}\left(\lambda\frac{\partial\det\tilde{G}}{\partial\theta_{kl}}\right)=\delta_{(ij),(kl)} \quad ,
 \end{equation}
since we know, due to the Schl\"afli identity, that
$\frac{\partial\theta_{ij}}{\partial l_{kl}}=\frac{\partial\theta_{kl}}{\partial l_{ij}}$.
This also implies that
\begin{equation}
 \sum_{m<n}
\frac{\partial\theta_{mn}}{\partial
l_{ij}}\lambda\frac{\partial^2\det\tilde{G}}{\partial\theta_{mn}\partial\theta_{kl}}=
\delta_{(ij),(kl)}-\frac{\partial\lambda}{\partial l_{ij}}\frac{l_{kl}}{\lambda} \quad .
\end{equation}
We will now prove that there exists such a $c$ that
\begin{equation}
 \label{c-eq}
\frac{\partial\det\tilde{G}}{\partial\theta_{kl}}c+\lambda\frac{\partial\lambda}{\partial
l_{ij}}\frac{\partial^2\det\tilde{G}}{\partial\theta_{ij}\partial\theta_{kl}} =0 \quad .
\end{equation}
Because the range of the matrix $\frac{\partial\theta_{ij}}{\partial l_{kl}}$ is the whole space
perpendicular to the vector $\vec{l}=(l_{ij})$ and the vector
$\frac{\partial\det\tilde{G}}{\partial\theta_{kl}}$
is proportional to $\vec{l}$, it is enough to compute
\begin{equation}
\begin{split}
&\left(\frac{\partial\det\tilde{G}}{\partial\theta_{kl}}c+\lambda\frac{\partial\lambda}{\partial
l_{ij}}\frac{\partial^2\det\tilde{G}}{\partial\theta_{ij}\partial\theta_{kl}}\right)
 \frac{\partial\theta_{kl}}{\partial l_{mn}}=
\lambda\frac{\partial\lambda}{\partial
l_{ij}}\frac{\partial}{\partial
l_{mn}}\frac{\partial\det\tilde{G}}{\partial\theta_{ij}}=\\
&=\lambda\frac{\partial\lambda}{\partial
l_{ij}}\frac{\partial}{\partial
l_{mn}}\frac{l_{ij}}{\lambda}=\lambda\frac{\partial\lambda}{\partial
l_{ij}}\left(\frac{\delta_{(ij)(mn)}}{\lambda}-l_{ij}\frac{\partial\lambda}{\partial
l_{mn}}\frac{1}{(\lambda)^2}\right) \quad .
\end{split}
\end{equation}
On the other hand we know that, since $\lambda$ is of scaling dimension 1, $l_{ij}\frac{\partial\lambda}{\partial
l_{ij}}=\lambda$, and thus
\begin{equation}
 \lambda\frac{\partial\lambda}{\partial
l_{ij}}\left(\frac{\delta_{(ij)(mn)}}{\lambda}-l_{ij}\frac{\partial\lambda}{
\partial
l_{mn}}\frac{1}{(\lambda)^2}\right)=
\frac{\partial\lambda}{\partial
l_{mn}}-\frac{\partial\lambda}{\partial
l_{mn}}=0 \quad .
\end{equation}
Let us also remind that:
\begin{equation}
 \frac{\partial\lambda}{\partial
l_{mn}}\frac{\partial\det\tilde{G}}{\partial\theta_{mn}}=\frac{\partial\lambda}{\partial
l_{mn}}\frac{l_{mn}}{\lambda}=1 \quad .
\end{equation}

\subsection{Computation of $\det'\frac{\partial\theta_{ij}}{\partial l_{ij}}$}
\label{bunch2}

In this section we will prove that
\begin{equation}
 {\det}'\frac{\partial\theta_{ij}}{\partial
l_{ij}}= \frac{3^3}{2^5}  \frac{|l|^2}{\prod S_i^2} V^3 \quad .
\end{equation}
We will start from the formula valid for a spherical tetrahedron (Lemma \ref{lm:det}):
\begin{equation}
 \det\frac{\partial\theta_{ij}}{\partial
\tilde{l}_{ij}}=-\frac{\det\tilde{G}}{\det G}\ .
\end{equation}
As mentioned above we set $\tilde{l}_{ij}=\epsilon l_{ij}$ and
$\theta^\epsilon_{ij}=\theta(\epsilon l_{kl})$ and take the limit $\epsilon\rightarrow 0$ in
the end. In this limit the angles converge to the angles of a flat tetrahedron with lengths
$l_{ij}$.

Let us remind that
\begin{equation}
 \det G=\frac{1}{8}\epsilon^{6} \det \underbrace{\left(\begin{array}{
ccc }
               0 & 1 &\cdots\\
               1 & l_{ij}^2 &\cdots\\
               \vdots &\cdots &\cdots
              \end{array}
\right)}_{C}+O(\epsilon^8) \quad .
\end{equation}
Let us notice that because $G$ (in the spherical case) is a function of $\cos\epsilon l_{ij}$,
its expansion around $\epsilon=0$ is an analytic function in
$\epsilon^2$ and not only in $\epsilon$. The same holds for the matrix $\tilde{G}$ since it
is
\begin{equation}
 \tilde{G}_{ij}=\frac{1}{\sqrt{G^*_{ii}}}G^*_{ij}\frac{1}{\sqrt{G^*_{jj}}} \quad ,
\end{equation}
where $G^*_{ij}$ is the cofactor matrix of $G$ and $\sqrt{G^*_{ii}}$ is $\epsilon^3$
times an analytic
function in $\epsilon^2$.

Hence we know that for the vector
$\vec{S}=(S_1,S_2,S_3,S_4)$ (the single null eigenvector in the limit $\epsilon= 0$)
\begin{equation}
 \tilde{G}\vec{S}=O(\epsilon^2)\ ,\ (\vec{S},\tilde{G}\vec{S})=O(\epsilon^2)\ ,
\end{equation}
then also $\det \tilde{G}=O(\epsilon^2)$ and
\begin{equation}
 \det\tilde{G}={\det}'\tilde{G}\frac{(\vec{S},\tilde{G}\vec{S})}{|S|^2}
+O(\epsilon^4)\ .
\end{equation}
Moreover
\begin{equation}
 \epsilon\frac{\partial}{\partial\epsilon}\frac{(\vec{S},\tilde{G}\vec{S})}{|S|^2}
=2\frac{(\vec{S},\tilde{G}\vec{S})}{|S|^2}+O(\epsilon^3)\ .
\end{equation}
We have
\begin{align}
 \epsilon\frac{\partial}{\partial\epsilon}\frac{(\vec{S},\tilde{G}\vec{S})}{|S|^2}
&=\sum_{(ij)}l_{ij}\partial_{l_{ij}}\frac{
(\vec{S},\tilde{G}\vec{S})}{|S|^2}
=\frac{(\vec{S},\sum_{(ij)}l_{ij}\partial_{l_{ij}}\tilde{G}\vec{S})}{|S|^2}=\\
&=-2\frac{
\sum_{(km)(ij)}S_kS_m\sin\theta_{km}l_{ij}\frac{\partial\theta^\epsilon_{km}}{
\partial l_{ij}}}{|S|^2}=- 3 \frac{V}{|S|^2}\sum_{(km)(ij)}l_{km}l_{ij}\frac{
\partial\theta^\epsilon_{km}}{\partial
l_{ij}} \quad .
\end{align}
Similarly, we know that $\sum_{ij}\frac{\partial\theta^\epsilon_{km}}{\partial
l_{ij}}l_{ij}=O(\epsilon^2)$ and
$\sum_{ijkm}l_{km}\frac{\partial\theta^\epsilon_{km}}{\partial
l_{ij}}l_{ij}=O(\epsilon^2)$, so
\begin{align}
 \det\frac{\partial\theta_{km}}{\partial
\tilde{l}_{ij}}=\epsilon^{-6}\det\frac{\partial\theta^\epsilon_{km}}{\partial
l_{ij}}=\epsilon^{-6}{\det}'\frac{\partial\theta^\epsilon_{km}}{\partial
l_{ij}}\frac{\sum_{(ij)(km)}l_{km}\frac{\partial\theta^\epsilon_{km}}{\partial
l_{ij}}l_{ij}}{|l|^2} \quad .
\end{align}
Eventually, we have
\begin{equation}
 \epsilon^{-6}{\det}'\frac{\partial\theta^\epsilon_{km}}{\partial
l_{ij}}\frac{\sum_{ijkm}l_{km}\frac{\partial\theta^\epsilon_{km}}{\partial
l_{ij}}l_{ij}}{|l|^2}=12 \epsilon^{-6}\frac{{\det}'\tilde{G}}{\det
C}\frac{V}{|S|^2}\sum_{kmij}l_{km}l_{ij}\frac{
\partial\theta^\epsilon_{km}}{\partial
l_{ij}}+O(\epsilon^{-3}) \quad .
\end{equation}
and so
\begin{equation}
{\det}'\frac{\partial\theta^\epsilon_{km}}{\partial
l_{ij}}=\frac{12 |l|^2}{|S|^2}\frac{V}{\det
C}{\det}'\tilde{G}+O(\epsilon) \quad.
\end{equation}
Now we can use the identities from appendix \ref{bunch41}
\begin{equation}
 {\det}'\tilde{G}=\frac{3^4}{2^2} \left(\sum_i S_i^2\right)\frac{
V^4}{\prod S_i^2},\quad \det C=8 (3!)^2 V^2 \quad .
\end{equation}
Finally, in the limit $\epsilon\rightarrow 0$, $(\theta_{ij}=\lim
\theta_{ij}^\epsilon$):
\begin{equation}\label{eq:detprimtheta}
{\det}'\frac{\partial\theta_{km}}{\partial
l_{ij}}= \frac{3^3}{2^5} \frac{|l|^2}{\prod S_i^2}V^3 \quad .
\end{equation}

\section{Technical computations}
\label{tech:sec}

In this appendix we give some explicit computations needed in the main body of the paper.

\subsection{Weak equivalences}
\label{weak:sec}

In the following we will use the notation introduced in section \ref{equiv:sec}. We can
compute
\begin{equation}
\begin{split}
 0\equiv L_+ A_+^k &=k(L_+A_+)A_+^{k-1}+A_+^{k+1}\\
&=\left(\frac{k}{2}+1\right)A_+^{k+1}+\frac{k}{2}A_-^2A_+^{k-1}
-k\cos 2\tilde{\theta}\ A_+^{k-1} \quad ,
\end{split}
\end{equation}
such that
\begin{equation}
\label{eq:w-1}
 A_-^2A_+^{k-1}\equiv -\frac{k+2}{k}A_+^{k+1}+2\cos 2\tilde{\theta}\ A_+^{k-1} \quad .
\end{equation}
Similarly, we can derive an identity by acting on $A_+^k$ with $L^*_-$:
\begin{align}
 0\equiv L_-^*A_+^k=(k+1)A_-A_+^k-kA_+^{k-1}\quad ,\\
 \implies \,A_-A_+^k\equiv \frac{k}{k+1}A_+^{k-1} \quad .\label{eq:w-2}
\end{align}
By acting again on \eqref{eq:w-2} we obtain:
\begin{equation}
 L_-^*(A_-A_+^k)=\frac{1}{2}A_+^{k+2}+\left(k+\frac{3}{2}
\right)A_-^2A_+^k-kA_-A_+^{k-1}-\cos 2\tilde{\theta}\ A_+^k\equiv 0 \quad .
\end{equation}
Hence using \eqref{eq:w-1} and \eqref{eq:w-2} we have
\begin{equation}\label{eq:w-fin}
\begin{split}
 0&\equiv \frac{1}{2}A_+^{k+2}+\left(k+\frac{3}{2}
\right)\left(-\frac{k+3}{k+1}A_+^{k+2}+2\cos 2\tilde{\theta}\ A_+^{k}
\right)-k
\frac{k+1}{k}A_+^{k}
-\cos 2\theta\ A_+^k\\
&=-\frac{(k+2)^2}{k+1}A_+^{k+2}+2(k+1)\cos 2\tilde{\theta}\ A_+^k-(k-1)A_+^{k-2} \quad .
\end{split}
\end{equation}

\subsection{Proof of the lemma}
\label{proof:sec}

In this section we will prove the following lemma:
\begin{replm}{ergod3}
For every $m\geq 0$
\begin{equation} \label{eq:lemma1'}
 \sum_{k \leq m}(2\beta^k_{m+1-k}+\beta^k_{m-k})A_k i^{m+1-k}\sin\left(\theta -
\frac{\pi}{2}(m-k)\right)=0 \quad ,
\end{equation}
where
\begin{equation}
 \beta^k_{m}=\frac{(-k- \frac{1}{2})_{m}}{ m!}\in {\mathbb R} \quad ,
\end{equation}
and
\begin{equation}
(a)_m = a \cdot (a-1) \cdot \ldots \cdot (a - m + 1),\quad (a)_0 = 1 \quad .
\end{equation}
\end{replm}
To do so, we need:
\begin{lm}\label{ergod2}
The following equality holds:
\begin{equation}
 \frac{1}{(l\pm 1)^{k+\frac{1}{2}}}=\sum_{m\geq k}\frac{(\pm
1)^{m-k}\beta^k_{m-k}}{l^{m+\frac{1}{2}}} \quad .
\end{equation}
\end{lm}

\begin{proof}
\begin{align}
\frac{1}{(l \pm 1)^{k+\frac{1}{2}}} & =  \sum_{n=0}^\infty \binom{-k
-\frac{1}{2}}{n} l^{-k-\frac{1}{2} -n} (\pm 1)^{n} = \sum_{n=0}^\infty \frac{(-k
- \frac{1}{2})_n}{n!} \frac{1}{l^{k + n + \frac{1}{2}}} (\pm 1)^n \nonumber \\
& \overset{m:=k+n}{=}  \sum_{m \geq k} (\pm 1)^{m-k}
\frac{(-k-\frac{1}{2})_{m-k}}{(m-k)!}  \,=\, \sum_{m \geq k} (\pm 1)^{m-k}
\frac{\beta^k_{m-k}}{l^{m+\frac{1}{2}}}  \quad . \qedhere
\end{align}
\end{proof}
In the following we will use  Lemma  \ref{ergod2} to prove
Lemma \ref{ergod3}:
\begin{proof}[Proof of Lemma \ref{ergod3}]
In any stationary point
we have by Lemma \ref{ergod2}
\begin{equation}
 \tilde{P}_{l\pm 1}\equiv \sum_{k \geq 0} \frac{e^{i (l \pm 1) \theta}}{(l
\pm 1)^{k +\frac{1}{2}}} A_k(\theta) \equiv \sum_{k\geq 0} e^{il\theta}
\sum_{m\geq k} \frac{(\pm
1)^{m-k}}{l^{m + \frac{1}{2}}}\beta^k_{m-k} A_k(\theta)e^{\pm i\theta} \quad .
\end{equation}
We thus have
\begin{equation}
 l(\tilde{P}_{l+1}+\tilde{P}_{l-1}-2\cos\theta \tilde{P}_l)
\equiv \sum_{k\geq 0} \frac{e^{il\theta}}{l^{k - \frac{1}{2}}} A_k \left(\sum_{m
\geq k}\frac{\beta^{k}_{m-k}}{l^{m-k}} (e^{i\theta} + (-1)^{m-k} e^{-i\theta}) -
2\cos\theta
\right) \quad .
\end{equation}
A simple algebraic manipulation gives
\begin{equation}
e^{i\theta} + (-1)^{m-k} e^{-i \theta} = \left\{
  \begin{array}{l l}
    2 \cos \theta & \quad \text{if $m = k$}\\
    2 i^{m-k} \cos\left(\theta - \frac{\pi}{2} (m-k)\right) & \quad
\text{if $m> k$}\\
  \end{array} \right. \quad ,
\end{equation}
such that we obtain:
\begin{equation} \label{eq:rec_part1}
l(\tilde{P}_{l+1} + \tilde{P}_{l-1} - 2 \cos\theta P_l)\equiv
\sum_{k\geq 0} e^{il\theta} A_k
\sum_{m\geq
k} \frac{2 \beta^{k}_{m+1-k}}{l^{m+\frac{1}{2}}}
i^{m+1-k}\underbrace{\cos\left(\theta-\frac{\pi}{2}(m-k+1)\right)}_{\sin
\left(\theta -\frac{\pi}{2}(m-k) \right)}  \quad .
\end{equation}
We also have
\begin{equation} \label{eq:rec_part2}
 \frac{1}{2}(\tilde{P}_{l+1}-\tilde{P}_{l-1})=
\sum_{k\geq 0} e^{il\theta} A_k \sum_{m\geq
k} \frac{\beta^k_{m-k}}{l^{m+\frac{1}{2}}}
i^{m+1-k}\sin\left(\theta-\frac{\pi}{2}(m-k)\right) \quad .
\end{equation}
By combining \eqref{eq:rec_part1} and \eqref{eq:rec_part2}, we obtain for the full recursion
relation \eqref{eq:rec-Legendre}:
\begin{align}
& \sum_{k\geq 0}e^{il\theta} A_k
\left(
\sum_{m\geq
k} \frac{2\beta^k_{m+1-k} +\beta^k_{m-k}}{l^{m+\frac{1}{2}}}
i^{m+1-k} \sin\left(\theta-\frac{\pi}{2}(m-k)\right)
\right)  \nonumber \\
= \; &  \sum_{m \geq 0} \frac{e^{i l \theta}}{l^{m + \frac{1}{2}}} \left(
\sum_{k \leq m} \left(2 \beta^k_{m+1-k} + \beta^k_{m-k} \right) i^{m+1-k} \, A_k
\, \sin \left( \theta - \frac{\pi}{2} (m-k) \right) \right)
= \, O(l^{-\infty}) \quad .
\end{align}
Thus every single term must be zero. That ends the proof.
\end{proof}

\subsubsection{Expanding $C_j$}
\label{sec:C_j}

In the following we will expand the normalization factor $C_j = \frac{1}{4^j}
\binom{2j}{j}$ up to $O(\frac{1}{j})$. Therefore we use Stirling's series for
the logarithm of the factorial:
\begin{equation}
\ln n! = n \ln n - n + \frac{1}{2} \ln(2 \pi n) + O\left(\frac{1}{n}\right)
\quad .
\end{equation}
Hence
\begin{equation}
\ln (C_j) =
\ln \left(\frac{1}{\sqrt{\pi j}}\right) + O\left( \frac{1}{j} \right) \quad .
\end{equation}
Therefore we obtain:
\begin{equation}
C_j = \frac{1}{\sqrt{\pi j}} e^{O\left( \frac{1}{j} \right)}= \frac{1}{\sqrt{\pi
j}}\left(1+O\left(
\frac{1}{j} \right)\right) \quad .
\end{equation}
Moreover,
since $\ln n!$ admits a complete expansion (neglecting the first terms) in powers of
$\frac{1}{n}$, also $C_j$ can be completely expanded in powers of $\frac{1}{j}$. The same is
true for an expansion in $l$.

\subsection{Theta graph}
\label{theta:sec}

In this section we explain the result that the theta graph $(C^{j_1j_2j_3}_{000})^2$ is
equal to
\begin{equation}
 \frac{1}{2 \pi {S}}\left(1+O\left(\frac{1}{l^2}\right)\right) \quad .
\end{equation}
From \cite{Varshalovich} we have
\begin{equation}
 C^{j_1j_2j_3}_{000}=(-1)^g\frac{g!}{(g-j_1)!(g-j_2)!(g-j_3)!}\sqrt{\frac{
(2g-2j_1)!(2g-2j_2)!(2g-2j_3)!}{(2g+1)!}} \quad ,
\end{equation}
where $2g=j_1+j_2+j_3$. We compute the expansion of $\ln C^{j_1j_2j_3}_{000}$ using the Stirling's formula:
\begin{equation}
 \ln (n!)=n\ln n-n+\frac{1}{2}\ln n+\frac{1}{2}\ln 2\pi+\frac{1}{12n}+O(n^{-2}) \quad ,
\end{equation}
obtaining
\begin{equation}
\begin{split}
\ln \left( (-1)^gC^{j_1j_2j_3}_{000} \right) &= -\frac{1}{4}\ln \left(\frac{(2\pi)^2}{16}
(l_1+l_2+l_3)
(-l_1+l_2+l_3)
(l_1-l_2+l_3)
(l_1+l_2-l_3)\right)\\&+\, O(l^{-2}) \quad .
\end{split}
\end{equation}
This is exactly
\begin{equation}
-\frac{1}{4}\ln
4 \pi^2{S}^2 +O(l^{-2}) \quad ,
\end{equation}
where ${S}$ is the area of the triangle with edge lengths $l_i$.
We conclude that the theta graph $(C^{j_1j_2j_3}_{000})^2$ is equal to
\begin{equation}
 \frac{1}{2 \pi {S}}\left(1+O\left(\frac{1}{l^2}\right)\right) \quad .
\end{equation}

\subsection{Kinetic term in equilateral case}
\label{iso:sec}

Let us introduce
\begin{equation}
 M_\lambda=\left(\begin{array}{c c c c c c c}
   0 & a & a & a & a & a &a\\
   a & b-\lambda & c-\lambda &c-\lambda & c-\lambda &c-\lambda &-\lambda\\
   a & c-\lambda & b-\lambda &c-\lambda & c-\lambda &-\lambda & c-\lambda\\
   a & c-\lambda & c-\lambda &b-\lambda & -\lambda &c-\lambda & c-\lambda\\
   a & c-\lambda & c-\lambda &-\lambda & b-\lambda &c-\lambda & c-\lambda\\
   a & c-\lambda & -\lambda &c-\lambda & c-\lambda &b-\lambda & c-\lambda\\
   a & -\lambda & c-\lambda &c-\lambda & c-\lambda &c-\lambda & b-\lambda
 \end{array}
\right) \quad ,
\end{equation}
where $a=-\sqrt{2}\frac{64}{81}$, $b=\frac{\sqrt{3}}{4}$ and $c=\frac{1}{2 \sqrt{3}}$. In the
equilateral case (all $l$ equal to $1$) the kinetic term is of the form
\begin{equation}
 -iM_0 \quad .
\end{equation}
Let us note that
\begin{equation}
 \det M_\lambda=\det M_0\not=0 \quad ,
\end{equation}
and all $M_\lambda$ are symmetric. Thus all of them have the same number of positive and negative
eigenvalues.
Matrix $M_{\lambda}$ for $\lambda=c$ is similar  (have the same determinant) by simultaneous
permutation of rows and columns to
the matrix
\begin{equation}
M'= \left(\begin{array}{c c c c c c c}
  b-c & -c & 0   & 0  & 0  & 0  &  a\\
   -c & b-c& 0   & 0  & 0  & 0  &  a\\
   0  &  0 & b-c & -c & 0  & 0  &  a\\
   0  &  0 & -c  & b-c& 0  & 0  &  a\\
   0  &  0 & 0   & 0  &b-c & -c &  a\\
   0  &  0 & 0   & 0  &-c  & b-c&   a\\
   a  & a  & a   & a  & a  & a  & 0
 \end{array}
\right) \quad .
\end{equation}
The matrix $M'$ restricted to its first $6$ rows and columns has $3$ positive and $3$ negative
eigenvalues. Applying the min-max principle \cite{Reed-Simon} to $M'$
and $-M'$ shows that
$M'$ has at least three positive and three negative eigenvalues. Together with the fact that
determinant is positive it shows that there are $4$ positive and $3$ negative eigenvalues.

Hence, the matrix of kinetic term has $4$ $-i{\mathbb R}_+$ eigenvalues and $3$ $i{\mathbb R}_+$
and the same is true for matrix $(-\mathcal{H}^{-1})$.

\section{Dupuis-Livine form and stationary points}
\label{DL-app}

In this section we will prove the following lemma:

\begin{lm}
Suppose that the integral is of the form as
\begin{equation}
  \int d\theta \frac{e^{i\eta}}{l^{\lambda}}e^S \quad ,
\end{equation}
where $S(\theta_i)$ has an asymptotic expansion around the isolated stationary point
of $S_{-1}(\theta)$ of the form
\begin{equation}
 S=S_{-1}+S_0+S_1+\ldots \quad ,
\end{equation}
and $i^kS_k\in {\mathbb R}$ is a homogeneous function of order $-k$ in $l$. Then the
contribution to the expansion of the integral from this stationary point has the DL
property.
\end{lm}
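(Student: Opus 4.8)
The plan is to run an ordinary multivariate stationary phase expansion around $\theta_0$ and to keep careful track of the powers of $i$ carried by each contribution. Since $i^{-1}S_{-1}\in\mathbb{R}$, the dominant phase is purely imaginary; write $S_{-1}=il\,\phi(\theta)$ with $\phi$ real and homogeneous of degree $1$ in $l$, and let $Q:=(\partial^2\phi)(\theta_0)$, which is real symmetric and, since $\theta_0$ is an isolated (hence non-degenerate, Morse) critical point, invertible. First I would rescale $\theta=\theta_0+l^{-1/2}\xi$, set $\epsilon:=l^{-1/2}$, and extract the constant part $e^{il\phi(\theta_0)+s_0(\theta_0)}$ of the exponent (writing $S_0=:s_0$, which is real, so $e^{s_0(\theta_0)}$ is a positive real amplitude). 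Taylor expanding $S=S_{-1}+S_0+S_1+\cdots$ in $\xi$ then puts the exponent in the shape $\tfrac{i}{2}\xi^{T}Q\xi+\sum_{r\ge1}\epsilon^{r}R_r(\xi)$ with each $R_r$ a polynomial in $\xi$, and reduces the integral to a Fresnel Gaussian $\mathcal N\,\langle\,\cdots\rangle$, where $n:=\dim\theta$, $\mathcal N=(2\pi)^{n/2}|\det Q|^{-1/2}e^{\frac{i\pi}{4}\sign Q}$, and the propagator is $\langle\xi_a\xi_b\rangle=i\,(Q^{-1})_{ab}$.

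The crucial and only non-routine step is a uniform constraint on the coefficients of $R_r$: for every monomial $c\,\xi^{\alpha}$ occurring in $R_r$ one has $|\alpha|\equiv r\pmod 2$ and $c\in i^{(r-|\alpha|)/2}\mathbb{R}$ with $(r-|\alpha|)/2\in\mathbb{Z}$. I would check this by listing the three sources of order-$\epsilon^{r}$ terms. From $S_{-1}=il\phi$: the degree-$m$ Taylor part of $\phi$ enters at order $\epsilon^{m-2}$ as $i\,\phi_m(\xi)$ with $\phi_m$ real, so $r=m-2$, $|\alpha|=m$, $c\in i\mathbb{R}=i^{-1}\mathbb{R}$, and $(r-|\alpha|)/2=-1$. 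From $S_0=s_0$ (real): its degree-$m$ Taylor part enters at order $\epsilon^{m}$ with real coefficients, so $r=|\alpha|=m$ and $(r-|\alpha|)/2=0$. From $S_k$ with $k\ge1$ (where $i^{k}S_k$ is real and homogeneous of degree $-k$, so $S_k=i^{-k}l^{-k}s_k$ with $s_k$ real): the degree-$m$ Taylor part of $s_k$ enters at order $\epsilon^{2k+m}$ multiplied by $i^{-k}$, so $r=2k+m$, $|\alpha|=m$, $c\in i^{-k}\mathbb{R}=i^{k}\mathbb{R}$, and $(r-|\alpha|)/2=k$. In each case $c\in i^{(r-|\alpha|)/2}\mathbb{R}$, as claimed.

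With this in hand the conclusion is a bookkeeping of $i$'s in the diagrammatic expansion of $e^{\sum_r\epsilon^{r}R_r}$. A term picks a finite multiset $r_1,\dots,r_p$, a monomial $c_j\xi^{\alpha_j}$ from each $R_{r_j}$, multiplies them, and Gaussian-averages; it contributes to the coefficient of $\epsilon^{N}$ with $N=\sum_j r_j$. Writing $D=\sum_j|\alpha_j|$, the constraint above gives $D\equiv N\pmod 2$, so non-vanishing forces $N$ even; and then $\langle\xi^{\sum_j\alpha_j}\rangle\in i^{D/2}\mathbb{R}$, since each of the $D/2$ Wick contractions carries exactly one factor $i$ and $Q^{-1}$ is real, while $\prod_j c_j\in i^{\sum_j(r_j-|\alpha_j|)/2}\mathbb{R}=i^{(N-D)/2}\mathbb{R}$. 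Multiplying, every contribution to the coefficient $b_N$ of $\epsilon^{N}$ lies in $i^{(N-D)/2+D/2}\mathbb{R}=i^{N/2}\mathbb{R}$, independently of the diagram; hence $b_N\in i^{N/2}\mathbb{R}$ and $b_N=0$ for odd $N$. Setting $N=2k$, $B_k:=b_{2k}$, $\tilde A_0:=\frac{e^{i\eta}}{l^{\lambda}}\,l^{-n/2}e^{s_0(\theta_0)}\mathcal N$, the stationary point contributes $\tilde A_0\,e^{il\phi(\theta_0)}\sum_k B_k$, where $\tilde A_0$ is a positive real amplitude times the phase $e^{i(\eta+\frac{\pi}{4}\sign Q)}$, $e^{il\phi(\theta_0)}$ equals the dominant oscillatory phase $e^{i\sum l_i\theta_i}$ evaluated at the stationary angles (it is linear in $l$ by homogeneity), $i^{k}B_k\in\mathbb{R}$, and, chasing the homogeneity degrees through the rescaling, each $B_k$ has the expected degree. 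This is exactly the Dupuis--Livine form. The remaining ingredients — that the above is a genuine asymptotic expansion (Morse lemma and the standard method of stationary phase \cite{Hormander} at the non-degenerate critical point) and the degree counting — are routine; the non-degeneracy packed into ``isolated stationary point'' is what makes all of this available, and the combinatorial lemma of the second paragraph is the heart of the matter.
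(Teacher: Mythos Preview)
Your proof is correct and follows essentially the same route as the paper's: both run the stationary phase / Feynman diagram expansion around $\theta_0$ and observe that every propagator contributes one factor of $i$ (since $iQ^{-1}$ is real) while each vertex from $S_m$ contributes a factor in $i^{-m}\mathbb{R}$, so the net power of $i$ in a degree-$k$ term is $i^{-k}$. Your version is in fact more careful than the paper's, since you explicitly include the cubic-and-higher Taylor terms of $S_{-1}$ as vertices (the case $r=m-2$, $(r-|\alpha|)/2=-1$), which the paper's sentence ``vertices, given by the derivatives of $S_{\geq 0}$'' appears to omit; the argument extends to those vertices trivially, but it is good that you made this explicit.
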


\begin{proof}
Let us consider the contribution from the isolated stationary point of
$S_{-1}$. They are of the form
\begin{equation}
 \frac{1}{\sqrt{{\mathcal H}}}e^{\tilde S} \quad ,
\end{equation}
where $\tilde{S}$ is given by the contraction of all connected Feynman diagrams. They
are made up of vertices, given by the derivatives of $S_{\geq 0}$, connected by the
propagator $H$, which is the inverse to $(-1)$ times the matrix of second
derivatives of $S_{-1}$.
\begin{equation}
 H= \left(-\partial^2 S_{-1}\right)^{-1},\quad {\mathcal H}=\det \left(-\partial^2 S_{-1}\right)\ .
\end{equation}
Their contribution is computed
by contracting the vertices $V_k$ with
propagators $H$. Since vertices are obtained from derivatives of $S_m$, $m\geq 0$, the
homogeneous degree $\deg V_k$ of this vertex is thus $m$ and the matrix elements of
$i^{\deg
V_k}V_k$ are real. Similarly $iH$ is a real matrix and is of degree $1$.

To conclude, the total contraction is thus of degree
\begin{equation}
 \sum_k \deg V_k+n \quad ,
\end{equation}
where $n$ is the number of propagators in the diagram. Moreover, the complete
contraction multiplied by
\begin{equation}
i^{\sum_k \deg V_k+n}
\end{equation}
is again real as a contraction of real matrices. This proves that expansion is still of DL
form.
\end{proof}

\section{Stationary point analysis}
\label{sec:stationary-point}

In the paper we use an advanced version of the stationary point analysis. This appendix
is intended to explain the details of this method.

\begin{lm}
 Let $S(x)=i(S_{-1}+S_0)+S_0'+\sum_{i>0} S_i$ be an asymptotic expansion of the action such that
\begin{itemize}
 \item $S_i$ is of homogeneous degree $-i$ in $j$,
\item $S_0$ and $S_{-1}$ are real
\item $S_{-1}+S_0$ is homogeneous in $l=j+\frac{1}{2}$
\end{itemize}
and let $x_0$ be an isolated stationary point of $S_{-1}$. Then there is an asymptotic expansion of
the contribution to the integral
\begin{equation}
 \int dx e^{S}
\end{equation}
from the neighbourhood of $x_0$ given as follows:

We can write the asymptotic expansion of $S$ in homogeneous terms in $l$ as
\begin{equation}
 S=i\tilde{S}_{-1}+S_0'+\sum_{i>0} \tilde{S}_i \quad ,
\end{equation}
where $\tilde{S}_{-1}=S_{-1}+S_0$. Let $x_1$ be the
stationary point of $\tilde{S}_{-1}$ obtained by perturbation of $x_0$ (there is exactly one
such stationary point if the matrix of second derivatives of $S_{-1}$ is non-degenerate). The asymptotic expansion of the integral is equal to
\begin{equation}
 \frac{1}{\sqrt{\det (-H)}}e^{\sum_{i\geq -1} A_i}
\end{equation}
where $H$ is the matrix of second derivatives of $\tilde{S}_{-1}$ and $A_{-1}=\tilde{S}_{-1}$
evaluated on $x_1$. The terms $A_i$ for $i\geq 0$ are homogeneous functions of order
$-i$ in $l$ and can be obtained from the Feynman
diagram expansion with the propagator $(-H)^{-1}$ and interaction vertices given by derivatives of
$\tilde{S}_{i}$ for $i\geq 0$.
\end{lm}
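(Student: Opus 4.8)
The plan is to reduce the statement to the classical stationary phase theorem of \cite{Hormander} combined with the linked-cluster theorem (exponentiation of connected Feynman diagrams); the only genuine content is the bookkeeping of homogeneity degrees in $l$ rather than in $j$. First I would make the large parameter explicit by writing $l=\Lambda l_0$ with $l_0$ fixed and $\Lambda\to\infty$, and reorganise the asymptotic series for $S$ into pieces of definite homogeneity degree in $l$: substituting $j=l-\coh\,\mathbf 1$ into each $S_i$ and Taylor expanding in $\Lambda^{-1}$ redistributes every contribution among homogeneous-in-$l$ terms, and by hypothesis the imaginary degree-$1$ and degree-$0$ parts combine into $i\tilde S_{-1}=i(S_{-1}+S_0)$ with $\tilde S_{-1}$ homogeneous of degree $1$ in $l$, so that after rescaling $i\tilde S_{-1}=i\Lambda\,\hat S_{-1}$, while $S_0'$ and the $\tilde S_i$ ($i>0$) carry the prefactors $\Lambda^{0}$ and $\Lambda^{-i}$. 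Since the degree-$1$ part $S_{-1}$ is untouched by this reorganisation, the hypothesis that $x_0$ is an \emph{isolated} critical point of $S_{-1}$ with non-degenerate Hessian is preserved.

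Next I would localise the integral to a fixed neighbourhood of $x_0$; on its complement the leading phase $\Lambda\hat S_{-1}$ is non-stationary, so that contribution is $O(\Lambda^{-\infty})$ (in the applications of the paper it lies in the region $\Re S<0$ and is exponentially suppressed anyway). Because $\partial_x S_{-1}$ vanishes transversally at $x_0$ while $\partial_x S_0$ is of one lower order in $\Lambda$, the implicit function theorem furnishes a unique critical point
\begin{equation}
 x_1=x_1(\Lambda)=x_0+O(\Lambda^{-1})
\end{equation}
of the map $x\mapsto\tilde S_{-1}(x;\Lambda l_0)$, whose Hessian $H=\partial_x^2\tilde S_{-1}(x_1)$ is non-degenerate since it is dominated by $\Lambda\,\partial_x^2 S_{-1}(x_0)$. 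The branch of $\sqrt{\det(-H)}$ is fixed by the signature of $H$, i.e. by the number of its $+i{\mathbb R}$ versus $-i{\mathbb R}$ eigenvalues, which is constant on connected components of the parameter space and is computed separately (for the $6j$ symbol in appendix \ref{iso:sec}).

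Then I would set $x=x_1+y$, Taylor expand around $x_1$ so that the term linear in $y$ coming from $\tilde S_{-1}$ drops out,
\begin{equation}
 \tilde S_{-1}(x_1+y)=\tilde S_{-1}(x_1)+\tfrac12\,y^{T}Hy+R(y),\qquad R(y)=O(|y|^3),
\end{equation}
absorb the quadratic form into a Gaussian and rescale $y\mapsto y/\sqrt{\Lambda}$. This produces the prefactor $\det(-H)^{-1/2}$ (the overall numerical constant and the signature phase being absorbed into the normalisation convention), and turns the residual exponent $R(y)+S_0'(x_1+y)+\sum_{i>0}\tilde S_i(x_1+y)$ into a formal power series in $\Lambda^{-1/2}$ whose monomials are polynomial in $y$ with coefficients the Taylor derivatives of order $\geq 3$ of $\tilde S_{-1}$ and of all orders of $S_0'$ and of the $\tilde S_i$, evaluated at $x_1$. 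Expanding $\exp$ of this series and integrating term by term against the Gaussian measure by Wick's theorem gives a sum over all vacuum Feynman diagrams with propagator $(-H)^{-1}$ and these vertices; by the linked-cluster theorem it equals $\exp\bigl(\sum_{i\geq 0}A_i\bigr)$, where $A_i$ collects the connected diagrams of total homogeneity degree $-i$ in $l$. Together with $A_{-1}=\tilde S_{-1}(x_1)$ this is precisely the claimed expansion, the remainder at each finite order being $O(\Lambda^{-\infty})$ by the usual tail estimate (Morse lemma with parameters).

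The main obstacle — and the only step that really requires care — is the degree bookkeeping. One must check that each propagator carries homogeneity degree $0$ after the rescaling, while an $m$-valent vertex from $\tilde S_r$ carries $\Lambda^{r-m/2}$ (with $r=-1$ for $\tilde S_{-1}$, $r=0$ for $S_0'$, $r=-i$ for $\tilde S_i$), and — crucially — that in every non-vanishing Wick contraction the half-integer powers of $\Lambda^{1/2}$ produced by odd-valent vertices always combine into an integer power, so that each $A_i$ with $i\geq 0$ is genuinely homogeneous of degree $-i$ in $l$. This is the familiar parity argument: in a vacuum diagram the total number of vertex legs equals twice the number of propagators and is therefore even, so the number of odd-valent vertices is even; consequently the power of $\Lambda$ carried by a connected vacuum diagram, $V_{-1}-P-\sum_v i_v$ (with $V_{-1}$ the number of $\tilde S_{-1}$-vertices, $P$ the number of propagators, and the sum running over the $\tilde S_i$-vertices), is always an integer, and is $\le 0$ once the diagram has at least one loop. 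It is exactly here — and in the choice to expand around the shifted critical point $x_1$ rather than around $x_0$, which removes the would-be $1$-valent $\tilde S_{-1}$-sources — that the homogeneity in $l$, and the hypothesis that $\tilde S_{-1}=S_{-1}+S_0$ is homogeneous in $l$, enter essentially.
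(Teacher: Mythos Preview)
The paper does not actually prove this lemma: Appendix~\ref{sec:stationary-point} merely states it (together with the companion lemma on partial integration) as a standard tool of higher-order stationary phase, relying implicitly on \cite{Hormander} for the underlying analysis. Your sketch is the correct and standard argument --- localise, shift to the perturbed critical point $x_1$, Gaussian-integrate the quadratic form, expand the remainder by Wick's theorem, and exponentiate connected diagrams via the linked-cluster theorem --- and it supplies precisely what the paper leaves implicit.

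One small notational slip in your degree bookkeeping: when you write that an $m$-valent vertex from $\tilde S_r$ carries $\Lambda^{r-m/2}$ ``with $r=-1$ for $\tilde S_{-1}$'', you have conflated the subscript with the homogeneity degree. Since $\tilde S_{-1}$ is homogeneous of degree $+1$ in $l$, an $m$-valent vertex coming from it carries $\Lambda^{1-m/2}$ after the rescaling $y\mapsto y/\sqrt\Lambda$, not $\Lambda^{-1-m/2}$. Your final power-counting formula $V_{-1}-P-\sum_v i_v$ for the total degree of a connected vacuum diagram is nevertheless correct (each propagator contributes $-1$, each $\tilde S_{-1}$-vertex contributes $+1$, each $\tilde S_i$-vertex contributes $-i$), as is the parity observation that $\sum_v n_v=2P$ forces this exponent to be an integer; only the intermediate labelling is off.
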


The same fact applies when the isolated point is replaced by the isolated orbit of the symmetry group of the
action.

The second fact concerns with integration over only a part of the variables:

\begin{lm}
 Let $S(x,y)=iS_{-1}(x,y)+\sum_{i\geq 0} S_i$ has an isolated stationary point $(x_0,y_0)$ with a
non-degenerate matrix of second derivatives $H$ with the property
\begin{equation}
    H=\begin{pmatrix}
       H_{xx} & H_{xy}\\ H_{xy} & H_{yy}
      \end{pmatrix},\quad H_{yy}\ \text{invertible}\ .
\end{equation}
Then there exists a function $y(x)$ such
that (in the neighbourhood of stationary point)
\begin{equation}
 \nabla_x S_{-1}+\frac{\partial y}{\partial x}\nabla_y S_{-1}=0
\end{equation}
and the asymptotic expansion of $\int dxdy\ e^S$ is equal to asymptotic expansion of
\begin{equation}
\int dx\ e^{\bar{S}}\ ,
\end{equation}
where $\bar{S}$ is obtained by asymptotic expansion of the integral
$e^{\bar{S}}=\int dy\ e^S$.
\end{lm}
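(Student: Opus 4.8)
The statement is an iterated, Fubini-type version of the stationary point analysis of the preceding Lemma of this appendix (on the asymptotic expansion from an isolated stationary point): it asserts that computing the oscillatory integral over $(x,y)$ at once produces the same asymptotic series as first integrating out $y$ with $x$ held as a spectator parameter and then integrating the resulting effective amplitude $e^{\bar S(x)}$ over $x$. The plan is to invoke that preceding Lemma three times --- on the inner $y$-integral with parameter $x$, on the outer $x$-integral, and on the full $(x,y)$-integral --- and then to identify the two resulting series by Fubini together with the uniqueness of asymptotic expansions.

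First I would make the function $y(x)$ explicit. Since the $y$-Hessian $\partial_y^2 S_{-1}(x_0,y_0)$ is, up to the overall $i$, the block $H_{yy}$, which is invertible, the implicit function theorem gives a smooth $y(x)$ near $x_0$ with $\nabla_y S_{-1}(x,y(x))=0$ and $y(x_0)=y_0$; differentiating this identity in $x$ yields precisely $\nabla_x S_{-1}+\tfrac{\partial y}{\partial x}\nabla_y S_{-1}=\nabla_x\big(S_{-1}(x,y(x))\big)$, the combination appearing in the statement, and in particular $\nabla_x\big(S_{-1}(x,y(x))\big)$ vanishes at $x_0$ because $(x_0,y_0)$ is a stationary point of $S_{-1}$. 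Furthermore the Hessian at $x_0$ of the effective leading action $\bar S_{-1}(x):=S_{-1}(x,y(x))$ is, up to $i$, the Schur complement $H_{xx}-H_{xy}H_{yy}^{-1}H_{yx}$, which is non-degenerate since $H$ and $H_{yy}$ are. As always only neighbourhoods of stationary points contribute, so I would insert a smooth product cutoff $\chi_1(x)\chi_2(y)$ supported in a small product neighbourhood of $(x_0,y_0)$ --- on which $y(\,\cdot\,)$ is the unique stationary point of $y\mapsto S_{-1}(x,y)$ --- changing the integral only by $O(l^{-\infty})$ (non-stationary phase; cf.\ \cite{Hormander}).

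The core step is the inner integral. For fixed $x$ near $x_0$, applying the preceding Lemma to $\int dy\,\chi_2(y)\,e^{S(x,y)}$, with $x$ carried along as a parameter, yields an asymptotic expansion which, written multiplicatively, has the form $e^{\bar S(x)}=\tfrac{1}{\sqrt{\det(-\partial_y^2 S_{-1}(x,y(x)))}}\,e^{\sum_{i\ge-1}A_i(x)}\,\big(1+O(l^{-\infty})\big)$, with leading phase $\bar S_{-1}(x)=S_{-1}(x,y(x))$ and the $A_i(x)$ produced by the Feynman expansion with propagator $(-\partial_y^2 S_{-1}(x,y(x)))^{-1}$. The delicate point --- and the one I expect to be the main obstacle --- is that this expansion must be \emph{uniform} in $x$, with coefficients depending \emph{smoothly} on $x$, so that ``$\int dx\,e^{\bar S}$'' even makes sense: one truncates $\bar S$ at an arbitrary order $N$, obtaining an honest smooth function $\bar S^{(N)}(x)$, and one must check that the expansions of $\int dx\,\chi_1(x)\,e^{\bar S^{(N)}(x)}$ stabilise term by term as $N\to\infty$. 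This parameter-dependent form of stationary phase is standard but is the technical heart of the argument, and it is here that the smooth dependence on $x$ of the stationary point, of the Hessian and of the Feynman amplitudes enters.

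Finally I would close the loop. By Fubini (legitimate after the cutoff, the integrand being continuous with compact support), $\int dx\,dy\,\chi_1\chi_2\,e^{S}=\int dx\,\chi_1(x)\,\big(\int dy\,\chi_2(y)\,e^{S(x,y)}\big)$, and by the previous step the inner integral differs from $\chi_1(x)\,e^{\bar S^{(N)}(x)}$ by a term that is $O(l^{-N'})$ uniformly in $x$, with $N'\to\infty$ as $N\to\infty$; integrating over the compact $x$-support preserves this. Applying the preceding Lemma to the outer integral $\int dx\,\chi_1(x)\,e^{\bar S^{(N)}(x)}$ --- legitimate by the previous paragraph, $x_0$ being an isolated non-degenerate stationary point of $\bar S_{-1}$ --- produces a well-defined asymptotic series, which is by definition the asymptotic series of $\int dx\,e^{\bar S}$. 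On the other hand, applying the preceding Lemma directly to $\int dx\,dy\,\chi_1\chi_2\,e^{S}$ produces an asymptotic series for the same quantity. Since an asymptotic expansion is unique, the two series agree, which is the claim. As an optional cross-check, the matching can be seen diagrammatically: $\det(-H)=\det(-H_{yy})\,\det\!\big(-(H_{xx}-H_{xy}H_{yy}^{-1}H_{yx})\big)$, and the Feynman series with propagator $(-H)^{-1}$ reorganises, upon contracting all internal $y$-lines first, into effective vertices of the $x$-theory with propagator the inverse Schur complement, reproducing precisely the two-step expansion. The argument is unchanged when the isolated stationary point is replaced by an isolated symmetry orbit, since the preceding Lemma already covers that case.
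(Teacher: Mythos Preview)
The paper does not actually prove this lemma: in Appendix~\ref{sec:stationary-point} the two lemmas on stationary point analysis are simply stated as known facts, without argument. So there is no ``paper's own proof'' to compare against; you are supplying one where none is given.

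Your sketch is essentially correct and is the standard way to justify such a statement. The construction of $y(x)$ via the implicit function theorem from $\nabla_y S_{-1}(x,y(x))=0$ and the invertibility of $H_{yy}$ is exactly right, as is the identification of the reduced Hessian with the Schur complement $H_{xx}-H_{xy}H_{yy}^{-1}H_{yx}$, non-degenerate because $H$ and $H_{yy}$ are. The overall architecture --- localise by a product cutoff, apply parameter-dependent stationary phase in $y$, then stationary phase in $x$, and compare with the direct $(x,y)$ expansion via Fubini and uniqueness of asymptotic series --- is sound; you are right that the only genuine technical content is the \emph{uniform} stationary phase estimate in the spectator variable $x$, with smooth dependence of the coefficients, which is standard (see e.g.\ \cite{Hormander}). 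The diagrammatic cross-check through the Schur factorisation of $\det(-H)$ and the partial contraction of $y$-lines is a nice confirmation and in fact mirrors how the paper actually uses the lemma in Sections~\ref{sec:evaluation-final}--\ref{sec:first-order}.

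One small remark: the displayed identity in the lemma, $\nabla_x S_{-1}+\tfrac{\partial y}{\partial x}\nabla_y S_{-1}=0$, is somewhat awkwardly stated in the paper. Read literally, with $y(x)$ defined by $\nabla_y S_{-1}(x,y(x))=0$, the second term vanishes identically and the equation reduces to $\nabla_x S_{-1}(x,y(x))=0$, which holds only at $x_0$, not in a neighbourhood. You handle this gracefully by noting that the left-hand side is $\nabla_x\big(S_{-1}(x,y(x))\big)$, which indeed vanishes at $x_0$; that is the correct reading of the authors' intent, and your argument does not depend on the stronger (and false) neighbourhood statement.
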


\section{Feynman diagrams}
\label{sec:Feyn}

In this subsection we are interested in the next to leading order in the expansion
of the $6j$ symbol.
We will derive
expressions for $S_1$ in terms of Feynman diagrams. Vertices in
this expansion consist of
derivatives of
\begin{equation}
 -\sum_i\frac{1}{2}\ln\sin\theta_{ij}, \quad \frac{-i}{8l_{ij}}\cot\theta_{ij},
\end{equation}
and higher than second derivatives of $|l|\rho\det\tilde{G}$ with respect to $\rho$
and $\theta_{ij}$. Each propagator contributes a weight $|l|^{-1}$.

We only evaluate closed diagrams,
so if the diagram is made up of vertices of valency $n_k$, i.e. the $n_k$-th derivative of a
function
with weight
$|l|^{\alpha_k}$, then the scaling behaviour of the whole diagram is as
\begin{equation}
 |l|^{\sum_k \left(\alpha_k-\frac{n_k}{2}\right)} \quad .
\end{equation}
The only vertices that can contribute up to order $|l|^{-1}$ are thus
\begin{equation*}
 \begin{array}{|c|c|c|c|c|c|c|c|c|c|}
\hline
 \text{Vertex} &
-\frac{1}{2}\ln\sin\theta_{ij} &
-\frac{1}{2}\cot\theta_{ij}&
 -\frac{1}{2}\frac{\partial}{\partial\theta_{ij}}\cot\theta_{ij}&
-\frac{i}{8l_{ij}}\cot\theta_{ij} &
i|l|\partial^3\rho\det\tilde{G} &
i|l|\partial^4\rho\det\tilde{G}\\
\hline
\text{Valency} & 0 &1_{(ij)} &2_{(ij)(ij)} &0&3&4\\
\hline
\text{Order} & |l|^{0} & |l|^{-1/2}& |l|^{-1}& |l|^{-1}& |l|^{-1/2}& |l|^{-1}\\
\hline
 \end{array}
\end{equation*}
Note that the only diagram that is real (up to the order $|l|^{-1}$) is
just the first
vertex (being of order 0). Furthermore, this is also the only contribution of order $|l|^0$.
All
other diagrams are purely imaginary and of order $|l|^{-1}$.

\section{Relation to spin-network kernel formula}
\label{rel-islas}

We will prove that (for $j\in{\mathbb Z}$)
\begin{equation} \label{eq:rel_islas}
 \int_0^\pi \frac{d\phi_1}{\pi}
\left(e^{i\theta}\cos\phi_1\cos\phi_2
+e^{-i\theta}\sin\phi_1\sin\phi_2\right)^{2j}=
\frac{1}{4^j}{\binom{2j}{j}}\left(e^{i2\theta}\cos^2\phi_2
+e^{-i2\theta}\sin^2\phi_2\right)^{j} \quad .
\end{equation}
It is straightforward to check that
\begin{equation}
 e^{i2\theta}\cos^2\phi_2
+e^{-i2\theta}\sin^2\phi_2=\cos{2\theta}
+i\sin 2\theta\cos 2\phi_2 \quad.
\end{equation}
In this way we obtain the formula from \cite{islas}.

To prove \eqref{eq:rel_islas}, we use the following formulas:
\begin{align}
  &2\int_{0}^{\pi/2}d\phi\ \sin^{2\alpha}\phi\cos^{2\beta}\phi
=\frac{\Gamma\left(\alpha+\frac12\right)\Gamma\left(\beta+\frac12\right)}
{\Gamma(\alpha+\beta+1)} \quad ,\\
&\Gamma(n+1)=n!,\ \ \Gamma\left(n+\frac{1}{2}\right)=\frac{(2n)!}{4^n
n!}\sqrt{\pi} \quad .
\end{align}
In the case that $k,l\, \in \mathbb{N}$, these formulas can be simplified to:
\begin{equation}
 \int_{0}^{\pi}d\phi\
\sin^{2k}\phi\cos^{2l}\phi=\frac{(2k)!(2l)!\pi}{4^{k+l}k!l!(k+l)!} \quad ,
\end{equation}
\begin{equation}
 \int_{0}^{\pi}d\phi\
\sin^{2k+1}\phi\cos^{2l+1}\phi=0 \quad .
\end{equation}
Expanding the left hand side of \eqref{eq:rel_islas} and
using these formulas we have:
\begin{align}
 &\int_0^\pi \frac{d\phi_1}{\pi}
\left(e^{i\theta}\cos\phi_1\cos\phi_2
+e^{-i\theta}\sin\phi_1\sin\phi_2\right)^{2j}=\\
&\sum_{n=0}^{2j} \binom{2j}{n}
e^{i(2j-2n)\theta}\cos^{2j-n}\phi_2\sin^{n}\phi_2\int_0^\pi \frac{d\phi_1}{\pi}
\cos^{2j-n}\phi\sin^{n}\phi \quad .
\end{align}
This is equal to ($k:=2n$)
\begin{equation}
 \sum_{k=0}^{j} \binom{2j}{2k}
e^{i(j-2k)2\theta}\cos^{2j-2k}\phi_2\sin^{2k}\phi_2\frac{1}{\pi}
\frac{(2k)!(2j-2k)!\pi}{4^{j}k!(j-k)!j!} \quad .
\end{equation}
The factors in $j$ and $k$ can be rewritten in terms of binomial coefficients
\begin{equation}
 \binom{2j}{2k}\frac{(2k)!(2j-2k)!}{4^{j}k!(j-k)!j!}=
\frac{(2j)!}{4^{j}k!(j-k)!j!}=\frac{1}{4^j}\binom{2j}{j}\ \binom{j}{k} \quad ,
\end{equation}
such that we obtain the final result:
\begin{equation}
\begin{split}
 \frac{1}{4^j}\binom{2j}{j}&\sum_{k=0}^j \binom{j}{k}
\left(e^{i2\theta}\cos^2\phi_2\right)^{j-k}
\left(e^{-i2\theta}\sin^2\phi_2\right)^{k}=\\&=\underbrace{\frac{1}{4^j}
\binom{2j}{j}}_{C_j}
\left(e^{i2\theta}\cos^2\phi_2
+e^{-i2\theta}\sin^2\phi_2\right)^{j} \quad .
\end{split}
\end{equation}
This explains the occurrence of $C_j$ in our formulas, which is absent in integral kernel approach \cite{islas}.

\bibliographystyle{utphys}
\bibliography{pr-ref}

\end{document}